\newtheorem{remark}{Remark}
\newtheorem{dfr}{Definition}
\newtheorem{theoremm}{Theorem}
\newtheorem{lemmaa}{Lemma}
\newtheorem{proposition}{Proposition}
\newcommand{\argmax}{\operatornamewithlimits{argmax}}
\begin{document}

\title{\LARGE Influence Diffusion Dynamics and Influence Maximization
    in Social Networks with Friend and Foe Relationships}

\author{Yanhua Li$^\dag$, Wei Chen$^\S$, Yajun Wang$^\S$ and Zhi-Li Zhang$^\dag$ \\
{$^\dag$Dept. of Computer Science \& Engineering, Univ. of Minnesota, Twin Cities, Minneapolis, MN, US}\\
{$^\S$ Microsoft Research Asia, Beijing, China}\\
{\{yanhua,zhzhang\}@cs.umn.edu,\{weic,yajunw\}@microsoft.com}}
\date{}
\maketitle \sloppy

\begin{abstract}
{\color{black} Influence diffusion and influence maximization in
large-scale online social networks (OSNs) have been extensively
studied because of their impacts on enabling effective online viral
marketing. Existing studies focus on social networks with only
friendship relations, whereas the foe or enemy relations that
commonly exist in many OSNs, e.g., Epinions and Slashdot, are
completely ignored. In this paper, we make the first attempt to
investigate the influence diffusion and influence maximization in
OSNs with both friend and foe relations, which are modeled using
positive and negative edges on signed networks. In particular, we
extend the classic voter model to signed networks and analyze the
dynamics of influence diffusion of two opposite opinions.
{\color{black} We first provide systematic characterization of both
short-term and long-term dynamics of influence diffusion in this
model, 
and illustrate that the steady state behaviors of the dynamics
depend on three types of graph structures, which we refer to as
balanced graphs, anti-balanced graphs, and strictly unbalanced
graphs.} We then apply our results to solve the influence
maximization problem and develop efficient algorithms to select
initial seeds of one opinion that maximize either its short-term
influence coverage or long-term steady state influence coverage.
Extensive simulation results on both synthetic and real-world
networks, such as Epinions and Slashdot, confirm our theoretical
analysis on influence diffusion dynamics, and demonstrate the
efficacy of our influence maximization algorithm
over other heuristic algorithms. 
}

\end{abstract}

\textbf{Keywords:} Signed networks, voter model, influence
maximization, social networks

%

%

\section{Introduction}

As the popularity of online social networks (OSNs) such as Facebook
and
    Twitter continuously increases, OSNs have become an important
    platform for the dissemination of news, ideas, opinions, etc.
The openness of the OSN platforms and the richness of contents and
    user interaction information enable intelligent online recommendation
    systems and viral marketing techniques.
For example, if a company wants
    to promote a new product, it may identify a set of influential users in the online social network
    and provide them with free sample products.
They hope that these influential users could influence their
friends,
    and friends of friends in the network and so on, generating a large
    influence cascade so that many users adopt their product
    as a result of such word-of-mouth effect.
The question is how to select the initial users given a limited
budget on free samples, so as to influence the largest number of
people to purchase the product through this ``word-of-mouth''
process. Similar situations could apply to the promotion of ideas
and opinions,
    such as political
    candidates trying to find early supporters for their political proposals
    and agendas, government authorities or companies trying to win
    public support by finding and convincing an initial set of
    early adopters to their ideas.

The above problem is referred to as the
    {\em influence maximization} problem in the literature, which
    has been extensively studied in recent
    years~\cite{KDD03JonKl,KS06,JL07,even2007note,NN08,CWY09,chen2010scalable,chen2010scalableIDCM,icdm11laks,pvldb11Laks,PBS10}.
In these studies, several influence diffusion models are proposed
    to
%
formulate
the
    underlying influence propagation processes, including linear threshold
    (LT) model, independent cascade (IC) model, voter model, etc.
A number of approximation algorithms and scalable heuristics are
    designed under these models to solve the influence maximization problem.

{\color{black} However, all existing studies only look at networks
with positive (i.e., friend, altruism, or trust) relationships,
where in reality, relationships also include negative ones, such as
foe, spite or distrust relationships. In Ebay, users develop trust
and distrust in agents in the network;
In online review and news forums, such as Epinions and Slashdot,
readers approve or denounce reviews
and articles of each other. 
{\color{black}Some recent
studies~\cite{WWW10Leskovec,leskovec2010signed,chiang2011CIKM}
already look into the network structures with both positive and
negative relationships. As a common sense exploited in many existing
social influence
studies~\cite{KDD03JonKl,even2007note,CWY09,chen2010scalable,chen2010scalableIDCM},
positive relationships carry the influence in a positive manner,
i.e., you would \emph{more likely} trust and adopt your friends'
opinions. In contrast, we consider that negative relationships often
carry influence in a reverse direction
--- if your foe chooses one opinion or votes for one candidate, you
would \emph{more likely} be influenced to do the
opposite. 
This echoes the principles that ``the friend of my enemy is my
enemy'' and ``the enemy of my enemy is my friend''. Structural
balance theory has been developed based on these assumptions in
social science (see Chapter 5 of~\cite{EK10} and the references
therein).
We acknowledge that in real social networks, people's reactions to
the influence from their friends or foes could be complicated, i.e.,
one could take the opposite opinion of what her foe suggests for one
situation or topic, but may adopt the suggestion from the same
person for a different topic, because she trusts her foe's expertise
in that particular topic. 
In this study, 
we consider the influence diffusion for a single topic, where one
always takes the opposite opinion of what her foe suggests. This is
our first attempt to model influence diffusion in signed networks,
and such topic-dependent simplification is commonly employed in
prior influence diffusion studies on unsigned
networks~\cite{KDD03JonKl,even2007note,CWY09,chen2010scalable,chen2010scalableIDCM,icdm11laks}.
Our work aims at providing a mathematical analysis on the influence
diffusion dynamic incorporated with negative relationship and
applying our analysis to the algorithmic problem of influence
maximization.}
%

}


\subsection{Our contributions}

In this paper, we extend the classic voter
    model~\cite{clifford1973model,holley1975ergodic} to incorporate
    negative relationships for modeling the diffusion of opinions in a social
    network.
Given an unsigned directed graph (digraph), the basic voter model
works as follows. At each step, every node in the graph randomly
picks one of its \emph{outgoing}
    neighbors and adopts the opinion of this neighbor.
%
Thus, the voter model is suitable to interpret and model opinion
diffusions where
    people's opinions may switch back and forth based on their
    interactions
    with other people in the network.
To incorporate negative relationships, we consider signed digraphs
in which
    every directed edge is either positive or negative, and
    we consider the diffusion of two opposite opinions, e.g.,
    black and white colors.
We extend the voter model to signed digraphs, such that at each
step, every node randomly picks one of its outgoing neighbors, and
if the edge
    to this neighbor is positive, the node adopts the neighbor's opinion, but
    if the edge is negative, the node adopts the opposite of the
    neighbor's opinion (Section~\ref{sec:motivation}).

We provide detailed mathematical analysis on the voter model
dynamics for signed networks (Section~\ref{sec:VotermodelDynamics}).
For short-term dynamics, we derive the exact formula for opinion
distribution
    at each step.
For long-term dynamics, we provide closed-form formulas for the
steady
    state distribution of opinions.
We show that the steady state distribution depends on the graph
structure:
    we divide signed digraphs into three classes of graph structures
    --- balanced graphs, anti-balanced graphs, and
    strictly unbalanced graphs,
    each of which leads to a different type of steady state distributions
    of opinions.
While balanced and unbalanced graphs have been extensively studied
by
    structural balance theory in social science~\cite{EK10}, the anti-balanced
    graphs form a new class that has not been covered before, to the
    best of our knowledge.
Moreover, our long-term dynamics not only cover strongly connected
and aperiodic digraphs that most of such studies focus on,
    but also weakly connected and disconnected digraphs, making our
    study
    more comprehensive.

We then study the influence maximization problem under the voter
model
    for signed digraphs (Section~\ref{sec:max}).
The problem here is to select at most $k$ initial white nodes
    while all others are black, so that either in short term or long term
    the expected number of white nodes is maximized.
This corresponds to the scenario where one opinion is dominating the
public
    and an alternative opinion (e.g. a competing political agenda, or a new innovation) tries to win over supporters as much as
possible
    by selecting some initial seeds to influence
    on.
We provide efficient algorithms that find optimal solutions for both
short-term and long-term cases. In particular, for long-term
influence maximization, our algorithm
    provides a comprehensive solution covering weakly connected and
    disconnected signed digraphs, with nontrivial computations on influence coverage
    of seed nodes.

Finally, we conduct extensive simulations on both real-world and
    synthetic networks to verify our analysis and
    to show the effectiveness of our influence maximization algorithm
    (Section~\ref{sec:evaluation}).
The simulation results demonstrate that our influence maximization
    algorithms perform much better than other heuristic algorithms.

To the best of our knowledge, we are the first to study influence
    diffusion and influence maximization in signed networks, and the first
    to apply the voter model to this case and provide 
    efficient algorithms for influence maximization under voter model for signed networks.

\subsection{Related work}
In this subsection, we discuss the topics that are closely related
to our problem, such as: (1) influence maximization and voter model,
(2) signed networks, and (3) competitive influence diffusion.

\noindent{\textbf{Influence maximization and voter model.}}
Influence maximization has been extensively studied in the literature. 
The initial work~\cite{KDD03JonKl} proposes several influence
diffusion models and provides the greedy approximation algorithm for
influence maximization. More recent
works~\cite{KS06,JL07,even2007note,NN08,CWY09,chen2010scalable,chen2010scalableIDCM,icdm11laks}
study efficient optimizations and scalable heuristics for the
influence maximization problem. In particular, the voter model is
proposed in~\cite{clifford1973model,holley1975ergodic}, and is
suitable for modeling opinion diffusions in which people may switch
opinions back and forth from time to time due to the interactions
with other people in the network. Even-Dar and
Shapira~\cite{even2007note} study the influence maximization problem
in the voter model on simple unsigned and undirected graphs, and
they show that the best seeds for long-term influence maximization
are simply the highest degree nodes. As a contrast, we show in this
paper that seed selection for signed digraphs
    are more sophisticated, especially for weakly connected or
    disconnected signed
    digraphs.
More voter model related research is conducted in physics domain,
where the voter model, the zero-temperature Glauber dynamics for the
Ising model, invasion process, and other related models of
population dynamics belong to the class of models with two absorbing
states and epidemic spreading dynamics~\cite{sood2008voter,
ageles2009conservation, masuda2009evolutionary}. However, none of
these works study the influence diffusion and influence maximization
of voter model under signed networks.


\noindent{\textbf{Signed networks.}} The signed networks with both
positive and negative links have gained attentions
    recently~\cite{kunegis:signed-kernels,leskovec2010signed,WWW10Leskovec,borgs2010novel}.
In~\cite{leskovec2010signed,WWW10Leskovec}, the authors
    empirically study the structure of real-world social networks with
    negative relationships based on two social science theories,
    i.e., balance theory and status theory.
Kunegis et al.~\cite{kunegis:signed-kernels} study the spectral
properties of the signed undirected graphs, with applications in
link predictions, spectral clustering, etc. Borgs et
al.~\cite{borgs2010novel} proposes a generalized PageRank algorithm
for signed networks with application to online recommendations,
where the distrust relations are considered as adversarial or
arbitrary user behaviors, thus the outgoing relations of distrusted
users are ignored while ranking nodes. Our algorithm can also be
considered as an influence ranking algorithm that generalizes the
PageRank algorithm, but we treat distrust links as generating
negative influence rather than ignoring distrusted users' opinions,
and thus our ranking method is different from~\cite{borgs2010novel}.
None of the above work studies influence diffusion and influence
maximization
    in signed networks.

\noindent{\textbf{Competitive influence diffusion.}} A number of
recent studies focus on competitive influence
    diffusion and maximization~\cite{BKS07,BFO10,WWW11,Chen2011,Ma:2008:MSN:1458082.1458115,he2011influence}, in which
    two or more competitive opinions or innovations are diffusing in
    the network.
Although they consider two or more competitive or opposing influence
    diffusions, they are all on unsigned networks, different from
    our study here on diffusion with both positive and negative relationships.




\section{Voter model for signed networks}
    \label{sec:motivation}

We consider a weighted directed graph (digraph) $G=(V,E,A)$, where $V$ is
    the set of vertices, $E$ is the set of directed edges, and $A$ is the weighted
    adjacency matrix with $A_{ij} \ne 0$ if and only if $(i,j) \in E$,
    with $A_{ij}$ as the weight of edge $(i,j)$.
The voter model was first introduced for unsigned graphs, with
nonnegative adjacency matrices $A$'s. In this model, each node holds
one of two opposite opinions, represented by
    black and white colors. Initially each node has
either black or white color. At each step $t\ge 1$, every node $i$
randomly picks one outgoing neighbor
    $j$ with the probability proportional to the weight of $(i,j)$, namely
    $A_{ij}/\sum_\ell{A_{i\ell}}$, and changes its color to
    $j$'s color.
The voter model also has a random walk interpretation. If a random
walk starts from $i$ and stops at node $j$ at step $t$, then $i$'s
color at step $t$ is
    $j$'s color at step $0$.

In this paper, we extend the voter model to signed digraphs, in
which the adjacency
    matrix $A$ may contain negative entries.
A positive entry $A_{ij}$ represents that $i$ considers $j$ as a friend
    or $i$ trusts $j$, and a negative $A_{ij}$ means that $i$ considers
    $j$ as a foe or $i$ distrusts $j$.
The absolute value $|A_{ij}|$ represents the strength of this trust or
    distrust relationship.
The voter model is thus extended naturally such that one always
takes the same opinion from his/her friend, and the opposite opinion
of his/her foe. Technically, at each step $t\ge 1$, $i$ randomly
picks one
    outgoing neighbor $j$ with probability
    $|A_{ij}|/\sum_\ell{|A_{i\ell}|}$, and if $A_{ij}>0$ (or edge $(i,j)$
    is positive) then $i$ changes its color to $j$'s color, but if $A_{ij}<0$ (or edge $(i,j)$
    is negative) then $i$ changes its color to the opposite of
    $j$'s color.
The random walk interpretation can also be extended for signed networks:
    if the $t$-step random walk from $i$ to $j$ passes an even number of
    negative edges, then $i$'s color at step $t$ is the same as $j$'s
    color at step $0$; while if it passes an odd number of negative
    edges, then $i$'s color at step $t$ is the opposite of $j$'s
    color at step $0$.

\begin{table}[t]
\centering \caption{Notations and
terminologies}\label{tab:terminology} {\small
\begin{tabular}{|p{2.5cm}|p{12.5cm}|}
\hline
$G=(V,E,A)$, $\bar{G}=(V,E,\bar{A})$ & $G$ is a signed digraph, with signed adjacency matrix $A$ and $\bar{G}$ is the unsigned version of $G$, with adjacency matrix $\bar{A}$\\
\hline \vspace{0.00mm} $A^+$, $A^-$ & $A^+$ (resp. $A^-$) is the
non-negative adjacency matrix
    representing positive (resp. negative) edges of $G$, with $A=A^+-A^-$ and
    $\bar{A} = A^+ + A^-$.\\
\hline \vspace{0.02mm} $\textbf{1}$, $\pi$, $x_0$, $x_t$, $x$,
$x_e$, $x_o$ & Vector forms. All vectors are $|V|$-dimensional
    column vectors by default;
    $\textbf{1}$ is all one vector, $\pi$ is the stationary
    distribution of ergodic digraph $\bar G$;
    $x_0$ (resp. $x_t$) is the white color distribution at the beginning (resp.
    at step $t$);
    $x$ is the steady state white color distribution;
    $x_e$ (resp. $x_o$) is the steady state white color distribution for even
    (resp. odd) steps.
    \\
\hline \vspace{0.02mm} $d$, $d^+$, $d^-$, $D$ & $d$, $d^+$, and
$d^-$ are weighted out-degree vectors of $G$, where
$d=\bar{A}\textbf{1}$,
    $d^+=A^+\textbf{1}$, and $d^-=A^-\textbf{1}$;
    $D=diag[d]$ is the diagonal degree matrix filled with entries of $d$.\\
\hline \vspace{0.02mm} $P$, $\bar{P}$ & $P=D^{-1} A$ is the signed
transition matrix of $G$ and
$\bar{P}=D^{-1} \bar{A}$ is the transition probability matrix of $\bar{G}$.\\
\hline \vspace{0.02mm} ${v}_{Z}$,  $\hat{v}_{S}$, $\hat{v}_{Z,S_Z}$&
Given a vector $v$, a node set $Z\subseteq V$, ${v}_{Z}$ is the
projection of $v$ on $Z$. Given a partition $S,\bar{S}$ of $V$,
$\hat{v}_{S}$ is signed such that $\hat{v}_{S}(i)=v(i)$ if $i\in S$,
and $\hat{v}_{S}(i)=-v(i)$ if $i\not\in S$. Given a partition
$S_Z,\bar{S}_Z$ of $Z$,
    $\hat{v}_{Z,S_Z}$ is taking the projection of $v$ on $Z$ first,
    then negating the signs for entries in $\bar{S}_Z$.
    \\
\hline \vspace{0.00mm} $I$, $\hat{I}_S$, $B_Z$ & $I$ is the identity
matrix. $\hat{I}_S=diag[\hat{\textbf{1}}_S]$ is the signed identity
matrix.
$B_Z$ is the projection of a matrix $B$ to $Z\subseteq V$.\\
\hline
\end{tabular}}\vspace*{-0.2cm}
\end{table}

Given a signed digraph $G=(V,E,A)$, let $G^+=(V,E^+,A^+)$ and
    $G^-=(V,E^-,A^-)$ denote the unsigned subgraphs consisting of all
    positive edges $E^+$ and all negative edges $E^-$, respectively,
    where $A^+$ and $A^-$ are the corresponding
    non-negative adjacency matrices.
Thus we have $A=A^+-A^-$.
Similar to unsigned digraphs, $G$ is {\em aperiodic} if the greatest
    common divisor of the lengths of all cycles in $G$ is $1$, and
    $G$ is {\em ergodic} if it is strongly connected and aperiodic.
A {\em sink component} of a signed digraph is a strongly connected
component that has no outgoing edges to any nodes outside the
component. When studying the long-term dynamics of the voter model,
we assume that all signed strongly connected components are ergodic.
We first study the case of ergodic graphs, and then extend it to the
more general case of weakly connected or disconnected graphs with
ergodic sink components. Table~\ref{tab:terminology} provides
notations and terminologies used in the paper.
Note that one basic fact we often use in studying long-term
convergence behavior is: If matrix $P$ satisfies
$\lim_{t\rightarrow\infty}P^t=\mathbf{0}$,
    then $I-P$ is invertible and $(I-P)^{-1}=
    \lim_{t\rightarrow\infty}\sum_{i=0}^{t}P^i$.

\newcommand{\zlimit}{\textbf{1}_{SZ}\pi_{SZ}^T}
\newcommand{\limt}{\lim_{t\rightarrow \infty}}
\newcommand{\limm}{\lim_{m\rightarrow \infty}}

\section{Analysis of voter model dynamics on signed digraphs}\label{sec:VotermodelDynamics}

In this section, we study the short-term and long-term dynamics
of the voter model on signed digraphs. In particular, we answer the
following two questions.

\noindent\textbf{(i) Short-term dynamics:} Given an initial
  distribution of black and white nodes, what is the distribution of black
    and white nodes at step $t>0$?

\noindent\textbf{(ii) Convergence of voter model:} Given an initial
  distribution of black and white nodes, would the distribution
  converge, 
  and what is the steady state distribution of
    black and white nodes?

\subsection{Short-term dynamics}\label{sec:STVotermodelDynamics}


To study voter model dynamics on signed digraphs, we first define
    the {\em signed transition matrix} as follows.

\begin{dfr}[Signed transition matrix]
Given a signed digraph $G=(V,E,A)$, we define the {\em signed
transition matrix} of $G$ as
$P_{}=D^{-1}A$,
where $D=diag[d_i]$ is the diagonal matrix and $d_i=\sum_{j\in V}|A_{ij}|$ is
    the weighted out-degree of node $i$.
\end{dfr}

Next proposition characterizes the dynamics of the voter model at
each step using the {\em signed transition matrix}.


\begin{proposition}
\label{thm:stdyn} Let $G=(V,E,A)$ be a signed digraph and denote the
initial white color distribution vector as $x_0$, i.e., $x_0(i)$
represents the probability that node $i$ is white initially. Then,
the white color distribution at step $t$, denoted by $x_t$ can be
computed as
\begin{align}
x_{t}&={P}^tx_0+(\sum_{i=0}^{t-1}{P}^i)g^-\label{eq:signed2},
\end{align}
where $g^- = D^{-1} A^{-} \textbf{1}$, i.e. $g^-(i)$ is the
    weighted fraction of outgoing negative edges of node $i$.
\end{proposition}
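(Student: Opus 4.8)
The plan is to reduce the proposition to a single affine recurrence $x_{t+1}=Px_t+g^-$ and then unroll it. First I would set up the one-step update by conditioning on the random choice each node makes. Fix a node $i$; at step $t+1$ it picks an outgoing neighbor $j$ with probability $|A_{ij}|/d_i=\bar{A}_{ij}/d_i$. Conditioned on picking $j$: if $(i,j)$ is positive, $i$ copies $j$'s color and so is white with probability $x_t(j)$; if $(i,j)$ is negative, $i$ takes the opposite color and so is white with probability $1-x_t(j)$. Averaging over $j$ and splitting $\bar{A}=A^++A^-$ according to edge sign gives
$$x_{t+1}(i)=\frac{1}{d_i}\sum_j A^+_{ij}x_t(j)+\frac{1}{d_i}\sum_j A^-_{ij}\bigl(1-x_t(j)\bigr).$$

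Next I would simplify the right-hand side: expanding the second sum and using $A=A^+-A^-$, the terms containing $x_t$ collapse to $\frac{1}{d_i}\sum_j A_{ij}x_t(j)$, while the remaining constant is $\frac{1}{d_i}\sum_j A^-_{ij}=(D^{-1}A^-\textbf{1})(i)=g^-(i)$. In vector form this is exactly $x_{t+1}=D^{-1}Ax_t+D^{-1}A^-\textbf{1}=Px_t+g^-$. With this recurrence in hand, equation~\eqref{eq:signed2} follows by a routine induction on $t$: the base case $t=0$ is immediate since $P^0=I$ and the sum is empty, and assuming $x_t=P^tx_0+(\sum_{i=0}^{t-1}P^i)g^-$, applying the recurrence once yields $x_{t+1}=Px_t+g^-=P^{t+1}x_0+(\sum_{i=1}^{t}P^i)g^-+g^-=P^{t+1}x_0+(\sum_{i=0}^{t}P^i)g^-$.

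The only delicate point is the first step --- correctly modeling the sign flip on negative edges and keeping the bookkeeping between $A$, $A^+$, $A^-$, and $\bar{A}$ straight, so that the inhomogeneous term $g^-$ emerges cleanly; everything after that is purely mechanical. An alternative route via the random-walk interpretation (summing over $t$-step walks from $i$, weighted according to the parity of negative edges traversed) would also work, but it requires an inclusion--exclusion-style argument over parities and is messier than the direct one-step conditioning, so I would avoid it.
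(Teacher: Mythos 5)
Your proposal is correct and follows essentially the same route as the paper: both derive the scalar one-step update $x_{t}(i)=\sum_j \frac{A^+_{ij}}{d_i}x_{t-1}(j)+\sum_j \frac{A^-_{ij}}{d_i}(1-x_{t-1}(j))$ by conditioning on the chosen neighbor and its edge sign, rewrite it in matrix form as $x_t = Px_{t-1}+g^-$, and unroll the recurrence. The paper's version is just slightly terser, leaving the induction implicit in the phrase ``repeatedly applying'' the recurrence.
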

\begin{proof}
Based on the signed digraph voter model defined in
Section~\ref{sec:motivation}, $x_t$ can be iteratively computed as
\begin{align}
x_{t}(i)&=\sum_{j\in V}\frac{A^+_{ij}}{d_i}x_{t-1}(j)+
    \sum_{j\in V}\frac{A^-_{ij}}{d_i}(1-x_{t-1}(j)). \label{eq:voterscalar2}
\end{align}
In matrix form, we have
\begin{align}
x_{t}&={D}^{-1}Ax_{t-1} + {D}^{-1} A^{-} \textbf{1}={P} x_{t-1} +
g^-, \label{eq:voters2}
\end{align}
which yields Eq.(\ref{eq:signed2}) by repeatedly applying
Eq.(\ref{eq:voters2}).\end{proof}

\subsection{Convergence of signed transition matrix with relation
    to structural balance of signed digraphs} \label{sec:balance}


Eq.(\ref{eq:signed2}) infers that the long-term dynamics,
    i.e., the vector $x_t$ when $t$ goes to infinity, depends
    critically on the limit of $P^t$ and
    $\sum_{i=0}^{t-1}P^i$.
We show below that the limiting behavior of the two matrix sequences
is fundamentally determined by the structural balance of signed
digraph $G$, which connects to the social balance
    theory well studied in the social science literature (cf.~\cite{EK10}).
We now define three types of signed digraphs based on their balance
structures.

\begin{dfr}[Structural balance of signed digraphs]\label{def:BalDig2}
Let $G=(V,E,A)$ be a signed digraph.
\begin{enumerate}
\setlength{\itemsep}{-1ex}
\item {\em \bf Balanced digraph}. $G$ is {\em balanced} if
    there exists a partition $S, \bar S$ of nodes in $V$, such that
  all edges within $S$ and $\bar{S}$ are positive and all edges across
  $S$ and $\bar S$ are negative.
\item {\em \bf Anti-balanced digraph}. $G$ is {\em anti-balanced}
    if there exists a partition $S, \bar S$ of nodes in $V$, such that
  all edges within $S$ and $\bar{S}$ are negative and all edges across
  $S$ and $\bar S$ are positive.
\item {\em \bf Strictly unbalanced digraph}. $G$ is {\em strictly
    unbalanced} if $G$ is neither balanced nor
    anti-balanced.
\end{enumerate}
\end{dfr}

The balanced digraphs defined above correspond to the balanced
graphs originally defined in social balance theory. It is known that
a balanced graph can be equivalently defined by
    the condition that
    all circles in $G$ without considering edge directions contain
    an even number of negative edges~\cite{EK10}.
On the other hand, the concept of anti-balanced digraphs seems not
appearing in the social balance theory. 
Note that
balanced digraphs and anti-balanced digraphs are not mutually
exclusive. For example, a four node circle with one pair of
non-adjacent edges being
    positive and the other pair being negative is both balanced and
    anti-balanced.
However, for studying long-term dynamics, we only need the above
    categorization for aperiodic digraphs, for which we show below that
    balanced digraphs and anti-balanced digraphs are mutually exclusive.

\begin{proposition}
An aperiodic digraph $G$ cannot be both balanced and
anti-balanced.
\end{proposition}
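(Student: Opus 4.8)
The plan is to argue by contradiction and reduce the claim to the elementary fact that a bipartite (2-colorable) digraph has no odd directed cycle. Suppose $G$ were at once balanced, witnessed by a partition $(S,\bar S)$, and anti-balanced, witnessed by a partition $(T,\bar T)$. First I would record, for each node $i$, the bit $a(i)\in\{0,1\}$ indicating whether $i\in S$ and the bit $b(i)\in\{0,1\}$ indicating whether $i\in T$. Balance says a directed edge $(i,j)$ is positive iff $a(i)=a(j)$ (edges inside a part are positive, edges across are negative); anti-balance says the very same edge is positive iff $b(i)\neq b(j)$. Since both statements describe the actual sign of $(i,j)$, every edge must satisfy $a(i)=a(j)\Leftrightarrow b(i)\neq b(j)$.

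The main step is then just parity bookkeeping. Writing $\oplus$ for addition modulo $2$, the last equivalence rearranges to $(a(i)\oplus b(i))\oplus(a(j)\oplus b(j))=1$ for every $(i,j)\in E$. Hence, setting $c(i):=a(i)\oplus b(i)$, the sets $W:=c^{-1}(1)$ and $\bar W$ form a partition of $V$ with the property that every directed edge of $G$ crosses between $W$ and $\bar W$. Consequently the value of $c$ alternates along any directed walk, so along a directed cycle $v_0\to v_1\to\cdots\to v_\ell=v_0$ the length $\ell$ must be even.

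It follows that every directed cycle of $G$ has even length, so the greatest common divisor of all cycle lengths is at least $2$, contradicting aperiodicity (gcd $=1$), which completes the argument. I expect no real obstacle: the conceptual content is entirely in translating ``balanced and anti-balanced'' into ``the underlying graph is bipartite'', after which aperiodicity fails immediately. The only points needing a word of care are benign edge cases — a self-loop at $i$ would be forced to be positive by balance and negative by anti-balance, an immediate contradiction consistent with the above; and a digraph with no directed cycle is not aperiodic under the gcd convention in use, so there is nothing to prove there.
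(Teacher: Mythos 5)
Your proof is correct. Both you and the paper arrive at the same punchline --- every directed cycle of $G$ has even length, so the gcd of cycle lengths is at least $2$, contradicting aperiodicity --- but you get there by a different and more self-contained route. The paper invokes the classical characterization of balanced graphs (every cycle contains an even number of negative edges), applies it once to the balanced structure and once, after negating all signs, to the anti-balanced structure, and adds the two parities. You instead work directly with the two witnessing partitions: encoding them as bits $a$ and $b$, you observe that balance and anti-balance force $a(i)\oplus a(j)$ and $b(i)\oplus b(j)$ to disagree on every edge, so $c=a\oplus b$ is a proper $2$-coloring of the underlying graph and every closed walk has even length. Your version buys independence from the cycle-sign characterization (which the paper cites from the structural-balance literature for undirected circles) and makes explicit that the two partitions need not coincide; the paper's version is shorter given that characterization is already on the table. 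Your handling of the edge cases (self-loops, and acyclic digraphs failing the gcd$\,=1$ convention vacuously) is also sound.
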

\begin{proof}

{\color{black} Suppose, for a contradiction, that an aperiodic
digraph $G$ is both
     balanced and anti-balanced.
 By the equivalent condition of balanced graphs, we know that all cycles
     of $G$ have an even number of negative edges.
 Since an anti-balanced graph will become balanced if we negate the signs
     of all its edges, we know that all cycles of $G$ also have an even
     number of positive edges.
 Therefore, all cycles of $G$ must have an even number of edges, which
     means their lengths have a common divisor $2$, contradicting to
     the assumption that $G$ is aperiodic.}
\end{proof}

With the above proposition, we know that balanced
    graphs, anti-balanced graphs, and strictly unbalanced graphs indeed
    form a classification of aperiodic digraphs, where anti-balanced graphs and
    strictly unbalanced graphs together correspond to unbalanced graphs
    in the social balance theory.
We identify anti-balanced graphs as a special category because it has
    a unique long-term dynamic behavior different from other graphs.
An example of anti-balanced graphs is a graph with only negative
edges. In general, anti-balanced graphs could be viewed as an
extreme in which many hostility exist among individuals, e.g.,
networks formed by bidders in
auctions~\cite{morgan2003spite,brandt2007spiteful}.

{\color{black}\noindent{\textbf{Case of ergodic signed digraphs.}}
Now, we discuss the limiting behavior of $P^t$ of ergodic signed
digraphs with three balance structures. A signed digraph $G=(V,E,A)$
is ergodic if and only if for any node $i$, there always exists a
signed path to any other node in $G$ and the common divisor of all
cycle path lengths of $i$ is $1$. Here, a signed path $R$ in a
signed graph $G$ is a sequence of nodes with the edges being
directed from each node to the following one, where the length of
the path, denoted as $|R|$, is the total number of directed edges in
$R$. The sign of a path is positive, if there is an even number of
negative edges along the path; otherwise the sign of a path is
negative. Below, we first introduce
Proposition~\ref{pro:UnbalancedDig} presenting that the balance
structures of ergodic signed digraphs can be interpreted and
distinguished in terms of the path lengths and path signs in $G$. As
a result, Lemma~\ref{thm:BalDig2} introduces the various limiting
behaviors of $P^t$ of ergodic signed digraphs with respect to three
balance structures.

 }



{\color{black}
\begin{proposition}[]\label{pro:UnbalancedDig}
Let $G=(V,E,A)$ be an ergodic strictly unbalanced digraph. There
exist two nodes $i$ and $j$, and two directed paths from $i$ to $j$
with the same length but different signs.
\end{proposition}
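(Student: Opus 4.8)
The plan is to reduce the statement to producing, at a single vertex $r$, a \emph{positive} closed walk $W_P$ of \emph{odd} length $p$ and a \emph{negative} closed walk $W_N$ of \emph{odd} length $n$. Granting this, traversing $W_P$ exactly $n$ times is a directed path from $r$ to $r$ of length $np$ and sign $(+1)^n=+1$, while traversing $W_N$ exactly $p$ times is a directed path from $r$ to $r$ of length $np$ and sign $(-1)^p=-1$ (here we use that $p$ is odd); these are two directed paths of the same length with opposite signs, so we may take $i=j=r$ (and if distinct endpoints are wanted, append a fixed out-edge of $r$, or a directed path from $r$ to any chosen node, to both walks).

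To obtain $W_P$ and $W_N$ I would first restate the two strictness hypotheses in terms of directed closed walks, using that $G$ is strongly connected. By the standard potential argument (root a vertex, give it label $+1$, and try to extend a labelling $c$ with $c(v)$ the sign of a directed walk reaching $v$), $G$ is balanced iff every directed closed walk has positive sign, and $G$ is anti-balanced iff every directed closed walk of length $\ell$ has sign $(-1)^{\ell}$. Hence, since $G$ is not balanced there is a negative directed closed walk, and routing it through $r$ (prepend a path $r\to a$, append a path $a\to r$, traversing the loop once or twice to fix the overall sign) yields a negative closed walk $N_0$ based at $r$; since $G$ is not anti-balanced there is a directed closed walk $W^{\sharp}$ with $\mathrm{sign}(W^{\sharp})=(-1)^{|W^{\sharp}|+1}$, and routing $W^{\sharp}$ through $r$ as $W^{\ast}=Q_1 W^{\sharp} Q_2$, $W^{\ast\ast}=Q_1 Q_2$ gives two closed walks at $r$ for which $\mathrm{sign}(W^{\ast})(-1)^{|W^{\ast}|}$ and $\mathrm{sign}(W^{\ast\ast})(-1)^{|W^{\ast\ast}|}$ are opposite, so one of them, say $W$ with length $w$, satisfies $\mathrm{sign}(W)=(-1)^{w+1}$. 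Finally, aperiodicity of $G$ (together with strong connectivity) produces an odd-length closed walk $U$ based at $r$: not all cycle lengths are even, so some odd cycle exists, and routing it through $r$ and comparing with the empty/connecting detour forces an odd closed walk at $r$.

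It then remains to combine $U$, $N_0$, and $W$ by concatenation at $r$ (which adds lengths and multiplies signs) to realise the two target walks. If $U$ is positive it already serves as $W_P$, and $W_N$ is $N_0$ if $N_0$ has odd length, or else $U\cdot N_0$ (odd $+$ even length, sign $(+1)(-1)=-1$). If $U$ is negative it serves as $W_N$, and $W_P$ is $W$ if $w$ is odd (then $\mathrm{sign}(W)=(-1)^{w+1}=+1$), or else $U\cdot W$ (odd $+$ even length, sign $(-1)\cdot(-1)^{w+1}=+1$ since $w$ is even). Either way we are done.

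The step I expect to be the main obstacle is this last one: sign and length-parity must be controlled \emph{simultaneously}, and nothing a priori rules out, for instance, every negative closed walk at $r$ having even length. What rescues the argument is that such a degeneracy would force every closed walk at $r$ — hence, by the same routing-through-$r$ argument, every directed cycle of $G$ — to have sign $(-1)^{\text{length}}$, i.e. $G$ would be anti-balanced, contradicting strict unbalance; the symmetric degeneracy on the positive side is excluded by aperiodicity together with non-balance. Carrying out this bookkeeping cleanly, and in particular first pinning down the closed-walk reformulations of balance and anti-balance for strongly connected signed digraphs (routine, but they must be stated carefully since the excerpt only records the undirected-cycle characterisation of balance), is the heart of the proof.
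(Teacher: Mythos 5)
Your proof is correct, but it takes a genuinely different route from the paper's. The paper argues by contraposition through a chain of three statements: assuming all equal-length paths between any pair of nodes carry the same sign, it first deduces (via the same power trick you use, applied to two odd closed walks of opposite sign obtained through aperiodicity) that all even-length paths between any pair agree in sign, and then rebuilds the balanced or anti-balanced partition from scratch by two-colouring $V$ according to the signs of even-length paths from a fixed node. You instead work forwards: you invoke the closed-walk characterizations of balance and anti-balance for strongly connected signed digraphs (which you correctly sketch via the potential argument; the paper's Statement~2 $\Rightarrow$ Statement~3 step is essentially an inline proof of the same fact), extract from their failure a negative closed walk and an anti-balance-violating closed walk based at a common vertex $r$, use aperiodicity to get an odd closed walk $U$ at $r$, and then do parity bookkeeping to manufacture a positive odd closed walk and a negative odd closed walk at $r$; raising each to the power of the other's length then finishes exactly as in the paper. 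The two arguments share their combinatorial core --- the odd-length power trick and the reliance on aperiodicity --- but yours is constructive where the paper's is contrapositive, and it cleanly separates the standard walk-sign characterizations of balance from the new parity argument, at the cost of having to state and justify those characterizations (which you do adequately); the paper's version avoids that appeal but re-derives the partition by hand. Your case analysis on the sign of $U$ is exhaustive and each branch checks out (in particular, composing with $U$ correctly repairs the parity of the negative or the anti-balance-violating walk), and taking $i=j=r$ is consistent with the paper's notion of path, which permits repeated vertices, so there is no gap.
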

\begin{proof}
Given the following three statements, we prove $\texttt{Statement 1}
\Rightarrow \texttt{Statement 2} \Rightarrow \texttt{Statement 3}$,
which in turn proves this proposition, i.e., $\neg\texttt{Statement
3} \Rightarrow \neg\texttt{Statement 1}$. We assume that $G$ is a
signed ergodic digraph.

\noindent\textbf{Statement 1:}  For any two nodes $i$ and $j$, all
paths from $i$ to $j$ with the same length have  same signs.

\noindent\textbf{Statement 2:} For any two nodes $i$ and $j$, all
paths from $i$ to $j$ with even length have same signs.

\noindent\textbf{Statement 3:} $G$ is either balanced or
anti-balanced.

\noindent(1) Proof by contradiction for $\texttt{Statement 1}
\Rightarrow \texttt{Statement 2}$. We assume that in $G$, there
exist two even length paths $R_{e1}$ and $R_{e2}$ from $i$ to $j$
with different signs. Since $G$ is ergodic,
    by Proposition~\ref{pro:evenoddpath} in
    Appendix~\ref{sec:exponentialofbarP},
    there must exist a path, denoted by $R_{o}$, from $j$ to
$i$ with odd length (no matter what sign it carries). Denote the
length of these three paths as $|R_{e1}|$, $|R_{e2}|$ and $|R_o|$,
respectively.

Then, $R_{c1}=R_{e1}+R_o$ forms a cycle at node $i$ with odd length
$|R_{e1}|+|R_o|$ and $R_{c2}=R_{e2}+R_o$ forms another cycle at $i$
with odd length $|R_{e2}|+|R_o|$. Clearly, two cycles $R_{c1}$ and
$R_{c2}$ carry different signs. Then, let
$R'_{c1}=R_{c1}^{|R_{c2}|}$ denote a cycle of node $i$, by
continuing $R_{c1}$ for $|R_{c2}|$ times, which has the same sign
with $R_{c1}$ since $|R_{c2}|$ is odd. Similarly, we construct a
cycle $R'_{c2}=R_{c2}^{|R_{c1}|}$  by continuing $R_{c2}$ for
$|R_{c1}|$ times, which has the same sign as $R_{c2}$. Thus
$R'_{c1}$ and $R'_{c2}$ have the same length of $|R_{c1}||R_{c2}|$
but different signs, which contradicts to Statement 1.

\noindent(2) Proof for $\texttt{Statement 2} \Rightarrow
\texttt{Statement 3}$.
By Proposition~\ref{pro:evenoddpath} in
    Appendix~\ref{sec:exponentialofbarP}, we know that between any two nodes
    there must exist even-length paths.
By Statement 2, we partition $V$ into $S$
and $\bar S$, based on the signs of even length paths originated
from a particular node $i\in V$. More specifically, $S$ contains the
nodes to which all even length paths from $i$ have positive signs,
and $\bar S$ contains the other set of nodes
    (note that $i$ may not be in $S$).

We argue that  (a) within $S$ and $\bar{S}$, all edges have  same
signs; and (b) all edges between $S$ and $\bar S$ have same signs.
Since $G$ contains both negative and positive edges, it must be
either balanced or anti-balanced.

For (a), assume to the contrary that
there exist two directed edges $R_{ab} = a\rightarrow b$ and $R_{cd}
= c\rightarrow d$, which both reside in the same set, e.g., $S$ with
different signs. (The case for $\bar S$ is similar.)

We construct two even length paths from $i$ to $c$ and $i$ to $d$ as
follows.
\begin{align}
R_e(i,c)&=R_e(i,b)+R_e(b,c),\nonumber \\
R_e(i,d)&=R_e(i,a)+R_{ab}+R_e(b,c)+R_{cd}\nonumber
\end{align}
where $R_e(x,y)$ represents the constructed even length path from
node $x$ to node $y$.

Since both $c,d \in S$, by construction, then $R_e(i,c)$ and
$R_e(i,d)$ have same signs
\begin{align}
sgn(R_e(i,c))&=sgn(R_e(i,d)).\label{eq:sgn1}
\end{align}
On the other hand, since $a$ and $b$ are in the same group as $c$
and $d$, $sgn(R_e(i,a))=sgn(R_e(i,b))$. Then, we have
\begin{align}
sgn(R_e(i,c))&=sgn(R_e(i,b))sgn(R_e(b,c)),\label{eq:sgn2} \\
sgn(R_e(i,d))&=sgn(R_e(i,a))sgn(R_{ab})sgn(R_e(b,c))sgn(R_{cd})\nonumber \\
&=- sgn(R_e(i,b))sgn(R_e(b,c)).\label{eq:sgn3}
\end{align}
Eq.(\ref{eq:sgn3}) comes from the assumption that $R_{ab}$ and
$R_{cd}$ have different signs. Eq.(\ref{eq:sgn1}) contradicts with
Eq.(\ref{eq:sgn2}) and Eq.(\ref{eq:sgn3}).

For (b),  assume that there exist two edges $R_{ab}$ and $R_{cd}$
with different signs between $S$ and $\bar S$. Still consider the
two even length paths $R_e(i,c)$ and $R_e(i,d)$ constructed before.
Since $c$ and $d$ are not in the same side, $R_e(i,c)$ and
$R_e(i,d)$ have opposite signs by the construction, i.e.,
\begin{align}
sgn(R_e(i,c))&=-sgn(R_e(i,d)).\label{eq:sgn11}
\end{align}
On the other hand, since $a$ and $b$ are in the different groups as
well, $sgn(R_e(i,a))=-sgn(R_e(i,b))$. Then, we have
\begin{align}
sgn(R_e(i,c)) 
&=sgn(R_e(i,b))\cdot sgn(R_e(b,c)),\label{eq:sgn22} \\
sgn(R_e(i,d))&=sgn(R_e(i,a))sgn(R_{ab})sgn(R_e(b,c))sgn(R_{cd})\nonumber \\
&= sgn(R_e(i,b))\cdot sgn(R_e(b,c)).\label{eq:sgn33}
\end{align}
However, Eq.(\ref{eq:sgn11}) contradicts with Eq.(\ref{eq:sgn22})
and Eq.(\ref{eq:sgn33}). This completes the proof.
\end{proof}
}

The next lemma characterizes the limiting behavior of $P^t$ of
ergodic signed digraphs with all three balance structures. Given a
signed digraph $G=(V,E,A)$, let $\bar G =(V,E,\bar{A})$ corresponds
to its unsigned version ($\bar{A}_{ij} = |A_{ij}|$ for all $i,j\in
V$). When $\bar G$ is ergodic, a random walk on $\bar G$ has a
unique stationary distribution, denoted as $\pi$. That is, $\pi^T =
\pi^T \bar{P}$, where $\bar{P}=D^{-1}\bar{A}$ is
    the transition probability matrix for $\bar{G}$.
Henceforth, we always use $S,\bar{S}$ to denote the corresponding
    partition for either balanced graphs or anti-balanced graphs.
We define the infinity norm of matrix $M\in \mathbb{R}^{m\times m}$
as: $\|M\|_{\infty}:=\max_{1\leq i\leq m}\sum_{j=1}^m|M_{ij}|$.

\begin{lemmaa}[]\label{thm:BalDig2}
Given an ergodic signed digraph $G=(V,E,A)$, let ${\bar G} = (V,E,
\bar{A})$ be the {\em unsigned} digraph.  When $G$ is balanced or
strictly unbalanced, $P^t$ converges, and when $G$ is anti-balanced,
the odd and even subsequences of $P_{}^t$ converge to opposite
matrices. 

\begin{eqnarray}
\mbox{Balanced $G$:} &  \mbox{ $\lim_{t\rightarrow
\infty}P_{}^t=\hat{\textbf{1}}_S\hat{\pi}_S^T$};
\nonumber \\
\mbox{Strictly unbalanced $G$:} & \mbox{ $\lim_{t\rightarrow
\infty}P_{}^t=\textbf{0}$};
\nonumber\\
\mbox{Anti-balanced $G$:} &  \mbox{ $\lim_{t\rightarrow
\infty}P_{}^{2t}=\hat{\textbf{1}}_S\hat{\pi}_S^T$}, \mbox{ }
\mbox{$\lim_{t\rightarrow
\infty}P_{}^{2t+1}=-\hat{\textbf{1}}_S\hat{\pi}_S^T$}.\nonumber
\end{eqnarray}
\end{lemmaa}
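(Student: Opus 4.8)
The plan is to reduce all three cases to the single known fact that the unsigned ergodic chain converges, $\bar{P}^t\to\textbf{1}\pi^T$ with $\pi>0$, using that $\bar{P}_{ij}=|P_{ij}|$ for all $i,j$. Expanding matrix powers over directed walks, $(P^t)_{ij}=\sum_w \mathrm{sgn}(w)\,\omega(w)$ and $(\bar{P}^t)_{ij}=\sum_w \omega(w)$, where $w$ ranges over length-$t$ walks from $i$ to $j$, $\omega(w)=\prod_{(k,\ell)\in w}\bar{P}_{k\ell}>0$, and $\mathrm{sgn}(w)=(-1)^{(\#\text{ negative edges of }w)}$. Hence $|(P^t)_{ij}|\le(\bar{P}^t)_{ij}$, and the inequality is strict whenever two length-$t$ walks from $i$ to $j$ have opposite signs; also $\|P\|_\infty=1$, so $\|P^t\|_\infty\le1$ for all $t$.

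For a balanced $G$ I would first record the identity $\hat{I}_S P\hat{I}_S=\bar{P}$: if $i,j$ lie in the same block then $(i,j)$ is positive, so $P_{ij}=\bar{P}_{ij}$ while $\hat{\textbf{1}}_S(i)\hat{\textbf{1}}_S(j)=1$; if they lie in different blocks then $(i,j)$ is negative, so $P_{ij}=-\bar{P}_{ij}$ while $\hat{\textbf{1}}_S(i)\hat{\textbf{1}}_S(j)=-1$; either way the entries match. Since $\hat{I}_S^2=I$ this gives $P^t=\hat{I}_S\bar{P}^t\hat{I}_S\to\hat{I}_S(\textbf{1}\pi^T)\hat{I}_S=\hat{\textbf{1}}_S\hat{\pi}_S^T$. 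The anti-balanced case reduces to this one: negating all entries of $A$ turns an anti-balanced $G$ into a balanced digraph with the same partition $S,\bar{S}$ and the same $\bar{G}$ (hence the same $\pi$), and turns $P$ into $-P$; applying the balanced case to $-P$ yields $(-1)^t P^t=(-P)^t\to\hat{\textbf{1}}_S\hat{\pi}_S^T$, i.e. $P^{2t}\to\hat{\textbf{1}}_S\hat{\pi}_S^T$ and $P^{2t+1}\to-\hat{\textbf{1}}_S\hat{\pi}_S^T$.

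For the strictly unbalanced case I must show $P^t\to\textbf{0}$, and the plan is to find one exponent $N$ with $\|P^N\|_\infty<1$; then $\|P^t\|_\infty\le\|P^N\|_\infty^{\lfloor t/N\rfloor}\to0$ finishes it. Since $\bar{G}$ is ergodic, $\bar{P}$ is primitive, so there is $N^\ast$ with $(\bar{P}^n)_{k\ell}>0$ for all $k,\ell$ and all $n\ge N^\ast$. By Proposition~\ref{pro:UnbalancedDig} there are nodes $i_0,j_0$ and two walks from $i_0$ to $j_0$ of a common length $L$ with opposite signs. Set $N:=N^\ast+L$. Given any node $i$, pick a walk of length $N^\ast$ from $i$ to $i_0$ and prepend it to each of the two walks from $i_0$ to $j_0$: this produces two length-$N$ walks from $i$ to $j_0$ of opposite signs, so $|(P^N)_{ij_0}|<(\bar{P}^N)_{ij_0}$ and therefore $\sum_j|(P^N)_{ij}|<\sum_j(\bar{P}^N)_{ij}=1$. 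As $i$ is arbitrary, $\|P^N\|_\infty<1$.

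The strictly unbalanced case is the only real obstacle, and its delicate point is making the sign cancellation happen at a single step $N$ uniformly over all starting nodes $i$ — which is exactly what primitivity of $\bar{P}$ supplies, together with the one pair $i_0,j_0$ from Proposition~\ref{pro:UnbalancedDig}. An alternative route I might take, if it shortens the write-up, is to pass to the signed double cover $\tilde{G}$ on $V\times\{+,-\}$ (a positive edge preserves the $\pm$ label, a negative edge flips it), whose walk matrix is $\bigl(\begin{smallmatrix}P^+ & P^-\\ P^- & P^+\end{smallmatrix}\bigr)$ with $P^{\pm}:=D^{-1}A^{\pm}$; one checks $\tilde{G}$ is ergodic precisely when $\bar{G}$ is ergodic and $G$ is strictly unbalanced, and since the stationary distribution of $\tilde{G}$ is invariant under swapping $+$ and $-$, its powers converge to a limit whose two diagonal blocks coincide, which forces $P^t\to\textbf{0}$.
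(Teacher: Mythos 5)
Your proposal is correct and follows essentially the same route as the paper: the balanced case via the conjugation $P=\hat{I}_S\bar{P}\hat{I}_S$, the anti-balanced case by sign negation, and the strictly unbalanced case by invoking Proposition~\ref{pro:UnbalancedDig} to produce a pair of equal-length, opposite-sign walks and then deriving a uniform contraction $\|P^N\|_\infty<1$. The only (minor, and arguably cleaner) difference is that you obtain the uniform exponent $N$ by prepending a fixed-length walk supplied by primitivity of $\bar{P}$, whereas the paper routes each starting node to $i$ along its own path and takes a maximum over the resulting per-node lengths $\ell(i)$.
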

\begin{proof}
\noindent (1) When $G$ is balanced, the signed transition matrix
$P_{}$ can be written as $P_{}=\hat{I}_S\bar{P}\hat{I}_S$. Since
$\bar{G}$ is ergodic, 
we have $\lim_{t\rightarrow\infty}\bar{P}^t =
\textbf{1}\pi^T$. Thus,
\begin{align}
    \lim_{t\rightarrow \infty}{{P}}^t &= \lim_{t\rightarrow
    \infty}(\hat{I}_S\bar{P}\hat{I}_S)^t=\hat{\textbf{1}}_S \hat{\pi}_S^T,\nonumber
\end{align}
where we use simple facts $\hat{I}_S^2=I$,
    $\hat{I}_S\textbf{1}=\hat{\textbf{1}}_S$, and
    $\pi^T \hat{I}_S = \hat{\pi}_S^T$.

\noindent (2) When  $G$ is anti-balanced, we have ${P}=
-\hat{I}_S\bar{P}\hat{I}_S$. Thus,
\begin{align}
    \lim_{t\rightarrow \infty}{{P}}^{2t} &= \lim_{t\rightarrow
    \infty}(-\hat{I}_S\bar{P}\hat{I}_S)^{2t}=
    \hat{\textbf{1}}_S \hat{\pi}_S^T \nonumber \\
    \lim_{t\rightarrow \infty}{{P}}^{2t+1} &= \lim_{t\rightarrow
    \infty}(-\hat{I}_S\bar{P}\hat{I}_S)^{2t+1}= -\hat{\textbf{1}}_S \hat{\pi}_S^T \nonumber.
\end{align}


{\color{black}\noindent (3) 
By Proposition~\ref{pro:UnbalancedDig}, 
given a signed strictly unbalanced
digraph $G$, there exist a pair of nodes $i$ and $j$, such that two
paths $R_1$ and $R_2$ from $i$ to $j$ have the same length $\ell(i)$
and opposite signs. Consider a random walk from $i$. Let $p_1$
(resp. $p_2$) be the probability that the walk exactly follows $R_1$
(resp. $R_2$) in the first $\ell(i)$ steps. Let $R^{\ell(i)}_{i,k}$
be the set of all paths from $i$ to $k$ with length $\ell(i)$. Then,
for a unit vector $e_i$ with $i$-th entry equal to $1$ and other
entries as $0$, we have
$$||e_i^TP^{\ell(i)}||_1 =\sum_{k \in V} \left|\sum_{R\in R^{\ell(i)}_{i,k}}
\mathrm{Prob}[R] sgn(R)\right| \leq 1-\min(p_1,p_2) = \rho_i.$$

For any other node $i'$, there must exist a path $R'$ from
$i'\rightarrow i$, due to the ergodicity of $G$, thus two paths
$R'_1=R'+R_1$ and $R'_2=R'+R_2$ from $i'$ to $j$ have the same
length, but opposite signs. With similar arguments as that for node
$i$, $||e_{i'}^TP^{\ell(i')}||_1 \leq\rho_{i'}$ holds for any $i'\in
V$. Let $\rho = \max_i \rho_i <1$ and $\ell = \max_i \ell(i)$, we
conclude for any $i\in V$, $||e_{i}^TP^{\ell}||_1 \leq\rho$ holds.
Hence, when $t\geq T=2\ell$, the following inequality holds
$$||e_{i}^TP^{t}||_1 = ||e_{i}^TP^{\frac{t}{\ell}\ell}||_1 \leq  \rho^{\lfloor\frac{t}{\ell}\rfloor}\leq\rho^{\frac{t}{T}}.$$
Hence $\lim_{t\rightarrow \infty}\|P^t\|_{\infty}=0$, i.e.,
$\lim_{t\rightarrow \infty}P^t=\textbf{0}$.
}
\end{proof}

The above lemma clearly shows different convergence behaviors of
$P^t$ for three
    types of graphs.
In particular, $P^t$ of anti-balanced graphs exhibits a bounded
oscillating behavior in the
    long term.

{\color{black}\noindent{\textbf{Case of weakly connected signed
digraphs.}}} Now, we consider a weakly connected signed digraph
$G=(V,E,A)$ with one ergodic sink component $G_Z$ with node set $Z$,
    which only has incoming edges from
    the rest of the signed digraph $G_X$ with node set $X=V\setminus Z$.
Then, the signed transition matrix $P$ has the following block form.
\begin{align}
    {P}&=\left[
            \begin{array}{c:c}
            {P}_X & {P}_Y \\ \hdashline
            \textbf{0} & {P}_Z
            \end{array}\right],\label{eq:blockPs}
\end{align}
\noindent{where} ${P}_X$ and ${P}_Z$ are the block matrices for
component ${G}_X$ and ${G}_Z$, and ${P}_Y$ represent the one-way
connections from ${G}_X$ to ${G}_Z$.
Then, the
$t$-step transition matrix ${P}^t$ can be expressed as
\begin{align}
    {P}^t&=\left[
            \begin{array}{c:c}
            P_X^{(t)} & P_Y^{(t)} \\ \hdashline
            \textbf{0} & P_Z^{(t)}
            \end{array}\right]
,\label{eq:blockPst2}
\end{align}
where $P_X^{(t)}=P_X^t$, $P_Z^{(t)}=P_Z^t$ and $P_Y^{(t)}=\sum_{i=0}^{t-1} P_X^iP_YP_Z^{t-1-i}$. 
When $G_Z$ is balanced or anti-balanced, we use $S_Z,\bar{S}_Z$ to denote
    the partition of $Z$ defining its balance or anti-balance structure.
Then, we denote column vectors
\begin{align}
u_b&=(I_X-P_X)^{-1}P_Y\hat{\textbf{1}}_{Z,S_Z}, \label{eq:u_b}\\
\mbox{and   } u_u&=(I_X+P_X)^{-1}P_Y\hat{\textbf{1}}_{Z,S_Z}.\label{eq:u_u}
\end{align}
The reason that $I_X-P_X$ is invertible is because
    $\lim_{t\rightarrow\infty}P_X^t = \mathbf{0}$, which is in turn
    because there is a path from any node $i$ in $G_X$ to nodes in $Z$ (since $Z$
    is the single sink), and thus informally a random walk from $i$ eventually
    reaches and then stays in $G_Z$.
The same reason applies to $I_X+P_X$. Lemma~\ref{thm:convergWeakly2}
provides the formal proof of the fact
    $\lim_{t\rightarrow\infty}P_X^t = \mathbf{0}$.

Let $\pi_Z$ denote the stationary distribution of nodes in $G_Z$,
    and $\hat{\pi}_{Z,S_Z}$ is signed, with $\hat{\pi}_{Z,S_Z}(i)=\pi_Z(i)$ for $i\in S_Z$, and
    $\hat{\pi}_{Z,S_Z}(i)=-\pi_Z(i)$ for $i\in Z\setminus S_Z$.
Lemma~\ref{thm:convergWeakly2} discloses the convergence of $P^t$
given various balance structures of $G_Z$.

\begin{lemmaa}[]\label{thm:convergWeakly2}
For weakly connected signed digraph $G=(V,E,A)$ with one ergodic
    sink components, with signed transition matrix
    given in Eq.(\ref{eq:blockPst2}), we have
\begin{eqnarray}
\mbox{Balanced $G_Z$:} &  \mbox{ $ \lim_{\substack{t\rightarrow
\infty}}P^t=\left[\begin{array}{c:c}
     \textbf{0} & u_b\hat{\pi}_{Z,S_Z}^T \\ \hdashline
    \textbf{0} & \hat{\textbf{1}}_{Z,S_Z}\hat{\pi}_{Z,S_Z}^T
    \end{array}\right]$} \nonumber\\
\mbox{Strictly unbalanced $G_Z$:} &  \mbox{ $\lim_{\substack{t\rightarrow \infty}}P^t=\textbf{0}$}\nonumber\\
\mbox{Anti-balanced $G_Z$:} &  \mbox{ $ \lim_{t\rightarrow
\infty}P^{2t}=\left[
            \begin{array}{c:c}
            \textbf{0} & -u_u\hat{\pi}_{Z,S_Z}^T \\ \hdashline
            \textbf{0} & \hat{\textbf{1}}_{Z,S_Z}\hat{\pi}_{Z,S_Z}^T
            \end{array}\right]$, }
   \mbox{    $
\lim_{t\rightarrow \infty}P^{2t+1}=\left[
            \begin{array}{c:c}
            \textbf{0} & u_u\hat{\pi}_{Z,S_Z}^T \\ \hdashline
            \textbf{0} & -\hat{\textbf{1}}_{Z,S_Z}\hat{\pi}_{Z,S_Z}^T
            \end{array}\right]$
}\nonumber
\end{eqnarray}
%
\end{lemmaa}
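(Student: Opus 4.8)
The plan is to exploit the block-triangular structure of $P^t$ in Eq.~(\ref{eq:blockPst2}) and handle its three nonzero blocks separately. The bottom-right block is $P_Z^{(t)}=P_Z^t$, whose limit comes directly from Lemma~\ref{thm:BalDig2} applied to the ergodic signed digraph $G_Z$ with its balance partition $S_Z,\bar S_Z$: it tends to $L_Z:=\hat{\textbf{1}}_{Z,S_Z}\hat{\pi}_{Z,S_Z}^T$ when $G_Z$ is balanced, to $\mathbf{0}$ when $G_Z$ is strictly unbalanced, and alternates between $L_Z$ and $-L_Z$ on even/odd subsequences when $G_Z$ is anti-balanced. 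The bottom-left block is identically $\mathbf{0}$. So the only genuinely new work is showing the top-left block $P_X^{(t)}=P_X^t$ vanishes and evaluating the top-right block $P_Y^{(t)}=\sum_{i=0}^{t-1}P_X^iP_YP_Z^{t-1-i}$, which I attack by a discrete-convolution argument.

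First I would establish $\lim_{t\rightarrow\infty}P_X^t=\mathbf{0}$ with geometric rate. Entrywise $|(P_X)_{ij}|=|A_{ij}|/d_i=(\bar P_X)_{ij}$, hence $\|P_X^t\|_{\infty}\le\|\bar P_X^t\|_{\infty}$, where $\bar P_X$ is the restriction to $X$ of the row-stochastic matrix $\bar P$. Since $Z$ is the unique sink, every node of $X$ has a directed path to $Z$ in $\bar G$ of length at most $|X|$, so the probability that a walk started in $X$ stays inside $X$ for $|X|$ steps is strictly below $1$ for every starting node; thus $\|\bar P_X^{|X|}\|_{\infty}\le\rho<1$ and $\|\bar P_X^t\|_{\infty}\le\rho^{\lfloor t/|X|\rfloor}$. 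In particular $\sum_{i\ge 0}\|P_X^i\|_{\infty}<\infty$, and since $\lim_t(\pm P_X)^t=\mathbf{0}$, the basic fact recalled at the end of Section~\ref{sec:motivation} gives that $I_X-P_X$ and $I_X+P_X=I_X-(-P_X)$ are invertible with $(I_X\mp P_X)^{-1}=\lim_t\sum_{i=0}^t(\pm P_X)^i$; this also makes Eq.~(\ref{eq:u_b}) and Eq.~(\ref{eq:u_u}) well-defined.

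For the top-right block, write $P_Z^m=\sigma^m L_Z+E_m$ with $\|E_m\|_{\infty}\to 0$, where $\sigma=+1$ for balanced $G_Z$, $\sigma=-1$ for anti-balanced $G_Z$, and $L_Z=\mathbf{0}$ for strictly unbalanced $G_Z$. Substituting into $P_Y^{(t)}$ and separating the two parts gives $P_Y^{(t)}=\big(\sum_{i=0}^{t-1}\sigma^{t-1-i}P_X^i\big)P_YL_Z+\sum_{i=0}^{t-1}P_X^iP_YE_{t-1-i}$. The first part equals $\sigma^{t-1}\big(\sum_{i=0}^{t-1}(\sigma P_X)^i\big)P_YL_Z$, which converges to $\sigma^{t-1}(I_X-\sigma P_X)^{-1}P_YL_Z$; since $L_Z=\hat{\textbf{1}}_{Z,S_Z}\hat{\pi}_{Z,S_Z}^T$, this is $\sigma^{t-1}u_b\hat{\pi}_{Z,S_Z}^T$ when $\sigma=1$ and $\sigma^{t-1}u_u\hat{\pi}_{Z,S_Z}^T$ when $\sigma=-1$ (with $\sigma^{2t-1}=-1$, $\sigma^{2t}=1$ yielding the even/odd alternation). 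The second part is bounded in $\|\cdot\|_{\infty}$ by $\|P_Y\|_{\infty}\sum_{i=0}^{t-1}\|P_X^i\|_{\infty}\|E_{t-1-i}\|_{\infty}$; cutting the sum at $i=\lfloor t/2\rfloor$ and using $\sum_i\|P_X^i\|_{\infty}<\infty$, $\|P_X^i\|_{\infty}\to 0$, $\sup_m\|E_m\|_{\infty}<\infty$, and $\|E_m\|_{\infty}\to 0$ shows both halves tend to $\mathbf{0}$. Reassembling the blocks gives the three displayed limits, the strictly unbalanced case being immediate since then $L_Z=\mathbf{0}$ and $P_Y^{(t)}\to\mathbf{0}$ from the same bound.

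I expect the main obstacle to be the bookkeeping in this convolution estimate: one must peel off the non-vanishing limit $L_Z$ before splitting the sum (a naive term-by-term bound fails because $\|P_Z^m\|_{\infty}\not\to 0$ in the balanced and anti-balanced cases), and one must track the alternating factors $\sigma^{t-1-i}$ so that they recombine into the single resolvent $(I_X-\sigma P_X)^{-1}$, i.e.\ $(I_X+P_X)^{-1}$ in the anti-balanced case. Everything else is a direct appeal to Lemma~\ref{thm:BalDig2} or to the standard transient-matrix facts established above.
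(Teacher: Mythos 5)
Your proof is correct, and it follows the same overall skeleton as the paper's: establish $\lim_t P_X^t=\mathbf{0}$ via the path-to-sink argument, read off the bottom-right block from Lemma~\ref{thm:BalDig2}, and evaluate $P_Y^{(t)}=\sum_{i=0}^{t-1}P_X^iP_YP_Z^{t-1-i}$ by peeling off the limit of $P_Z^m$ and showing the remainder sum vanishes. The one place where you genuinely diverge is in how that remainder is killed. The paper first invokes Proposition~\ref{pro:exponentialofbarP} to get the exact algebraic identities $(P_Z-\textbf{1}_{Z,S_Z}\pi^T_{Z,S_Z})^m=P_Z^m-\textbf{1}_{Z,S_Z}\pi^T_{Z,S_Z}$ (and its anti-balanced analogue), so that the error term is itself a power of a matrix whose powers tend to zero, and then applies Proposition~\ref{thm:matrixlimit}(ii) of Appendix~\ref{sec:Matrixform}, whose proof goes through Jordan forms and spectral-radius bounds. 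You instead define $E_m=P_Z^m-\sigma^mL_Z$, use only $\|E_m\|_\infty\to 0$ together with the summability of $\|P_X^i\|_\infty$ (which you get from the geometric decay $\|P_X^t\|_\infty\le\rho^{\lfloor t/|X|\rfloor}$), and finish with a split-at-$t/2$ convolution estimate. This is more elementary and slightly more general: it does not require the error to be an exact matrix power, so it bypasses both auxiliary propositions. What the paper's route buys is a reusable general-purpose tool (Proposition~\ref{thm:matrixlimit}) stated for arbitrary $X,Y,Z$ with $X^t,Z^t\to\mathbf{0}$, at the cost of the Jordan-form machinery; your route keeps everything at the level of operator norms and a standard dominated-convergence-style splitting. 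The bookkeeping of the sign factor $\sigma^{t-1-i}$ recombining into $(I_X-\sigma P_X)^{-1}$, and the resulting even/odd alternation in the anti-balanced case, matches the paper's conclusion exactly.
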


{\color{black}
\begin{proof}
We discuss the convergence of $P_X^t$, $P_Z^t$, and $P_Y^{(t)}$ in
Eq.(\ref{eq:blockPst2}), respectively.

\noindent (1) We first prove that $P_X^t$ converges to $\textbf{0}$, i.e., 
$\lim_{t\rightarrow \infty}P_X^t=\textbf{0}$.

Since $G_X$ does not contain sink components, any node $i\in X$ has
a path to component $G_Z$.
Let $R_{iZ}$ be the shortest path from $i$ to some node in $Z$,
    and $Prob[R_{iZ}]$ denote the probability that a random walk starting
    from $i$ takes the path $R_{iZ}$.
Hence we denote
$$p=\min_{i\in X}Prob[R_{iZ}], \mbox{ and  } m=\max_{i\in X}|R_{iZ}|,$$
which implies that starting from any node $i\in X$, after $m$ steps
of random walk, there is at least probability $p$ that it reaches
component $G_Z$. Hence, we have $\|P_X^{m}\|_{\infty}\leq
(1-p) < 1$. Let $T=2m$, then for any $t>T$, we have
$$\|P_X^t\|_{\infty}=\|P_X^{\frac{t}{m}m}\|_{\infty}\leq (1-p)^{\lfloor\frac{t}{m}\rfloor}
    \leq (1-p)^{\frac{t}{T}},$$
which implies $\lim_{t\rightarrow\infty}\|P_X^t\|_{\infty}=0$, i.e.,
$\lim_{t\rightarrow\infty}P_X^t=\textbf{0}$.


\noindent (2) For subgraph $G_Z$, Lemma~\ref{thm:BalDig2} directly
yields
\begin{align}
\lim_{t\rightarrow\infty}P_Z^{t}&=\left\{
            \begin{array}{l l}
            \textbf{0}, & \text{Strictly unbalanced $G_Z$};\\
            \textbf{1}_{Z,S_Z}\pi_{Z,S_Z}^T, & \text{Balanced $G_Z$};\\
            \textbf{1}_{Z,S_Z}\pi_{Z,S_Z}^T, & \text{Anti-balanced $G_Z$, even $t$};\\
            -\textbf{1}_{Z,S_Z}\pi_{Z,S_Z}^T, & \text{Anti-balanced $G_Z$, odd $t$}.\\
            \end{array}\right.\label{eq:PZinfty}
\end{align}

\noindent (3) 
Below, we focus on proving the results on
$\lim_{t\rightarrow\infty} P_Y^{(t)}$ using
Proposition~\ref{thm:matrixlimit} 
in Appendix~\ref{sec:Matrixform}.

\noindent{When \emph{$G_Z$ is strictly unbalanced,}} 
from Lemma~\ref{thm:BalDig2} and (1) in this proof,
$\lim_{t\rightarrow\infty} P_X^{t}=\textbf{0}$ and
$\lim_{t\rightarrow\infty} P_Z^{t}=\textbf{0}$ hold, thus by
    Proposition~\ref{thm:matrixlimit} 
    in Appendix~\ref{sec:Matrixform}
$\lim_{t\rightarrow\infty}P_Y^{(t)}=\textbf{0}$.

\noindent{When \emph{$G_Z$ is balanced,}} {\color{black}
Lemma~\ref{thm:BalDig2} and Proposition~\ref{pro:exponentialofbarP}
in Appendix~\ref{sec:exponentialofbarP} directly yield
$(P_Z-\textbf{1}_{Z,S_Z}\pi^T_{Z,S_Z})^t=P_Z^t-\textbf{1}_{Z,S_Z}\pi^T_{Z,S_Z}$
for any integer $t>0$, and $\lim_{t\rightarrow\infty}
(P_Z-\textbf{1}_{Z,S_Z}\pi^T_{Z,S_Z})^t=\textbf{0}$, thus}
\begin{align}
& \lim_{t\rightarrow\infty}P_Y^{(t)}=
\lim_{t\rightarrow\infty}\sum_{i=0}^{t-1}P_X^iP_Y(P_Z^{t-1-i}-\textbf{1}_{Z,S_Z}\pi^T_{Z,S_Z}+\textbf{1}_{Z,S_Z}\pi^T_{Z,S_Z})\nonumber
\\
&
=\lim_{t\rightarrow\infty}\sum_{i=0}^{t-1}P_X^iP_Y(P_Z-\textbf{1}_{Z,S_Z}\pi^T_{Z,S_Z})^{t-1-i}+\lim_{t\rightarrow\infty}\sum_{i=0}^{t-2}P_X^iP_Y\textbf{1}_{Z,S_Z}\pi^T_{Z,S_Z}\nonumber
\\
&
=(I_X-P_X)^{-1}P_Y\textbf{1}_{Z,S_Z}\pi^T_{Z,S_Z}=u_b\pi^T_{Z,S_Z},\nonumber
\end{align}
where the first term in the second line being $\textbf{0}$
    is due to
    Proposition~\ref{thm:matrixlimit}~(ii) in Appendix~\ref{sec:Matrixform}.


\noindent{When \emph{$G_Z$ is anti-balanced,}} 
{\color{black}{applying Lemma~\ref{thm:BalDig2} and
Proposition~\ref{pro:exponentialofbarP} in
Appendix~\ref{sec:exponentialofbarP}, we have for any integer $t>0$,
$(P_Z+\textbf{1}_{Z,S_Z}\pi^T_{Z,S_Z})^t=P_Z^t-(-1)^t\textbf{1}_{Z,S_Z}\pi^T_{Z,S_Z}$,
    and $\lim_{t\rightarrow\infty}
    (P_Z+\textbf{1}_{Z,S_Z}\pi^T_{Z,S_Z})^t=\textbf{0}$ hold true,
    thus}}
\begin{align}
& \lim_{t\rightarrow\infty}P_Y^{(t)}=
\lim_{t\rightarrow\infty}\sum_{i=0}^{t-1}P_X^iP_Y(P_Z^{t-1-i}-(-1)^{t-1-i}(\textbf{1}_{Z,S_Z}\pi^T_{Z,S_Z}-\textbf{1}_{Z,S_Z}\pi^T_{Z,S_Z}))\nonumber
\\
&
=\lim_{t\rightarrow\infty}\sum_{i=0}^{t-1}P_X^iP_Y(P_Z+\textbf{1}_{Z,S_Z}\pi^T_{Z,S_Z})^{t-1-i}+\lim_{t\rightarrow\infty}\sum_{i=0}^{t-2}(-1)^{t-1-i}P_X^iP_Y\textbf{1}_{Z,S_Z}\pi^T_{Z,S_Z}\nonumber
\\
&
=(-1)^{t-1}\lim_{t\rightarrow\infty}\sum_{i=0}^{t-2}(-P_X)^iP_Y\textbf{1}_{Z,S_Z}\pi^T_{Z,S_Z}=(-1)^{t-1}(I_X+P_X)^{-1}P_Y\textbf{1}_{Z,S_Z}\pi^T_{Z,S_Z}\nonumber
\\
&=(-1)^{t-1}u_u\pi^T_{Z,S_Z}.\nonumber
\end{align}
Hence, we have for anti-balanced $G_Z$: $\lim_{t\rightarrow\infty}P_Y^{(2t)}
    =-u_u\hat{\pi}_{Z,S_Z}^T$, and
    $\lim_{t\rightarrow\infty}P_Y^{(2t+1)} =u_u\hat{\pi}_{Z,S_Z}^T$.
\end{proof}
}

{\color{black} \noindent\textbf{{Multiple sink components and
disconnected signed digraphs.}} When there exist $m>1$ ergodic sink
components, i.e., ${G}_{Z1}, {G}_{Z2}, \cdots, {G}_{Zm}$, the rest
of the graph ${G}$ is considered as ${G}_X$. Then the signed
transition matrix $P$ and $P^t$ can be written as
\begin{align}
\small P&=\left[
            \begin{array}{c:c:c:c}
            {P}_X & {P}_{Y1} & \cdots & {P}_{Ym} \\ \hdashline
            \textbf{0} & {P}_{Z1} & \textbf{0} & \textbf{0} \\ \hdashline
            \textbf{0} & \textbf{0} & \ddots & \textbf{0} \\ \hdashline
            \textbf{0} & \textbf{0} & \textbf{0} & {P}_{Zm}
            \end{array}\right],
P^t=\left[
            \begin{array}{c:c:c:c}
            {P}_X^t & {P}_{Y1}^{(t)} & \cdots & {P}_{Ym}^{(t)} \\ \hdashline
            \textbf{0} & {P}_{Z1}^t & \textbf{0} & \textbf{0} \\ \hdashline
            \textbf{0} & \textbf{0} & \ddots & \textbf{0} \\ \hdashline
            \textbf{0} & \textbf{0} & \textbf{0} & {P}_{Zm}^t
            \end{array}\right],\label{eq:Pmatmultisink}
\end{align}
where ${P}_{Yi}^{(t)}=\sum_{j=0}^{t-1} P_X^jP_{Yi}P_{Zi}^{t-1-j}$,
$1\leq i\leq m$. Hence, each sink ergodic component ${P}_{Zi}$ along
with ${P}_X$ and $P_{Yi}$ independently follows Lemma~\ref{thm:convergWeakly2}.
For disconnected signed digraph, with $m\geq1$ ergodic or weakly
connected components, each of which satisfies
Lemma~\ref{thm:BalDig2} or Lemma~\ref{thm:convergWeakly2},
respectively. For brevity, we omit the details here.

}

\subsection{Long-term dynamics}\label{sec:LTVotermodelDynamics}

Based on the structural balance classification and
    the convergence of signed transition matrix discussed above,
    we are ready now to analyze the long-term dynamics of the voter
    model on signed digraphs.
Formally, we are interested in characterizing $x_t$ with
$t\rightarrow\infty$,
i.e.,
\begin{align}
x&=\lim_{t\rightarrow \infty}x_t=\lim_{t\rightarrow
\infty}({P}^tx_0+(\sum_{i=0}^{t-1}{P}^i)g^-).\label{eq:signedlim2}
\end{align}

If the even and odd subsequences of $x_t$ converge separately, we
denote $x_e = \lim_{t\rightarrow \infty}x_{2t}, x_o =
\lim_{t\rightarrow \infty}x_{2t+1}$.

Before presenting the results on long-term dynamics of voter model,
we first introduce the following useful lemma connecting a signed
digraph $G$ with another graph $G'$ where all edge signs in $G$ are
negated.

\begin{lemmaa}[]\label{lem:GGp}
Given a signed digraph $G=(V,E,A)$, let $G'=(V,E,-A)$ be a signed
digraph with all edge signs negated from $G$. Then, for any initial
color distribution $x_0$, at any $2t$ steps ($t>0$), the color
distributions $x_{2t}(G)$ on $G$ and $x_{2t}(G')$ on $G'$ are
identical.
\end{lemmaa}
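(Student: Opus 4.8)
The plan is to apply the short-term formula of Proposition~\ref{thm:stdyn} to both $G$ and $G'$ and then collapse their difference to a single matrix identity. First I would record how $G'$ relates to $G$. The weighted out-degrees are unchanged, since $\sum_j|(-A)_{ij}|=\sum_j|A_{ij}|=d_i$, so $G'$ has the same diagonal matrix $D$, and its signed transition matrix is $P'=D^{-1}(-A)=-P$. Writing $-A=(A')^+-(A')^-$ and comparing with $-A=A^--A^+$ gives $(A')^+=A^-$ and $(A')^-=A^+$; hence the ``negative-fraction'' vector for $G'$ (the analogue of $g^-$) is $D^{-1}A^+\textbf{1}=:g^+$. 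Since $g^++g^-=D^{-1}(A^++A^-)\textbf{1}=D^{-1}\bar A\textbf{1}=D^{-1}d=\textbf{1}$, this analogue equals $\textbf{1}-g^-$.

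Next I would invoke Eq.~(\ref{eq:signed2}) at step $2t$ for $G'$. Using $(-P)^{2t}=P^{2t}$ and pairing consecutive terms, $(-P)^{2j}+(-P)^{2j+1}=P^{2j}(I-P)$, I obtain $x_{2t}(G')=P^{2t}x_0+\bigl(\sum_{j=0}^{t-1}P^{2j}(I-P)\bigr)(\textbf{1}-g^-)$. The $P^{2t}x_0$ term already agrees with $x_{2t}(G)$, so the lemma reduces to the identity $(I-P)(\textbf{1}-g^-)=(I+P)g^-$. This follows from one extra fact, $P\textbf{1}=D^{-1}(A^+-A^-)\textbf{1}=g^+-g^-=\textbf{1}-2g^-$: substituting it, $(I-P)(\textbf{1}-g^-)=\textbf{1}-g^--P\textbf{1}+Pg^-=g^-+Pg^-$. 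Plugging this back in, $x_{2t}(G')=P^{2t}x_0+\sum_{j=0}^{t-1}(P^{2j}+P^{2j+1})g^-=P^{2t}x_0+\bigl(\sum_{i=0}^{2t-1}P^i\bigr)g^-=x_{2t}(G)$, again by Proposition~\ref{thm:stdyn}.

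I do not expect a genuine obstacle; the only points needing care are identifying which matrix plays the role of $A^-$ in $G'$ and observing that the argument hinges on the step count being even --- this is exactly what makes $(-P)^{2t}=P^{2t}$ and lets the alternating sum pair up cleanly. As an independent sanity check I would also give a direct coupling argument: run the voter process on $G$ and on $G'$ with the same initial coloring and with each node making the same neighbor choice at every step (legitimate since the choice probabilities $|A_{ij}|/d_i$ coincide in $G$ and $G'$). Writing colors as $\{0,1\}$, a one-line induction on $t$ shows that in this coupling the two configurations satisfy $c_t^{G'}=c_t^{G}$ for even $t$ and $c_t^{G'}=\textbf{1}-c_t^{G}$ for odd $t$, the inductive step using only that each edge is positive in $G$ iff negative in $G'$; taking expectations at step $2t$ recovers the lemma (and, as a bonus, gives $x_{2t+1}(G')=\textbf{1}-x_{2t+1}(G)$ at odd steps).
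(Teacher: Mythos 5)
Your proof is correct and takes essentially the same route as the paper's: both rest on $P'=-P$, $g'^-=\textbf{1}-g^-$, and the fact $P\textbf{1}=\textbf{1}-2g^-$ (your identity $(I-P)(\textbf{1}-g^-)=(I+P)g^-$ is exactly the paper's two-step cancellation), with the only difference being that the paper verifies the two-step case and iterates while you pair terms in the $2t$-step sum directly. The coupling argument you add is a nice independent confirmation but is not needed.
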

{\color{black} \begin{proof}
 Let $P'=-P$ denote the signed transition matrix of $G'$, and denote
 the vector $g^-={D^{-1}A^-}\textbf{1}$ and
 $g'^-={D^{-1}(-A)^-}\textbf{1}={D^{-1}A^+}\textbf{1}$. Thus
 $g'^-=\textbf{1}-g^-$. By Eq.(\ref{eq:signed2}), after two steps, we
 have
 \begin{align}
 x_2(G') & = P'^2 x_0 + P' g'^- + g'^- = P^2 x_0 - P
 (\textbf{1}-g^-) + \textbf{1} - g^- = P^2 x_0 + P g^- + g^- = x_2(G),\nonumber
 \end{align}
 where the last equality uses facts $\textbf{1}=D^{-1}\bar{A}\textbf{1}$ and
     $P=D^{-1}A$.
 Since the lemma holds for two steps, then clearly it holds for all even steps.
 \end{proof}}

Next theorem discusses the case of ergodic signed digraphs.

\begin{theoremm}[]\label{thm:ltdyn}
Let $G=(V,E,A)$ be an ergodic signed digraph, we have
\begin{eqnarray}
\mbox{Balanced $G$:} &  \mbox{ $x
=\hat{\textbf{1}}_S\hat{\pi}_S^T(x_0-\frac{1}{2}\textbf{1})+\frac{1}{2}\textbf{1}$}
\label{eq:bx}\\
\mbox{Strictly unbalanced $G$:} &  \mbox{ $x =\frac {1}{2} {\textbf 1}$}\label{eq:stubx}\\
\mbox{Anti-balanced $G$:} &  \mbox{ $x_{e}=\hat{\textbf{1}}_S\hat{\pi}_S^T(x_0-\frac{1}{2}\textbf{1})+\frac{1}{2}\textbf{1} $}\label{eq:evenx}\\
 &  \mbox{ $x_{o}
=-\hat{\textbf{1}}_S\hat{\pi}_S^T(x_0-\frac{1}{2}\textbf{1})+\frac{1}{2}\textbf{1}$
}\label{eq:oddx}
\end{eqnarray}
%
\end{theoremm}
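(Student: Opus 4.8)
The plan is to start from the exact short-term formula of Proposition~\ref{thm:stdyn}, $x_t = P^t x_0 + \bigl(\sum_{i=0}^{t-1}P^i\bigr)g^-$, and rewrite it so that the partial sum $\sum_{i=0}^{t-1}P^i$ — which does \emph{not} converge when $P$ has an eigenvalue of modulus $1$, i.e. exactly in the balanced and anti-balanced cases — is eliminated by telescoping. The key algebraic observation is that $g^-$ can be expressed through $P$ itself: since $\bar A = A^+ + A^-$ gives $D^{-1}\bar A\textbf{1}=\textbf{1}$, and $A = A^+ - A^-$ gives $P\textbf{1}=D^{-1}A\textbf{1}=\textbf{1}-2g^-$, we get $g^- = \tfrac12(I-P)\textbf{1}$. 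Substituting this and using $\bigl(\sum_{i=0}^{t-1}P^i\bigr)(I-P)=I-P^t$, the formula collapses to the clean identity
\[
x_t = P^t\bigl(x_0 - \tfrac12\textbf{1}\bigr) + \tfrac12\textbf{1}, \qquad t\ge 0 .
\]
I regard this identity as the heart of the argument; once it is in hand, everything is an immediate application of Lemma~\ref{thm:BalDig2}.

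Given the identity, I would split into the three structural cases. When $G$ is balanced, Lemma~\ref{thm:BalDig2} gives $\lim_{t\to\infty}P^t=\hat{\textbf{1}}_S\hat{\pi}_S^T$, hence $x = \hat{\textbf{1}}_S\hat{\pi}_S^T(x_0-\tfrac12\textbf{1})+\tfrac12\textbf{1}$, which is Eq.~(\ref{eq:bx}). When $G$ is strictly unbalanced, $\lim_{t\to\infty}P^t=\textbf{0}$, so $x=\tfrac12\textbf{1}$, i.e. Eq.~(\ref{eq:stubx}). When $G$ is anti-balanced, I apply the identity along the even and odd subsequences separately: $\lim_{t\to\infty}P^{2t}=\hat{\textbf{1}}_S\hat{\pi}_S^T$ yields Eq.~(\ref{eq:evenx}) for $x_e$, and $\lim_{t\to\infty}P^{2t+1}=-\hat{\textbf{1}}_S\hat{\pi}_S^T$ yields Eq.~(\ref{eq:oddx}) for $x_o$.

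As a cross-check, and an alternative derivation of the anti-balanced even-step case, I would invoke Lemma~\ref{lem:GGp}: negating all edge signs turns an anti-balanced $G$ into a \emph{balanced} $G'$ with the same bipartition $S,\bar S$, and the lemma gives $x_{2t}(G)=x_{2t}(G')$; since $G'$ is balanced, $x_{2t}(G')\to \hat{\textbf{1}}_S\hat{\pi}_S^T(x_0-\tfrac12\textbf{1})+\tfrac12\textbf{1}$, recovering $x_e$. This also makes transparent why $x_o$ is the ``reflection'' of $x_e$ about $\tfrac12\textbf{1}$.

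I do not expect a serious obstacle: the only place demanding care is the derivation of $g^-=\tfrac12(I-P)\textbf{1}$ together with the remark that one cannot naively pass the limit inside $\sum_{i=0}^{t-1}P^i$ in the balanced/anti-balanced cases — it is precisely the telescoping with $I-P$ that makes the limit exist. It is also worth emphasizing in the write-up that, in contrast to the strictly unbalanced case, the balanced and anti-balanced steady states genuinely depend on the initial distribution $x_0$ through the rank-one operator $\hat{\textbf{1}}_S\hat{\pi}_S^T$.
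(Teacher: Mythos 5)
Your proposal is correct, and the central identity $x_t = P^t\bigl(x_0-\tfrac12\textbf{1}\bigr)+\tfrac12\textbf{1}$ is a genuinely different — and cleaner — route than the paper's. The paper keeps the partial sum $\sum_{i=0}^{t-1}P^i g^-$ and evaluates its limit separately in each case: for balanced $G$ it needs the identity $P^m-\hat{\textbf{1}}_S\hat{\pi}_S^T=(P-\hat{\textbf{1}}_S\hat{\pi}_S^T)^m$ (via Proposition~\ref{pro:exponentialofbarP}), the digraph flow-circulation law $\hat{\pi}_S^Tg^-=0$ to kill the divergent rank-one part, and an explicit verification that $(I-P+\hat{\textbf{1}}_S\hat{\pi}_S^T)^{-1}g^-=\tfrac12\textbf{1}-\tfrac12\hat{\textbf{1}}_S\hat{\pi}_S^T\textbf{1}$; for strictly unbalanced $G$ it computes $(I-P)^{-1}g^-=(D-A)^{-1}A^-\textbf{1}=\tfrac12\textbf{1}$ from $(D-A)\textbf{1}=2A^-\textbf{1}$. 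Your observation $g^-=\tfrac12(I-P)\textbf{1}$ is exactly that last fact in disguise, but by deploying it \emph{before} taking limits you telescope the sum to $\tfrac12(I-P^t)\textbf{1}$, so all three cases (and the even/odd subsequences) reduce to a single application of Lemma~\ref{thm:BalDig2}; in particular the circulation law and the inverse computation become unnecessary, and the convergence of $\sum_{i=0}^{t-1}P^ig^-$ despite $P$ having a unit-modulus eigenvalue is explained automatically rather than by a separate cancellation argument. Your cross-check for the anti-balanced even steps via Lemma~\ref{lem:GGp}, followed by one more application of the dynamics for the odd steps, is exactly what the paper does for that case. The only thing you give up is the paper's intermediate closed form for $\lim_t\sum_{i=0}^{t-1}P^ig^-$, which it reuses later in the proof of Theorem~\ref{thm:ltWeakly}; as a proof of Theorem~\ref{thm:ltdyn} itself, yours is complete and preferable.
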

\begin{proof}

We discuss the limit in Eq.~(\ref{eq:signedlim2}) for three possible
balance structures of $G$.

\noindent\textbf{Balanced digraphs.} 
{\color{black}{From Lemma~\ref{thm:BalDig2} and
Proposition~\ref{pro:exponentialofbarP} in
Appendix~\ref{sec:exponentialofbarP},}} it is easy to prove
$P^m-\hat{\textbf{1}}_S\hat{\pi}_S^T=(P-\hat{\textbf{1}}_S\hat{\pi}_S^T)^m$
for any integer $m>0$, which yields the following result on the
second part in Eq.~(\ref{eq:signedlim2}).
\begin{align}
\lim_{t\rightarrow \infty}&\sum_{i=0}^{t-1}{P}^ig^-=
    (I-{P}+\hat{\textbf{1}}_S\hat{\pi}_S^T)^{-1}g^-+
    \lim_{t\rightarrow \infty}\sum_{i=1}^{t-1}{\bf 1}_S\hat{\pi}_S^Tg^-\label{eq:balancedresults2}\\
&=(I-{P}+\hat{\textbf{1}}_S\hat{\pi}_S^T)^{-1}g^-=\frac{1}{2}\textbf{1}-\frac{1}{2}\hat{\textbf{1}}_S\hat{\pi}_S^T\textbf{1},\label{eq:balancedresults3}
\end{align}
where the last term of Eq.(\ref{eq:balancedresults2}) is canceled
out due to the digraph flow circulation
law~\cite{F-Green-cheeger,WAW2010Yanhuashort}, i.e.,
\begin{align}
\hat{\pi}_S^Tg^-& = \hat{\pi}_S^TD^{-1}A^-\textbf{1}=
\sum_{i\in S}{\pi}(i)\sum_{j\in \bar{S}}{\bar{P}}_{ij}-\sum_{i\in
\bar{S}}{\pi}(i)\sum_{j\in S}{\bar{P}}_{ij}=0.\nonumber
\end{align}
The last equality in Eq.(\ref{eq:balancedresults3}) holds because%
\begin{align}
&\frac{1}{2}(I-P+\hat{\textbf{1}}_S\hat{\pi}_S^T)(\textbf{1}-\hat{\textbf{1}}_S\hat{\pi}_S^T\textbf{1})-g^-=0.\nonumber
\end{align}
Eq.(\ref{eq:bx}) is obtained by combining Eq.(\ref{eq:balancedresults3}) with
    Lemma~\ref{thm:BalDig2}.

\noindent\textbf{Anti-balanced Digraphs.}
Lemma~\ref{lem:GGp} directly yields Eq.(\ref{eq:evenx}). The odd
step influence distribution sequence is obtained by
\begin{align}
&x_o= Px_e+g^-=
-\textbf{1}_{S}\hat{\pi}_S^T(x_0-\frac{1}{2}\textbf{1})+\frac{1}{2}\textbf{1}.\nonumber
\end{align}

\noindent\textbf{Strictly unbalanced digraphs.} From
Theorem~\ref{thm:BalDig2}, $\lim_{t\rightarrow
\infty}{P}^t=\textbf{0}$ holds and thus we have
\begin{align}
\lim_{t\rightarrow \infty}\sum_{i=0}^{t-1}{P}^ig^-&=(I-{P})^{-1}g^-
= (D-A)^{-1}A^-{\textbf 1} = \frac 12 {\textbf 1}.
\label{eq:strictlyunb}
\end{align}
The last equality comes from the facts $(D-A)\textbf{1} = 2A^- \textbf{1}$.
\end{proof}

Theorem~\ref{thm:ltdyn} has several implications. First of all, for
strictly unbalanced digraphs, each node has equal steady state
probability of being black or white, and it
    is not determined by the initial distribution $x_0$.
Secondly, anti-balanced digraphs has the same steady state
distribution as
    the corresponding balanced graph for even steps, and for odd steps,
    the distribution oscillates to the opposite ($x_o = \textbf{1}-x_e$).
Moreover, Eq.(\ref{eq:bx}) can also be intuitively explained from the random
    walk interpretation of the voter model. 
{\color{black} In particular, starting from node $i$, if we perform
a random walk
 for an infinite number of steps, the probability that
    the random walk stops at $j$
 is given by the
     stationary distribution $\pi(j)$.
 For balanced graphs, if $i$ and $j$ are from the same component (either
     $S$ or $\bar{S}$), then the random walk must pass an even number
     of negative edges, so $i$ takes the same color as $j$; if
     $i$ and $j$ are from opposite components, then the walk passes an
     odd number of negative edges and $i$ takes the opposite of $j$'s color.
 Thus, the steady distribution of $i\in S$ being white is given by
     $\pi_S^T x_{0S} + \pi_{\bar{S}}^T (\textbf{1}_{\bar{S}}-x_{0\bar{S}})$, and
     the case of $i\in \bar{S}$ is symmetric. Some algebra manipulations can lead us to Eq.(\ref{eq:bx}).}


For a balanced ergodic digraph $G$ with partition $S,\bar{S}$, it is
    easy to check that it has the following two
    equilibrium states: in one state all nodes in $S$ are white while
    all nodes in $\bar{S}$ are black; and in the other state all nodes
    in $S$ are black while all nodes in $\bar{S}$ are white.
We call these two states the {\em polarized states}.
Using random walk interpretation, we show in the following theorem that
    with probability $1$, the voter model dynamic converges to
    one of the above two equilibrium states.

\begin{theoremm}\label{thm:converge}
Given an ergodic signed digraph $G=(V,E,A)$, if $G$ is balanced with
partition $S,\bar{S}$,
    the voter model dynamic converges
    to one of the polarized states with probability $1$, and
    the probability of nodes in $S$ being white is
    $\hat{\pi}_S^T(x_0-\frac{1}{2}\textbf{1})+\frac{1}{2}$.
Similarly, if $G$ is anti-balanced, with probability $1$
    the voter model dynamic oscillates between
    the two polarized states eventually, and
    the probability of nodes in $S$ being white at even steps is
    $\hat{\pi}_S^T(x_0-\frac{1}{2}\textbf{1})+\frac{1}{2}$.
\end{theoremm}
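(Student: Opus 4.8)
\noindent\emph{Overall strategy.} The theorem upgrades the distributional limit $x=\lim_t x_t$ of Theorem~\ref{thm:ltdyn} to an almost-sure statement about the random colouring $\sigma_t\colon V\to\{+1,-1\}$ (write $+1$ for white), so the plan is to use the random-walk interpretation of the voter model to get pathwise convergence, and then to read the limiting probabilities off the formulas already proved. \emph{Step 1 (reduce the balanced case to the unsigned model).} Suppose $G$ is balanced with partition $S,\bar S$. The proof of Lemma~\ref{thm:BalDig2} already records $P=\hat I_S\bar P\hat I_S$; the reason this works globally is that in a balanced digraph every directed path between two fixed nodes has the same sign (two such paths differ by a cycle, which has an even number of negative edges). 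Hence if I relabel the colours of the nodes of $\bar S$, i.e. pass to $\tilde\sigma_t:=\hat I_S\sigma_t$, then the signed voter dynamics on $G$ becomes \emph{exactly} the unsigned voter dynamics on $\bar G$: writing $\xi_{v,t}$ for the out-neighbour copied by $v$ at step $t$, one checks $\tilde\sigma_t(v)=\tilde\sigma_{t-1}(\xi_{v,t})$, because the edge sign $\epsilon(v,\xi_{v,t})$ is precisely the correction introduced by the $\bar S$-relabelling. Under this relabelling the two polarized states of $G$ correspond to the two monochromatic configurations of $\bar G$.

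\emph{Step 2 (fixation of the unsigned voter model on the ergodic digraph $\bar G$).} I would then invoke the classical fact that the voter model on a finite ergodic digraph fixates at a monochromatic configuration almost surely. The standard route: the voter model is dual to a system of coalescing random walks on $\bar G$, so for each fixed $t$ one has $\mathrm{Prob}(\tilde\sigma_t\text{ monochromatic})\ge \mathrm{Prob}(\text{all length-}t\text{ dual walks have coalesced})$; the dual walks coalesce into a single walk in finite time almost surely, since the pair of walks, run until they meet, is a finite Markov chain whose off-diagonal states are transient — and this is exactly where ergodicity (strong connectivity \emph{and} aperiodicity) of $\bar G$ enters, making the diagonal reachable from every state (a directed $2$-cycle, being periodic, never fixates). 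Thus $\mathrm{Prob}(\tilde\sigma_t\text{ monochromatic})\to 1$; since the monochromatic configurations are absorbing, the events $\{\tilde\sigma_t\text{ monochromatic}\}$ are increasing in $t$, so with probability $1$ some $\tilde\sigma_t$ is monochromatic and $\tilde\sigma_t$ is constant from then on. Translating back through $\hat I_S$, the voter dynamic on the balanced $G$ converges almost surely to one of the two polarized states.

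\emph{Step 3 (identify the probability; then the anti-balanced case).} Let $\eta^*$ be the polarized state with $S$ white and $\bar\eta^*$ its opposite. By Step 2 the limit $\sigma_\infty:=\lim_t\sigma_t$ exists a.s. and lies in $\{\eta^*,\bar\eta^*\}$; for $i\in S$ we have $\{\sigma_\infty(i)\text{ white}\}=\{\sigma_\infty=\eta^*\}$, so bounded convergence gives $\mathrm{Prob}(\sigma_\infty=\eta^*)=\lim_t \mathrm{Prob}(\sigma_t(i)\text{ white})=\lim_t x_t(i)=x(i)$, and Theorem~\ref{thm:ltdyn} yields $x(i)=\hat\pi_S^T(x_0-\frac{1}{2}\textbf{1})+\frac{1}{2}$ for every $i\in S$ (the value being the same for all $i\in S$ is exactly the consistency needed for it to be the probability of a single configuration). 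For the anti-balanced case, note that $G'=(V,E,-A)$ is balanced with the same partition $S,\bar S$ and the same underlying graph $\bar G$, hence ergodic; running $G$ and $G'$ with the \emph{same} per-step choices $\xi_{v,t}$ and $\pm1$ colours, an induction on $t$ gives $\sigma_t(G')=(-1)^t\sigma_t(G)$ pointwise (this also reproves Lemma~\ref{lem:GGp}). Applying Steps 1--2 to $G'$ shows $\sigma_t(G')\to\eta^*$ or $\bar\eta^*$ a.s. with $\mathrm{Prob}(\sigma_\infty(G')=\eta^*)=\hat\pi_S^T(x_0-\frac{1}{2}\textbf{1})+\frac{1}{2}$; therefore $\sigma_{2t}(G)=\sigma_{2t}(G')$ converges a.s. to that polarized state, while $\sigma_{2t+1}(G)=-\sigma_{2t+1}(G')$ converges a.s. to the opposite one ($-\eta^*=\bar\eta^*$), i.e. the dynamic eventually oscillates between the two polarized states, with the probability of $S$ being white at even steps as claimed.

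\emph{Main obstacle.} The only genuinely non-routine ingredient is Step 2 — the almost-sure coalescence of the dual walks (equivalently, almost-sure absorption of the finite Markov chain in a consensus state), and in particular using aperiodicity correctly so that the pair-of-walks chain cannot remain off the diagonal forever. Once that is in hand, the remaining pieces — the $\hat I_S$-conjugation of Step 1, the sign-flip coupling for the anti-balanced case, and the bounded-convergence identification of the limiting probability from Theorem~\ref{thm:ltdyn} — are essentially bookkeeping.
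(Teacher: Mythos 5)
Your proof is correct and takes essentially the same route as the paper: both rest on the coalescing-random-walk duality of the voter model, the fact that two walks on an ergodic digraph meet with probability $1$ (with the sign parity along the walks forcing agreement or disagreement according to the partition $S,\bar S$), and reading the limiting probability off Theorem~\ref{thm:ltdyn}, Eq.(\ref{eq:bx}). Your gauge-transformation framing via $\hat I_S$, the pair-of-walks transience argument, and the explicit $(-1)^t$ coupling for the anti-balanced case are just a more detailed packaging of the paper's shorter argument.
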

\begin{proof}

{\color{black} Consider a balanced ergodic digraph $G$ with
partition $S,\bar{S}$.
 By ergodicity, given any two nodes $i$ and $j$,
  with probability $1$ the random walks starting from $i$ and $j$
  will meet eventually.
 If $i$ and $j$ are both in $S$, when the two walks meet at some
     node $u$, they both pass either an even number of
     negative edges (if $u\in S$) or an odd number of negative edges
     (if $u\in \bar{S}$).
 Therefore, $i$ and $j$ must be in the same color with probability
     $1$.
 If $i$ and $j$ are from different components $S$ and $\bar{S}$, a similar
     argument shows that they will have the opposite color with
     probability $1$.
 Therefore the final state
     is one of the two polarized states.
 The probability of nodes in $S$ being white is simply given by
     Theorem~\ref{thm:ltdyn}, Eq.(\ref{eq:bx}).
 The case of anti-balanced ergodic digraphs can be argued in
     a similar way.}
\end{proof}

Theorem~\ref{thm:ltWeakly} below introduces the long-term dynamics of the
weakly connected signed digraphs. We consider weakly connected $G$
with a single sink ergodic component $G_Z$, and use the same
notations as in Section~\ref{sec:balance}.

\begin{theoremm}[]\label{thm:ltWeakly}
Let $G=(V,E,A)$ be a weakly connected signed digraph with a single
sink component
    $G_Z$ and a non-sink component $G_X$.
The long-term white color distribution vector $x$ is expressed in
two parts:
\begin{align}
x^T = \lim_{t \rightarrow \infty} x_t^T = [ x_{XY}^T,
x_Z^T].\nonumber
\end{align}
where $x_Z$ is the limit of $x_{tZ}$ on $G_Z$
  with initial distribution $x_{0Z}$ and is given as in
  Theorem~\ref{thm:ltdyn}, and
    vector $x_{XY}$ is given below
    with respect to the balance structure of $G_Z$:
\begin{eqnarray}
\mbox{Balanced $G_Z$:} &  \mbox{
$x_{XY}=\frac{1}{2}\textbf{1}_X+u_b\hat{\pi}_{Z,S_Z}^T(x_{0Z}-\frac{1}{2}\textbf{1}_Z)$}
\nonumber \\
\mbox{Strictly unbalanced $G_Z$:} & \mbox{
$x_{XY}=\frac{1}{2}\textbf{1}_X$}
\nonumber\\
\mbox{Anti-balanced $G_Z$, even $t$:} &  \mbox{ $x_{XY,e}=\frac{1}{2}\textbf{1}_X-u_u\hat{\pi}_{Z,S_Z}^T(x_{0Z}-\frac{1}{2}\textbf{1}_Z)$}\nonumber\\
\mbox{Anti-balanced $G_Z$, odd $t$:} &  \mbox{
$x_{XY,o}=\frac{1}{2}\textbf{1}_X+u_u\hat{\pi}_{Z,S_Z}^T(x_{0Z}-\frac{1}{2}\textbf{1}_Z)$
},\nonumber
\end{eqnarray}
where $u_b$ and $u_u$ are defined in Eq.(\ref{eq:u_b}) and
Eq.(\ref{eq:u_u}).
%
\end{theoremm}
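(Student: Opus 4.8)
The plan is to start from the short-term identity $x_t = P^t x_0 + \big(\sum_{i=0}^{t-1}P^i\big)g^-$ of Proposition~\ref{thm:stdyn}, pass to the limit using the block structure Eq.(\ref{eq:blockPst2}), and handle the sink block $Z$ and the non-sink block $X$ separately, reducing the anti-balanced case to the balanced one via Lemma~\ref{lem:GGp}. The $Z$-block is immediate: the bottom block-row of Eq.(\ref{eq:blockPst2}) gives $x_{tZ} = P_Z^t x_{0Z} + \big(\sum_{i=0}^{t-1}P_Z^i\big)g^-_Z$, where $g^-_X$ and $g^-_Z$ denote the projections of $g^-$ onto $X$ and $Z$ and $g^-_Z$ coincides with the negative-out-degree fraction of $G_Z$ (because $Z$ is a sink), so the $Z$-coordinates evolve exactly as the voter model on $G_Z$ alone with initial distribution $x_{0Z}$ and $x_Z$ is given verbatim by Theorem~\ref{thm:ltdyn} applied to $G_Z$. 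It remains to compute $x_{XY}$.

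For balanced $G_Z$, let $Q := \lim_{t\to\infty}P^t$ be the limit matrix of Lemma~\ref{thm:convergWeakly2}. The key step is to verify $Q^2 = Q$ and $PQ = QP = Q$, which follow from the defining identity $(I_X - P_X)u_b = P_Y\hat{\textbf{1}}_{Z,S_Z}$ (Eq.(\ref{eq:u_b})), from $P_Z\hat{\textbf{1}}_{Z,S_Z} = \hat{\textbf{1}}_{Z,S_Z}$ and $\hat{\pi}_{Z,S_Z}^T P_Z = \hat{\pi}_{Z,S_Z}^T$ for balanced ergodic $G_Z$ (as in the proof of Lemma~\ref{thm:BalDig2}), and from $\hat{\pi}_{Z,S_Z}^T\hat{\textbf{1}}_{Z,S_Z} = 1$. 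Since $P$ and $Q$ commute and $Q$ is idempotent, a binomial expansion collapses to $(P-Q)^m = P^m - Q$ for every $m\ge 1$; hence $\lim_m (P-Q)^m = \textbf{0}$, so $I - P + Q$ is invertible with $(I-P+Q)^{-1} = \sum_{i\ge 0}(P-Q)^i$. Also $Qg^- = \textbf{0}$, since $\hat{\pi}_{Z,S_Z}^T g^-_Z = 0$ by the digraph circulation law inside the ergodic component $G_Z$ (the same cancellation used in Theorem~\ref{thm:ltdyn}). Writing $P^i = (P-Q)^i + Q$ for $i\ge 1$ and using $Qg^- = \textbf{0}$, the running sum telescopes to $\sum_{i=0}^{t-1}P^i g^- \to (I-P+Q)^{-1}g^-$, so $x = Qx_0 + (I-P+Q)^{-1}g^-$.

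To finish the balanced case I would read off the $X$-block: $I - P + Q$ is block upper-triangular with diagonal blocks $I_X - P_X$ and $I_Z - P_Z + \hat{\textbf{1}}_{Z,S_Z}\hat{\pi}_{Z,S_Z}^T$, so block inversion yields $x_{XY} = u_b\hat{\pi}_{Z,S_Z}^T x_{0Z} + (I_X - P_X)^{-1}\big(g^-_X + P_Y w_Z - u_b\hat{\pi}_{Z,S_Z}^T w_Z\big)$, where $w_Z := (I_Z - P_Z + \hat{\textbf{1}}_{Z,S_Z}\hat{\pi}_{Z,S_Z}^T)^{-1}g^-_Z = \tfrac12(\textbf{1}_Z - \hat{\textbf{1}}_{Z,S_Z}\hat{\pi}_{Z,S_Z}^T\textbf{1}_Z)$ by Eq.(\ref{eq:balancedresults3}). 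This collapses to $x_{XY} = \tfrac12\textbf{1}_X + u_b\hat{\pi}_{Z,S_Z}^T(x_{0Z} - \tfrac12\textbf{1}_Z)$ using $P\textbf{1} = \textbf{1} - 2g^-$ (whose $X$-row reads $2g^-_X + P_Y\textbf{1}_Z = (I_X - P_X)\textbf{1}_X$), the identity $(I_X - P_X)u_b = P_Y\hat{\textbf{1}}_{Z,S_Z}$, and $\hat{\pi}_{Z,S_Z}^T w_Z = 0$ (because $\hat{\pi}_{Z,S_Z}^T\textbf{1}_Z$ enters with coefficient $1 - \hat{\pi}_{Z,S_Z}^T\hat{\textbf{1}}_{Z,S_Z} = 0$). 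For strictly unbalanced $G_Z$, $Q = \textbf{0}$ and $\lim P^t = \textbf{0}$, so $x = (I-P)^{-1}g^-$ by the basic fact of Section~\ref{sec:motivation}, and the same identities together with $(I_Z-P_Z)^{-1}g^-_Z = \tfrac12\textbf{1}_Z$ (Eq.(\ref{eq:strictlyunb}) for $G_Z$) give $x_{XY} = \tfrac12\textbf{1}_X$. For anti-balanced $G_Z$, Lemma~\ref{lem:GGp} reduces even steps to $G'=(V,E,-A)$, in which $G_Z$ becomes a balanced sink with the same partition $S_Z,\bar S_Z$, the same $\pi_Z$, and $u_b' = -u_u$ (comparing Eq.(\ref{eq:u_b}) with Eq.(\ref{eq:u_u})); since $x_{2t}(G) = x_{2t}(G')$ this gives $x_{XY,e}$ as the balanced formula with $u_b \mapsto -u_u$, and the odd-step $x_{XY,o}$ follows from $x_o = Px_e + g^-$ restricted to $X$, again using $P\textbf{1} = \textbf{1} - 2g^-$ and $(I_X + P_X)u_u = P_Y\hat{\textbf{1}}_{Z,S_Z}$.

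The main obstacle is the balanced case: proving $PQ = QP = Q^2 = Q$ and then carrying the off-diagonal block of $(I-P+Q)^{-1}$ through to the exact coefficient $u_b\hat{\pi}_{Z,S_Z}^T(x_{0Z}-\tfrac12\textbf{1}_Z)$, since the final simplification hinges on the two cancellations $\hat{\pi}_{Z,S_Z}^T g^-_Z = 0$ and $\hat{\pi}_{Z,S_Z}^T w_Z = 0$ together with $P\textbf{1} = \textbf{1} - 2g^-$. Everything else is either a direct appeal to Lemma~\ref{thm:convergWeakly2}, Theorem~\ref{thm:ltdyn}, and Lemma~\ref{lem:GGp}, or routine block-matrix algebra.
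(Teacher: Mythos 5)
Your proposal is correct, and its overall architecture matches the paper's: block-decompose via Eq.(\ref{eq:blockPst2}), observe that the $Z$-coordinates evolve autonomously so $x_Z$ is given by Theorem~\ref{thm:ltdyn}, reduce the anti-balanced sink to the balanced one by Lemma~\ref{lem:GGp} (with $u_b'=-u_u$), and recover the odd-step formula from one application of $x_o = Px_e + g^-$. Where you genuinely diverge is in the central computation for a balanced sink. The paper keeps the off-diagonal block explicit, writing $P_Y^{(t)}=\sum_{i}P_X^iP_YP_Z^{t-1-i}$, and evaluates $\lim_m\sum_{t}P_Y^{(t)}g_Z^-$ by an interchange of summation (its Eq.(\ref{eq:indexchange})) that reduces the double sum to $\sum_t P_X^tP_Y\sum_iP_Z^ig_Z^-$, into which it substitutes the ergodic-case values of $\sum_iP_Z^ig_Z^-$; the cross term $\lim_tP_Y^{(t)}x_{0Z}$ is read off from Lemma~\ref{thm:convergWeakly2}. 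You instead lift the trick the paper uses only in the ergodic case of Theorem~\ref{thm:ltdyn} to the full block matrix: setting $Q=\lim_tP^t$, you verify $PQ=QP=Q^2=Q$ (which does hold — the off-diagonal check reduces exactly to $(I_X-P_X)u_b=P_Y\hat{\textbf{1}}_{Z,S_Z}$ and to $P_Z\hat{\textbf{1}}_{Z,S_Z}=\hat{\textbf{1}}_{Z,S_Z}$, $\hat{\pi}_{Z,S_Z}^TP_Z=\hat{\pi}_{Z,S_Z}^T$), obtain $(P-Q)^m=P^m-Q$ and hence $x=Qx_0+(I-P+Q)^{-1}g^-$, and finish by block-inverting the upper-triangular matrix $I-P+Q$. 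I checked the resulting cancellations ($\hat{\pi}_{Z,S_Z}^Tg_Z^-=0$, $\hat{\pi}_{Z,S_Z}^Tw_Z=0$, and the $X$-row of $P\textbf{1}=\textbf{1}-2g^-$) and they deliver exactly $x_{XY}=\frac{1}{2}\textbf{1}_X+u_b\hat{\pi}_{Z,S_Z}^T(x_{0Z}-\frac{1}{2}\textbf{1}_Z)$. What your route buys is that it sidesteps the double-sum index manipulation and the appeal to Proposition~\ref{thm:matrixlimit}, replacing them with a purely algebraic telescoping of the geometric series; the price is the extra (but routine) verification that the limit matrix is idempotent and commutes with $P$, plus a block inversion. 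Both arguments rest on the same underlying facts, and yours is a valid alternative proof.
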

{\color{black} \begin{proof}
 Let initial distribution $x_0^T=[x_{0X}^T,x_{0Z}^T]$ and
 ${g^-}^T=[{g^-_X}^T,{g^-_Z}^T]$. When $t\rightarrow \infty$,
 Eq.~(\ref{eq:signed2}) can be written as
 $$x^T=\lim_{t\rightarrow
 \infty}(P^tx_0)^T=[x_{XY}^T, x_Z^T]=[x_X^T+x_Y^T, x_Z^T],$$
 where
 $x_X=\lim_{t\rightarrow\infty}(P_X^tx_{0X}+\sum_{i=0}^{t-1}P_X^ig_X^-)$,
 $x_Y=\lim_{t\rightarrow\infty}(P_Y^{(t)}x_{0Z}+\sum_{i=0}^{t-1}P_Y^{(i)}g_Z^-)$,
 and
 $x_Z=\lim_{t\rightarrow\infty}(P_Z^tx_{0Z}+\sum_{i=0}^{t-1}P_Z^ig_Z^-)$.

 From Lemma~\ref{thm:convergWeakly2}, $\lim_{t\rightarrow
 \infty}P_X^t=\textbf{0}$, thus $x_X=(I_X-P_X)^{-1}g_X^-$ holds for
 any ergodic $G_Z$. Since $G_Z$ is ergodic, $x_Z$ follows
 Theorem~\ref{thm:ltdyn}. Below we will focus on deriving $x_Y$,
 where the first part of $x_Y$ satisfies
 Lemma~\ref{thm:convergWeakly2}, i.e.,
 \begin{align}
 &\lim_{t\rightarrow\infty}P_Y^{(t)}x_{0Z}=\left\{
 \begin{array}{c c}
             \textbf{0} & G_Z \text{\ is strictly unbalanced} \\
             u_b\hat{\pi}_{Z,S_Z}^Tx_{0Z} & G_Z \text{\ is balanced}\\
             -u_u\hat{\pi}_{Z,S_Z}^Tx_{0Z} & G_Z \text{\ is anti-balanced, even $t$}\\
             u_u\hat{\pi}_{Z,S_Z}^Tx_{0Z} & G_Z \text{\ is anti-balanced, odd $t$.}
             \end{array}\right.\nonumber
 \end{align}
 The second part of $x_Y$ can be further written down as
 \begin{align}
 &\lim_{\substack{m\rightarrow \infty}}\sum_{t=1}^{m}P_Y^{(t)}g^-_Z=
 \lim_{\substack{m\rightarrow
 \infty}}\sum_{t=0}^{m-1}\sum_{i=0}^{t}(P_X^{t-i}P_Y
 P_Z^{i})g^-_Z\nonumber \\
 &=\lim_{\substack{m\rightarrow\infty}}\sum_{t=0}^{m-1}\sum_{i=0}^{m-t}(P_X^{t}P_Y
 P_Z^{i})g^-_Z=\sum_{t=0}^{\infty}(P_X^{t}P_Y
 \sum_{i=0}^{\infty}P_Z^{i})g^-_Z\label{eq:indexchange}
 \end{align}
 Now we discuss Eq.(\ref{eq:indexchange}) under different balance
 structures of $G_Z$.

 \noindent(1) \textbf{$G_Z$ is strictly unbalanced.}
 From Lemma~\ref{thm:convergWeakly2},
     $\lim_{\substack{t\rightarrow \infty}}P^t=\textbf{0}$.
 Then by Eq.(\ref{eq:strictlyunb}) we directly
     obtain that $x_{XY}=\frac{1}{2}\textbf{1}_X$.
 Applying
 Eq.(\ref{eq:strictlyunb}) to $\sum_{i=0}^{\infty}P_Z^{i}g^-_Z$ in
     Eq.(\ref{eq:indexchange}), we have
 \begin{align}
 &\lim_{\substack{m\rightarrow
 \infty}}\sum_{t=1}^{m}P_Y^{(t)}g^-_Z=\frac{1}{2}(I_X-P_X)^{-1}P_Y\textbf{1}_Z.
     \nonumber
 \end{align}
 Thus, we obtain the following equation:
 \begin{align}
     x_{XY}&=x_{X}+x_{Y}=(I_X-P_X)^{-1}(g_X^-+\frac{1}{2}P_Y\textbf{1}_Z)=\frac{1}{2}\textbf{1}_X.\nonumber
 \end{align}

 \noindent(2) \textbf{$G_Z$ is balanced.} Using
 Eq.(\ref{eq:balancedresults3}), we have
 \begin{align}
 &\lim_{\substack{m\rightarrow \infty}}\sum_{t=1}^{m}P_Y^{(t)}g^-_Z
 =\frac{1}{2}(I_X-P_X)^{-1}P_Y(\textbf{1}_Z-\hat{\textbf{1}}_{Z,S_Z}\hat{\pi}_{Z,S_Z}^T\textbf{1}_Z)\nonumber
 \end{align}
 Hence, we have
 \begin{align}
 x_{XY}&=(I_X-P_X)^{-1}(g_X^-+\frac{1}{2}P_Y\textbf{1}_Z) +
     u_b\hat{\pi}_{Z,S_Z}^T(x_{0Z}-\frac{1}{2}\textbf{1}_Z)
=\frac{1}{2}\textbf{1}_X+u_b\hat{\pi}_{Z,S_Z}^T(x_{0Z}-\frac{1}{2}\textbf{1}_Z)\label{eq:posbal}
 \end{align}

 \noindent(3) \textbf{$G_Z$ is anti-balanced.}
 Using Lemma~\ref{lem:GGp}, we can negate the signs of
     all edges in $G$ so that the sink becomes balanced.
 Hence, we know that at even
 steps in long term,
 \begin{align}
 x_{XY,e}&=\frac{1}{2}\textbf{1}_X-u_u\hat{\pi}_{Z,S_Z}^T(x_{0Z}-\frac{1}{2}\textbf{1}_Z),\label{eq:negbaleven}
 \end{align}
 where Eq.(\ref{eq:negbaleven}) and Eq.(\ref{eq:posbal}) are
 identical in the sense that $P_X$'s and $P_Y$'s in Eq.(\ref{eq:negbaleven}) and
 Eq.(\ref{eq:posbal}) have opposite signs. Moreover, the odd step
 influence distribution sequence is obtained
 \begin{align}
 x_{XY,o}&=P_X x_{XY,e}+ P_Y x_{Z,e} + g_X^-
=\frac{1}{2}\textbf{1}_X+u_{u}\hat{\pi}_{Z,S_Z}^T(x_{0Z}-\frac{1}{2}\textbf{1}_Z).
 \end{align}
 \end{proof}
}

Theorem~\ref{thm:ltWeakly} characterizes the long-term dynamics when
the underlying graph is a weakly connected signed digraph with one
ergodic sink component. We can see that the results for balanced and
anti-balanced sink components
    are more complicated than the ergodic digraph
    case, since how non-sink components are connected to the sink
    subtly affects the final outcome of the steady state behavior.
In steady state,
while the sink component is still in one of the two polarized states as
    stated in Theorem~\ref{thm:converge}, the non-sink components exhibit
    more complicated color distribution, for which we provide
    probability characterizations in Theorem~\ref{thm:ltWeakly}.
{\color{black} Using Eq.(\ref{eq:Pmatmultisink}),
Theorem~\ref{thm:ltdyn} and Theorem~\ref{thm:ltWeakly} can be
readily extended to the
    case with more than one ergodic sink components
    and disconnected digraphs.} When the network only contains
    positive directed edges, the voter model dynamics can be
    interpreted using digraph random walk
    theory~\cite{li2012ToN,li2010random,li2010randomWAW,li12IM}.

\section{Influence maximization}\label{sec:max}

With the detailed analysis on voter model dynamics for signed digraphs,
    we are ready now to solve the influence maximization
    problem.
Intuitively, we want to address the following question:
\noindent{\emph{If only at most $k$ nodes could be selected initially
            and be turned white while all other nodes are black,
how should we choose seed nodes so as to maximize the expected
number of white nodes in short term and in long term,
respectively?}}

\subsection{Influence maximization problem}

\noindent\textbf{Influence maximization objectives.} We consider two
types of short-term influence objectives, one is the \emph{{instant
influence}}, which counts the total number of influenced nodes at a
step $t>0$; the other is the \emph{{average influence}}, which takes
the average number of influenced nodes within the first $t$ steps.
These two objectives have different implications and applications.
For example, political campaigns try to convince voters who may
change their minds back and forth, but only the voters' opinions on
the voting day are counted, which matches the \emph{instant
influence}. On the other hand, a credit card company would like to
have customers keep using its credit card service as much as
possible, which is better interpreted by the \emph{average
influence}. When $t$ is sufficiently large, it becomes the long-term
objective,
    and long-term average influence
    coincides with long-term instant influence when the dynamic converges.

Formally, we define the {\emph{short-term instant influence}
$f_t(x_0)$ and the \emph{short-term average influence}
$\bar{f}_t(x_0)$ as follows:
\begin{align}
f_t(x_0)&:=\textbf{1}^T x_t(x_0) \mbox{ and }
\bar{f}_t(x_0):=\frac{\sum_{i=0}^t{f_i(x_0)}}{t+1}.\label{eq:objsta}
\end{align}
Moreover, we define {\emph long term influence} as
\begin{align}
f(x_0)&:=\lim_{t\rightarrow
\infty}\frac{\sum_{i=0}^t{f_i(x_0)}}{t+1}.\label{eq:objlto}
\end{align}
Note that when the dynamic converges (e.g. ergodic
    balanced or ergodic strictly unbalanced
    graphs), $f(x_0)=\lim_{t\rightarrow\infty} f_t(x_0)$.
For ergodic anti-balanced graphs (or sink components), it is
essentially
    the average of even- and odd-step limit influence.

Given a set $W\subseteq V$,
Let $e_W$ be the vector in which $e_{W}(j)=1$ if $j\in W$
and $e_{W}(j)=0$ if $j\not\in W$, which represents the initial seed
distribution with only nodes in $W$ as white seeds.
Let $e_i$ be the shorthand of $e_{\{i\}}$.
Unlike unsigned graphs, if initially no white seeds are selected on
a signed digraph $G$, i.e., $x_0=\textbf{0}$, the instant influence
$f_t(\textbf{0})$ at step $t$ is in general non-zero, which is
referred to as the \emph{ground influence} of the graph $G$ at $t$.
The influence contribution of a seed set $W$ does not count such
ground influence, as shown in definition~\ref{dfr:con}.
\begin{dfr}[Influence contribution]\label{dfr:con}
The \emph{instant influence contribution} of a
seed set $W$ to the $t$-th step instant influence objective,
    denoted by $c_t(W)$, is the
difference between the instant influence at step $t$ with only nodes
in $W$ selected as seeds and the ground influence at step $t$:
$c_t(W)= f_t(e_W)-f_t(\textbf{0})$. The {\em average influence
contribution} $\bar{c}_t(W)$ and
    {\em long-term influence contribution} $c(W)$ are defined in the same way:
$\bar{c}_t(W)= \bar{f}_t(e_W)-\bar{f}_t(\textbf{0})$ and
$c(W)= f(e_W)-f(\textbf{0})$.
\end{dfr}

We are now ready to formally define the influence maximization problem.

\begin{dfr}[Influence maximization]
The {\em influence maximization} problem for short-term instant influence
    is finding a seed set $W$ of at most $k$ seeds that maximizes
    $W$'s instance influence contribution at step $t$, i.e.,
    finding $W^*_t=\arg\max_{|W|\le k} c_t(W)$.
Similarly, the problem for average influence and long-term influence is
    finding $\bar{W}^*_{t}=\arg\max_{|W|\le k} \bar{c}_t(W)$ and
    $W^*=\arg\max_{|W|\le k} c(W)$, respectively.
\end{dfr}

We now provide some properties of influence contribution, which
    lead to the optimal seed selection rule.
By Eq.(\ref{eq:signed2}), we have
\begin{align}
c_t(W)&=f_t(e_W)-f_t(\textbf{0})= \textbf{1}^T x_t(e_W)-
\textbf{1}^T x_t(\textbf{0})= \textbf{1}^T P^t e_W. \label{eq:stimc}
\end{align}%
Let $c_t(i)$ be the shorthand of $c_t(\{i\})$, and let $c_t=[c_t(i)]$
    denote the vector of influence contribution of individual nodes.
    Then
$c_t^T=[c_t(i)]^T=\textbf{1}^T P^t$.
When $t\rightarrow \infty$, the long term influence contributions of
    individual nodes are obtained as a vector $c$:
\begin{align}%
c^T&=\lim_{t\rightarrow\infty}\frac{\sum_{i=0}^t c_i^T}{t+1}
    =\lim_{t\rightarrow\infty} \frac{\textbf{1}^T\sum_{i=0}^t P^i}{t+1}.
    \label{eq:stimcMc}
\end{align}
When $P^t$ converges, we simply have
\begin{align}%
c^T&=\textbf{1}^T \lim_{t\rightarrow\infty}  P^t.
\label{eq:stimcMc2}
\end{align}%

Lemma~\ref{lem:linear} below discloses the important property that
    the influence contribution is a linear set function.
\begin{lemmaa}\label{lem:linear}
Given a white seed set $W$, $c_t(W)=\sum_{i\in W}c_t(i)$,
    $\bar{c}_t(W)=\sum_{i\in W}\bar{c}_t(i)$, and $c(W)=\sum_{i\in W}c(i)$.
\end{lemmaa}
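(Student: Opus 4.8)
The plan is to show that each of the three influence‑contribution quantities is a linear function of the seed set, and that in all three cases this follows immediately from the linearity of the matrix expressions derived in Eq.(\ref{eq:stimc}), Eq.(\ref{eq:stimcMc}), and the definition of $\bar{c}_t$. The key observation is that for any seed set $W$ we have $e_W=\sum_{i\in W}e_i$, since the indicator vector of $W$ is the sum of the unit vectors of its elements (the sets $\{i\}$ are disjoint, so there is no double counting).

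First I would handle the instant influence contribution. By Eq.(\ref{eq:stimc}), $c_t(W)=\textbf{1}^T P^t e_W$. Substituting $e_W=\sum_{i\in W}e_i$ and using linearity of matrix–vector multiplication gives
\begin{align}
c_t(W)=\textbf{1}^T P^t \Big(\sum_{i\in W}e_i\Big)=\sum_{i\in W}\textbf{1}^T P^t e_i=\sum_{i\in W}c_t(i).\nonumber
\end{align}
Next, for the average influence contribution, I would recall from Definition~\ref{dfr:con} that $\bar{c}_t(W)=\bar{f}_t(e_W)-\bar{f}_t(\textbf{0})$, and from Eq.(\ref{eq:objsta}) that $\bar{f}_t(x_0)=\frac{1}{t+1}\sum_{j=0}^t f_j(x_0)$. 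Hence
\begin{align}
\bar{c}_t(W)=\frac{1}{t+1}\sum_{j=0}^t\big(f_j(e_W)-f_j(\textbf{0})\big)=\frac{1}{t+1}\sum_{j=0}^t c_j(W)=\frac{1}{t+1}\sum_{j=0}^t\sum_{i\in W}c_j(i)=\sum_{i\in W}\bar{c}_t(i),\nonumber
\end{align}
where the third equality uses the instant‑influence case just proved and the last equality swaps the two finite sums.

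Finally, for the long‑term influence contribution $c(W)$, I would take $t\to\infty$ in the average‑influence identity: since $\bar{c}_t(W)=\sum_{i\in W}\bar{c}_t(i)$ for every finite $t$ and $W$ is a fixed finite set, passing to the limit on both sides gives $c(W)=\lim_{t\to\infty}\bar{c}_t(W)=\sum_{i\in W}\lim_{t\to\infty}\bar{c}_t(i)=\sum_{i\in W}c(i)$, where the limit commutes with the finite sum. (Equivalently one can argue directly from Eq.(\ref{eq:stimcMc}): $c^T=\lim_{t\to\infty}\frac{\textbf{1}^T\sum_{i=0}^tP^i}{t+1}$ is a fixed row vector, so $c(W)=c^T e_W=c^T\sum_{i\in W}e_i=\sum_{i\in W}c^T e_i=\sum_{i\in W}c(i)$, and one only needs to know that this limit exists, which is guaranteed by Lemmas~\ref{thm:BalDig2} and~\ref{thm:convergWeakly2} together with the averaging in the anti‑balanced oscillating case.) There is no real obstacle here; the only point requiring any care is justifying that the $t\to\infty$ limit defining $c$ exists so that $c^T$ is a well‑defined vector, but this is exactly what the convergence analysis of the previous section established, so the proof is essentially a one‑line consequence of linearity.
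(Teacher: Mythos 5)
Your proof is correct and follows the same route as the paper: the instant-influence case is handled identically via $c_t(W)=\textbf{1}^T P^t e_W$ and $e_W=\sum_{i\in W}e_i$, and the paper then simply asserts that the linearity of $\bar{c}_t$ and $c$ follows from that of $c_t$, which is exactly what you spell out by exchanging finite sums and passing to the limit. No issues.
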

\begin{proof}
From Eq.(\ref{eq:stimc}), we have
\begin{align}
c_t(W)&=\textbf{1}^T P^t e_W = \textbf{1}^T P^t\sum_{i\in W}e_i
    = \sum_{i\in W}\textbf{1}^T P^te_i = \sum_{i\in W}c_t(i).
\nonumber
\end{align}%
The linearity of $\bar{c}_t$ and $c$ can be derived from that of $c_t$.
\end{proof}


Given a vector $v$, let $n^+(v)$ denote the number of
    positive entries in $v$.
    By applying
Lemma~\ref{lem:linear}, we have the optimal seed selection rule for
instant influence maximization as follows.

\noindent{\textbf{Optimal seed selection rule for instant influence
maximization.}} \emph{Given a signed digraph and a limited budget
$k$, selecting top $\min\{k,n^+(c_t)\}$ seeds with the highest
$c_t(i)$'s, $i \in V$, leads to the maximized instant influence at
step $t>0$.}

Note that the influence contributions of some nodes may be negative and
    these nodes should not be selected as white seeds, and thus
    the optimal solution may have less than $k$ seeds.
The rules for average influence maximization and long-term influence
    maximization are patterned in the same way.
Therefore, the central task now becomes the computation of
    the influence contributions of individual nodes.
Below, we will introduce our SVIM algorithm, for Signed Voter model
    Influence Maximization.


\subsection{Short-term influence maximization}


By applying Definition~\ref{dfr:con} and Lemma~\ref{lem:linear}, we
develop SVIM-S algorithm to solve the short-term instant and
average influence maximization problem, as shown in Algorithm~\ref{alg:stmax}.
\begin{algorithm}[!htb]
\caption {Short-term influence maximization SVIM-S} \label{alg:short}
\begin{algorithmic}[1]
    \STATE {\textbf{INPUT:} Signed transition matrix $P$, short-term period $t$, budget $k$;}
    \STATE {\textbf{OUTPUT:} White seed set $W$.}
    \STATE {$c_t=\textbf{1}$; $\bar{c}_t=\textbf{1}$;}
    \FOR {$i=1:t$}
        \STATE {$c_t^T=c_t^T P$;(for instant influence maximization.)}
        \STATE {$\bar{c}_t=\bar{c}_t+c_t$; (for average influence maximization.)}
    \ENDFOR
    \STATE {$W=$ top $\min\{k,n^+(c_t)\}$ (resp. $\min\{k,n^+(\bar{c}_t)\}$)
    nodes with the highest $c_t(i)$ (resp. $\bar{c}_t(i)$) values,
    for instant (resp. average) influence maximization.}
\end{algorithmic}
\label{alg:stmax}
\end{algorithm}\vspace*{-0.1cm}

SVIM-S algorithm requires $t$ vector-matrix multiplications, each of
which takes $|E|$ times entry-wise multiplication operations. Hence
the total time complexity of SVIM-S is $O(t\cdot|E|)$.


\subsection{Long-term influence maximization}

We now study the long-term influence contribution $c$ and introduce
the corresponding
    influence maximization algorithm SVIM-L.
We will see that the computation of influence contribution $c$ and seed
    selection schemes depends on the structural balance and connectedness
    of the graph.
While seed selection for balanced ergodic digraphs still has intuitive
    explanations, the computation for weakly connected and disconnected digraphs
    is more involved and less intuitive.


\subsubsection{Case of ergodic signed digraphs}

When the signed digraph $G=(V,E,A)$ is ergodic,
Lemma~\ref{thm:ltinfmax} below characterizes the long-term influence
contributions of nodes, with respect to various balance structures.
\begin{lemmaa}[]\label{thm:ltinfmax}
Consider an ergodic signed digraph $G=(V,E,A)$. If $G$ is
\emph{balanced}, with bipartition $S$ and $\bar{S}$, the influence
contribution vector $c=(|S|-|\bar{S}|)\hat{\pi}_S$. If $G$ is
\emph{anti-balanced} or \emph{strictly unbalanced}, $c=\textbf{0}$.
\end{lemmaa}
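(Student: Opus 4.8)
The plan is to reduce everything to the convergence results for the signed transition matrix $P$ already established in Lemma~\ref{thm:BalDig2}, combined with the expression for the long-term influence contribution vector in Eq.(\ref{eq:stimcMc}), namely $c^T=\lim_{t\rightarrow\infty}\frac{\textbf{1}^T\sum_{i=0}^t P^i}{t+1}$ (which simplifies to $c^T=\textbf{1}^T\lim_{t\rightarrow\infty}P^t$ in Eq.(\ref{eq:stimcMc2}) whenever $P^t$ converges).

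First I would treat the balanced case. Since $G$ is ergodic and balanced, Lemma~\ref{thm:BalDig2} gives $\lim_{t\rightarrow\infty}P^t=\hat{\textbf{1}}_S\hat{\pi}_S^T$; in particular $P^t$ converges, so by Eq.(\ref{eq:stimcMc2}) we have $c^T=\textbf{1}^T\lim_{t\rightarrow\infty}P^t=\textbf{1}^T\hat{\textbf{1}}_S\hat{\pi}_S^T$. It then remains to observe the scalar identity $\textbf{1}^T\hat{\textbf{1}}_S=|S|-|\bar{S}|$, which is immediate from the definition $\hat{\textbf{1}}_S(i)=1$ for $i\in S$ and $\hat{\textbf{1}}_S(i)=-1$ for $i\notin S$. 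Substituting this scalar back gives $c^T=(|S|-|\bar{S}|)\hat{\pi}_S^T$, i.e.\ $c=(|S|-|\bar{S}|)\hat{\pi}_S$.

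Next, the strictly unbalanced case is even simpler: Lemma~\ref{thm:BalDig2} gives $\lim_{t\rightarrow\infty}P^t=\textbf{0}$, so Eq.(\ref{eq:stimcMc2}) immediately yields $c^T=\textbf{1}^T\textbf{0}=\textbf{0}^T$. For the anti-balanced case $P^t$ itself does not converge --- the even and odd subsequences tend to $\hat{\textbf{1}}_S\hat{\pi}_S^T$ and $-\hat{\textbf{1}}_S\hat{\pi}_S^T$ respectively --- so here one must use the time-averaged form of Eq.(\ref{eq:stimcMc}) rather than Eq.(\ref{eq:stimcMc2}). Pairing the $2i$-th and $(2i+1)$-th terms of $\sum_{i=0}^{t}P^i$, each such pair contributes a matrix tending to $\hat{\textbf{1}}_S\hat{\pi}_S^T-\hat{\textbf{1}}_S\hat{\pi}_S^T=\textbf{0}$, so after normalizing by $t+1$ the partial sums tend to $\textbf{0}$, hence $c^T=\textbf{0}^T$ and $c=\textbf{0}$.

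The computation is routine; the only step needing genuine care is the anti-balanced case, where one has to justify that the time-average of a bounded matrix sequence whose even and odd subsequences converge to opposite limits is the zero matrix --- a short $\varepsilon$-argument, or simply grouping consecutive terms in pairs and bounding the single possible leftover term, whose normalized contribution is $O(1/t)$. Everything else is a direct substitution of the limits from Lemma~\ref{thm:BalDig2} followed by the elementary count $\textbf{1}^T\hat{\textbf{1}}_S=|S|-|\bar{S}|$.
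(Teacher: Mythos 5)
Your proposal is correct and follows essentially the same route as the paper: Lemma~\ref{thm:BalDig2} plugged into Eq.(\ref{eq:stimcMc2}) for the balanced and strictly unbalanced cases (with the same count $\textbf{1}^T\hat{\textbf{1}}_S=|S|-|\bar{S}|$), and the time-averaged form Eq.(\ref{eq:stimcMc}) for the anti-balanced case, where the paper simply writes $c^T=\textbf{1}^T(\lim_t P^{2t}+\lim_t P^{2t+1})/2=\textbf{0}$. Your extra care in justifying that the Ces\`aro average of the oscillating sequence equals the average of the two subsequential limits is a welcome bit of rigor the paper leaves implicit, but it is the same argument.
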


\begin{proof}
\noindent(1) When $G$ is balanced, by Lemma~\ref{thm:BalDig2} and
    Eq.(\ref{eq:stimcMc2}),
\begin{align}
    c^T&=\textbf{1}^T \lim_{t\rightarrow \infty}P^t=\textbf{1}^T \hat{\textbf{1}}_S\hat{\pi}_S^T=(|S|-|\bar{S}|)\hat{\pi}_S.\nonumber
\end{align}
%

\noindent(2) When $G$ is strictly unbalanced, again by Lemma~\ref{thm:BalDig2}
    and Eq.(\ref{eq:stimcMc2}), we have
     $c^T=\textbf{1}^T \lim_{t\rightarrow \infty}P^t=\textbf{0}$.


\noindent(3) When $G$ is
anti-balanced, by Lemma~\ref{thm:BalDig2}
    and Eq.(\ref{eq:stimcMc}), we have
\begin{align}
    c^T&=\textbf{1}^T \frac{\lim_{t\rightarrow \infty}P^{2t}+
    \lim_{t\rightarrow \infty}P^{2t+1}}{2}
    =\textbf{0}.\nonumber
\end{align}
\end{proof}

Based on Lemma~\ref{thm:ltinfmax}, Algorithm~\ref{alg:strongly}
summarizes how to compute the long-term influence contribution $c$
on ergodic signed digraphs.


\begin{algorithm}[!htb]
\caption{{
$c=ergodic(G)$}}
\begin{algorithmic}[1]
    \STATE {\textbf{INPUT:} Signed transition matrix $P$.}
    \STATE {\textbf{OUTPUT:}  Long term influence contribution vector $c$}
    \STATE {Detect the structure of ergodic signed digraph $G$;}
    \IF {$G$ is balanced, with bipartition $S$ and $\bar{S}$}
        \STATE {Compute stationary distribution $\pi$ of $\bar{P}$;}
        \STATE {$c=(|S|-|\bar{S}|)\hat{\pi}_S$;}
    \ELSE
        \STATE {$c=\textbf{0}$;}
    \ENDIF
\end{algorithmic}
\label{alg:strongly}
\end{algorithm}\vspace*{-0.0cm}

Lemma~\ref{thm:ltinfmax} suggests that for ergodic balanced
digraphs, we should pick the larger component, e.g., $S$, if
$|S|>|\bar{S}|$, and select the top $\min\{k, |S|\}$ nodes from $S$
with the largest stationary distributions as white seeds. Selecting
these nodes will make the probability of the larger component
    being white the largest.


{\color{black} Theorem~\ref{thm:ltdyn} indicates that given an
anti-balanced
 digraph $G$, with bipartition $S$ and $\bar{S}$, the long-term
 dynamic $x_t$ oscillates on odd and even steps, and their long-term
  influence contribution is $0$.
 However, we can still maximize the strength of the
 oscillation of the voter model on an anti-balanced ergodic digraph
 by properly choosing the initial white seeds (See
 Remark~\ref{rmk:Rmax}.)

 \begin{remark}[]\label{rmk:Rmax}
 In an anti-balanced ergodic digraph $G=(V,E,A)$ with the bipartition $S$ and
 $\bar{S}$ and a budget $k$. Let $W'$ (resp. $W''$) denote two
 initial seed sets, where $\min\{k,|S|\}$ (resp.
 $\min\{k,|\bar{S}|\}$) nodes, with highest stationary distribution
 $\pi(i)$'s in $S$ (resp. $\bar{S}$), are selected. Then, the optimal
 $W^*$ that maximizes the strength of oscillation is
 \begin{align}
 W^*&:=\argmax_{W\in \{W',
 {W''}\}}|\hat{\pi}_S^T(e_W-\frac{1}{2}\textbf{1})|.\label{eq:Remark1}
 \end{align}
 \end{remark}}

{\color{black}
\begin{proof}
From Theorem~\ref{thm:ltdyn}, when $t$ becomes sufficiently large,
the vector $x$ oscillates at two vectors on odd and even steps,
respectively. The strength of the oscillation is
\begin{align}
\frac{|f_{o}(x_0)-f_e(x_0)|}{2}&=|\textbf{1}^T\frac{x_{o}(x_0)-x_e(x_0)}{2}|
=|\textbf{1}^T\hat{\textbf{1}}_S\hat{\pi}_S^T
(x_0-\frac{1}{2}\textbf{1})|=||S|-|\bar{S}||\cdot|\hat{\pi}_S^T
(x_0-\frac{1}{2}\textbf{1})|.\nonumber
\end{align}
Let $W$ be the initial seed set, then the oscillation strength
maximization is formulated as

\begin{align}
&\max_{|W|\leq k}||S|-|\bar{S}||\cdot|\hat{\pi}_S^T
(e_W-\frac{1}{2}\textbf{1})|
=||S|-|\bar{S}||\cdot\max\{\max_{|W|\leq k}\{\hat{\pi}_S^T
e_W\}-\frac{1}{2}\hat{\pi}_S^T\textbf{1},\max_{|W|\leq
k}\{-\hat{\pi}_S^T
e_W\}+\frac{1}{2}\hat{\pi}_S^T\textbf{1}\},\label{eq:maxosc}
\end{align}
which contains two sub-problems, i.e., $\max_{|W|\leq
k}\{\hat{\pi}_S^T e_W\}$ and $\max_{|W|\leq k}\{-\hat{\pi}_S^T
e_W\}$. The first maximization problem can be rewritten as
\begin{align}
\max_{|W|\leq k}\{\hat{\pi}_S^T e_W\}&=\max_{|W|\leq k} \big(
\sum_{i\in S}{\pi}(i)e_W(i)-\sum_{j\in
\bar{S}}{\pi}(i)e_W(j)\big).\label{eq:max1}
\end{align}
Thus, let $W'$ denote the optimal solution to the problem in
Eq.(\ref{eq:max1}), which is obtained by choosing $\min\{k,|S|\}$
seeds with highest $\pi(i)$'s from $S$. Similarly, choosing
$\min\{k,|\bar{S}|\}$ nodes with the highest $\pi(i)$'s from
$\bar{S}$ yields the optimal solution, denoted by $W''$, to the
second maximization problem $\max_{|W|\leq k}\{-\hat{\pi}_S^T
e_W\}$. The optimal $W$ to the problem in eq.(\ref{eq:maxosc}) that
maximizes the oscillation strength is the one in $\{W',W''\}$, with
higher $|\hat{\pi}_S^T (e_W-\frac{1}{2}\textbf{1})|$, which
completes the proof of eq.(\ref{eq:Remark1}).

\end{proof}}

\subsubsection{Case of weakly connected signed digraphs}


We first consider a weakly connected signed $G$ which has a single
ergodic sink component $G_Z$ with only incoming edges from the
remaining nodes $X=V\setminus Z$.
\begin{lemmaa}[]\label{thm:weaksel}
Consider a weakly connected digraph $G=(V,E,A)$ with a single
ergodic sink
    component $G_Z$.
If $G_Z$ is \emph{balanced}, with partition $S_Z$ and $\bar{S_Z}$,
the long term influence contribution vector $c^T=[c_X^T, c_Z^T]$,
where
    $c_X=\textbf{0}_X$
    and $c_Z=(\textbf{1}_X^Tu_b+|S_Z|-|\bar{S}_Z|)\hat{\pi}_{Z,S_Z}$. If
$G$ is \emph{anti-balanced} or \emph{strictly unbalanced},
$c=\textbf{0}$.
\end{lemmaa}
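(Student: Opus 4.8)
The plan is to compute $c^T = \lim_{t\to\infty} \frac{\mathbf{1}^T \sum_{i=0}^{t} P^i}{t+1}$ by using the block structure of $P^t$ from Eq.(\ref{eq:blockPst2}) together with the convergence results of Lemma~\ref{thm:convergWeakly2}. First I would split $\mathbf{1}^T = [\mathbf{1}_X^T, \mathbf{1}_Z^T]$ and observe that $c^T = [c_X^T, c_Z^T]$ where $c_X^T$ only sees the $P_X^t$ block and $c_Z^T$ sees both $P_Z^t$ and the cross block $P_Y^{(t)}$. Concretely, $c_X^T = \lim_{t\to\infty}\frac{1}{t+1}\sum_{i=0}^t \mathbf{1}_X^T P_X^i$ and $c_Z^T = \lim_{t\to\infty}\frac{1}{t+1}\sum_{i=0}^t\left(\mathbf{1}_X^T P_Y^{(i)} + \mathbf{1}_Z^T P_Z^i\right)$.

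For the case $G_Z$ balanced: since $\lim_{t\to\infty}P_X^t = \mathbf{0}$ with geometric rate (proved inside Lemma~\ref{thm:convergWeakly2}), the partial sums $\sum_{i=0}^t \mathbf{1}_X^T P_X^i$ converge to $\mathbf{1}_X^T(I_X-P_X)^{-1}$, a bounded quantity, so after dividing by $t+1$ we get $c_X = \mathbf{0}_X$. For $c_Z$, I would use that $\mathbf{1}_X^T P_Y^{(t)} \to \mathbf{1}_X^T u_b \hat{\pi}_{Z,S_Z}^T$ and $\mathbf{1}_Z^T P_Z^t \to \mathbf{1}_Z^T \hat{\mathbf{1}}_{Z,S_Z}\hat{\pi}_{Z,S_Z}^T = (|S_Z|-|\bar S_Z|)\hat{\pi}_{Z,S_Z}^T$, both by Lemma~\ref{thm:convergWeakly2}. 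Since both sequences converge, their Cesàro averages converge to the same limits, giving $c_Z^T = \left(\mathbf{1}_X^T u_b + |S_Z|-|\bar S_Z|\right)\hat{\pi}_{Z,S_Z}^T$, i.e. $c_Z = (\mathbf{1}_X^T u_b + |S_Z|-|\bar S_Z|)\hat{\pi}_{Z,S_Z}$ since the right side is a scalar times $\hat{\pi}_{Z,S_Z}$.

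For the case $G_Z$ strictly unbalanced, Lemma~\ref{thm:convergWeakly2} gives $\lim_{t\to\infty}P^t = \mathbf{0}$, so every block of $P^t$ vanishes geometrically, the partial sums of $\mathbf{1}^T P^i$ are bounded, and dividing by $t+1$ yields $c = \mathbf{0}$. For the case $G_Z$ anti-balanced, the subtlety is that $P^t$ does not converge but oscillates: $\mathbf{1}_X^T P_Y^{(2t)} \to -\mathbf{1}_X^T u_u\hat{\pi}_{Z,S_Z}^T$ and $\mathbf{1}_X^T P_Y^{(2t+1)} \to \mathbf{1}_X^T u_u\hat{\pi}_{Z,S_Z}^T$, and similarly $\mathbf{1}_Z^T P_Z^{2t} \to (|S_Z|-|\bar S_Z|)\hat{\pi}_{Z,S_Z}^T$ while $\mathbf{1}_Z^T P_Z^{2t+1} \to -(|S_Z|-|\bar S_Z|)\hat{\pi}_{Z,S_Z}^T$. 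Here I would pair up consecutive even/odd terms in the Cesàro sum so the oscillating contributions cancel, leaving only the $c_X$ part which again vanishes after dividing by $t+1$; hence $c = \mathbf{0}$. I expect the main obstacle to be the bookkeeping in the anti-balanced case — carefully justifying that the Cesàro average of a bounded $\pm$-oscillating sequence tends to zero (and handling the $P_Y^{(t)}$ term, whose convergence to its oscillating limit is itself only asymptotic, so one must argue the averaged error terms are $o(1)$) — but this is routine once the even/odd limits from Lemma~\ref{thm:convergWeakly2} are invoked.
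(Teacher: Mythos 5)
Your proposal is correct and follows essentially the same route as the paper: both rest on the block form of $P^t$ in Eq.~(\ref{eq:blockPst2}), the limits from Lemma~\ref{thm:convergWeakly2}, the identity $\textbf{1}_Z^T\hat{\textbf{1}}_{Z,S_Z}=|S_Z|-|\bar{S}_Z|$, and the cancellation of the even/odd limits in the anti-balanced case. The only difference is presentational — you spell out the Ces\`aro-averaging step (Eq.~(\ref{eq:stimcMc})) that the paper compresses into citing Eq.~(\ref{eq:stimcMc2}) and the remark that the odd and even subsequences cancel.
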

\begin{proof}
(1)When $G_Z$ is balanced, by Lemma~\ref{thm:convergWeakly2},
    $c_X=\textbf{0}_X$, and
\begin{align}
&c_Z^T=
(\textbf{1}_X^Tu_b+\textbf{1}_Z^T\hat{\textbf{1}}_{Z,S_Z})\hat{\pi}_{Z,S_Z}^T
=(\textbf{1}_X^Tu_b + |S_Z| -
|\bar{S}_Z|)\hat{\pi}_{Z,S_Z}^T.\nonumber
\end{align}

(2) When $G_Z$ is strictly unbalanced, $c^T=\textbf{1}^T
\lim_{t\rightarrow \infty}P^t=\textbf{0}$

(3) When $G_Z$ is anti-balanced, by Lemma~\ref{thm:convergWeakly2} the
    limits of odd and even subsequences of $P^t$ cancel out, thus
    $c=\textbf{0}$.
\end{proof}

Lemma~\ref{thm:weaksel} indicates that influence contribution of
    the balanced ergodic sink component is more complicated than that of
    the balanced ergodic digraph.
This is because the sink component affects the colors of the
non-sink
    component in a complicated way depending on
    how non-sink and sink components are connected.
Therefore, the optimal seed selection depends on the calculation of
the influence contributions of each sink node, and is not as
intuitive as that for the
    ergodic digraph case.

{\color{black} Theorem~\ref{thm:ltWeakly} shows that in a weakly
connected signed
 digraph $G$, with single anti-balanced sink component $G_Z$, the
 long term influence $f(x_0)$ oscillates on odd and even steps, and
 the average is $|V|/2$, which is invariant to the initial seed
 selection. Similar to Remark~\ref{rmk:Rmax}, we can maximize the
 oscillation strength by properly selecting initial seeds, i.e.,
 \begin{align}
 &W^*=\argmax_{|W|\leq k}|f_{e}(e_W)-f_o(e_W)|/2\nonumber\\
 &=\argmax_{|W|\leq k}|(\textbf{1}_X^Tu_u\hat{\pi}_{Z,S_Z}^T+\textbf{1}_Z^T\hat{\textbf{1}}_{Z,S_Z}\hat{\pi}_{Z,S_Z}^T)(e_{WZ}-\frac{1}{2}\textbf{1}_Z)|\nonumber
 \\
 &=|\textbf{1}_X^Tu_u + |S_Z| -
 |\bar{S}_Z||\cdot\argmax_{|W|\leq k}|\hat{\pi}_{Z,S_Z}^T(e_{WZ}-\frac{1}{2}\textbf{1}_Z)|\label{eq:weaklyseloscillation}
 \end{align}
 where the maximization objective is independent from $x_{0X}$, thus
 oscillation strength maximization problem objective in
 Eq.(\ref{eq:weaklyseloscillation}) for $G$ is identical to that in
 Remark~\ref{rmk:Rmax}. Hence, Remark~\ref{rmk:Rmax} also applies
 here.}

{\color{black} Using Eq.(\ref{eq:Pmatmultisink}),
Lemma~\ref{thm:ltinfmax} and Lemma~\ref{thm:weaksel} can be readily
extended to the
    case with more than one ergodic sink components
    and disconnected digraphs.}
{\color{black} Algorithm~\ref{alg:weaklyconnected} below summarizes
how to compute the node influence contributions of weakly connected
signed digraphs. Note that by our assumption, we consider all sink
components to be ergodic.}

\begin{algorithm}[!htb]
\caption {{
$c=weakly(G)$}} \label{alg:weaklyconnected}
\begin{algorithmic}[1]
    \STATE {\textbf{INPUT:} Signed transition matrix $P$.}
    \STATE {\textbf{OUTPUT:} Influence contribution vector $c$.}
    \STATE {Detect the structure of the weakly connected signed digraph $G$, and
    find its $m\geq 1$ signed ergodic sink components $G_{Z1},\cdots,G_{Zm}$;}
        \FOR {$i=1:m$}
                \IF{$G_{Zi}$ is balanced with partition $S_{Zi},\bar{S}_{Zi}$}
                    \STATE{Compute stationary distribution
                        $\pi_{Zi}$ of $\bar{P}_{Zi}$};
                    \STATE{$u_{bi}=(I_X-P_X)^{-1}P_{Yi}\hat{\textbf{1}}_{Zi,S_{Zi}}$;}
                    \STATE{$c_{Zi}=(\textbf{1}^T_Xu_{bi} + |S_{Zi}| -
                            |\bar{S}_{Zi}|)\hat{\pi}_{Zi,S_{Zi}}^T$;}
                \ENDIF
        \ENDFOR
    \STATE {$c=[\textbf{0}_X; c_{Z1}; \cdots; c_{Zm}]$}
\end{algorithmic}
\end{algorithm}\vspace*{-0.3cm}

\subsubsection{General case and SVIM-L algorithm}

Given the above systematic analysis, we are now in a position to
summarize and introduce our SVIM-L algorithm which solves the long-term
voter model influence maximization problem for general aperiodic
signed digraphs.

In general, a signed digraph consists $m\geq1$ disconnected
components, within each of which the node influence contribution
follows 
Lemma~\ref{thm:weaksel}. The long-term signed voter model influence
maximization (SVIM-L) algorithm is constructed in
Algorithm~\ref{alg:gen}.

\begin{algorithm}[!htb]
\caption {Long-term influence maximization SVIM-L}
\begin{algorithmic}[1]
    \STATE {\textbf{INPUT:} Signed transition matrix $P$, budget $k$.}
    \STATE {\textbf{OUTPUT:} White seed set $W$.}
    \STATE {Detect the structure of a general aperiodic signed digraph $G$, and
    find the $m\geq 1$ disconnected components $G_1,\cdots, G_m$;}
        \FOR {$i=1:m$}
            \STATE {$c_{G_i}=weakly(G_i)$;}
        \ENDFOR
        \STATE {$c=[c_{G_1}; \cdots; c_{G_m}]$;}
    \STATE {$W =$ top $\min\{k, n^+(c)\}$ nodes with
the highest $c(i)$ values.}
\end{algorithmic}
\label{alg:gen}
\end{algorithm}\vspace*{-0.0cm}

\noindent\textbf{Complexity analysis.}
We consider $G=(V,E,A)$ to be weakly connected, since disconnected graph case
    can be treated independently for each connected component for the
    time complexity.
SVIM-L algorithm consists of two parts.
The first part extracts the connectivity and balance structure of the graph,
    which can be done using depth-first search with complexity $O(|E|)$.
The second part uses Algorithm~\ref{alg:weaklyconnected}
    to compute influence contributions of balanced ergodic sink components.
The dominant computations are on the stationary distribution $\pi_{Zi}$'s and
    $(I_X-P_X)^{-1}$, which can be done by 
    solving a linear equation system~\cite{PiInverse}
    and matrix inverse in $O(|Z_i|^3)$ and $O(n_X^3)$, respectively, where
    $n_X= |X|$. Let $b$ be the number of balanced sink components in $G$,
    $n_Z$ be the number of nodes in the largest balanced sink
    component.
Thus SVIM-L can be done in $O(b n_Z^3+n_X^3)$ time.
Alternatively, we can use iterative method for computing both
    $\pi_{Zi}$'s and $\textbf{1}_X^T(I_X-P_X)^{-1}$, if
    the largest convergence time $t_C$ of $P_{Zi}^t$'s and $P_X^t$ is
    small.
    {\color{black}(Note that the convergence time
    of ergodic digraphs could be exponentially large in general, as
    illustrated by an example in Appendix~\ref{sec:illustration})}.
In this case, each iteration step involves
    vector-matrix multiplication and can be done in $O(m_B)$ time,
    where $m_B$ is the number of edges of the induced
    subgraph $G_B$ consisting of all nodes in the balanced sink
    components and $X$.
Note that $m_B$ and $t_C$ are only related to subgraph $G_B$, which could be
    significantly smaller than $G$, and thus $O(t_Cm_B)$ could be much
    smaller than the time of naive iterations on the entire graph.
Overall SVIM-L can be done in $O(|E|+\min(b n_Z^3+n_X^3, t_Cm_B))$
time.

\section{Evaluation}\label{sec:evaluation}

{\color{black}In this section, we first use both synthetic datasets
and real social network datasets to demonstrate the efficacy of our
short-term and long-term seed selection schemes by comparing the
performances with four baseline heuristics. Then, we evaluate how
much the short-term and long-term influence can be improved by
taking the edge signs 
into consideration.}

\subsection{Performance comparison with baseline heuristics}\label{sec:baseline}
For different scenarios, we compare our SVIM-L and SVIM-S algorithms
with {\color{black}four} heuristics, i.e., (1) selecting seed nodes
with the highest weighted outgoing degrees (denoted by $d^{+}+d^{-}$
in the figures), (2) highest weighted outgoing positive degrees
(denoted by $d^{+}$), (3) highest differences between weighted
outgoing positive and negative degrees (denoted by
$d^{+}-d^{-}$){\color{black}, and (4) randomly selecting seed nodes
(denoted by ``Rand''), where in our evaluations, we run random seed
selection $1000$ times, and compare the average number of white
nodes between our algorithm and other heuristics}. Our evaluation
results demonstrate that our seed selection scheme can increase up
to $72\%$ long-term influence, and {\color{black}$145\%$} short-term
influence over other heuristics.

\subsubsection{Synthetic datasets}

In this part, we generate synthetic datasets with different
structures to validate our theoretical results. 

{\color{black}\noindent\textbf{Dataset generation model.} We
generate six types of signed digraphs, including balanced ergodic
digraphs, anti-balanced ergodic digraphs, strictly unbalanced
ergodic digraphs, weakly connected signed digraphs, disconnected
signed digraphs with ergodic components, and disconnected signed
digraph with weakly connected components (WCCs). All edges have unit
weights. 
} The following are graph configuration details.

We first create an unsigned ergodic digraph $\bar{G}$ with $9500$
nodes, which has two ergodic components $\bar{G}_A$ and $\bar{G}_B$,
with $[3000,6500]$ nodes and $[3000,6500]\times 8$ random directed
edges, respectively. Moreover, there are $3000\times 8$ random
directed edges across $\bar{G}_{A}$ and $\bar{G}_{B}$. Ergodicity is
checked through a simple connectivity and
    aperiodicity check.
Given $\bar{G}$, a {\em balanced digraph} is obtained by assigning
all edges within $\bar{G}_{A}$ and $\bar{G}_{B}$ with positive
signs, and those across them with negative signs. Then, an
\emph{anti-balanced digraph} is generated by negating all edge signs
of the balanced ergodic digraph. To generate a \emph{strictly
unbalanced digraph}, we randomly assign edge signs to all edges in
$\bar{G}$ and make sure that there does not exist a balanced or
anti-balanced bipartition.

Moreover, we generated a \emph{disconnected signed digraph} and a
weakly connected signed digraph for our study. We first generate $5$
ergodic unsigned digraphs, $\bar{G}_1,\cdots,\bar{G}_5$ with $[500,
200, 800, 300, 2700]$ nodes and $[500, 200, 800, 300, 2700]\times 8$
edges, respectively. Then, we group $G_{23}=(G_2,G_3)$ and
$G_{45}=(G_4,G_5)$ to form two ergodic balanced digraphs, and
generate a strictly unbalanced ergodic digraph $G_1$ by randomly
assigning signs to edges in ${\bar{G}_1}$. Three disconnected
components $G_1, G_{23}, G_{45}$ together form a disconnected signed
digraph. To form a \emph{weakly connected signed digraph}, we place
in total $3000$ random direct edges from $G_1$ to the balanced
ergodic components $G_{23}$ and $G_{45}$, where the nodes in
subgraph $G_1$ only have outgoing edges to $G_{23}$ and $G_{45}$.
Moreover, we combine the above generated balanced ergodic digraph
and the weakly connected signed digraph together forming a larger
\emph{disconnected signed digraph, with the weakly connected signed
digraph as a component}.

Fig.~\ref{fig:eps1}-Fig.~\ref{fig:eps6} present the evaluation
results for one set of digraphs, where we observe that all digraphs
we randomly generated exhibit consistent results.
Our tests are conducted using Matlab on a standard PC server.

\begin{figure}[!htb]
    \centering
    \begin{minipage}[t]{5.35cm}
    \begin{center}
    \includegraphics[width=2.0in]{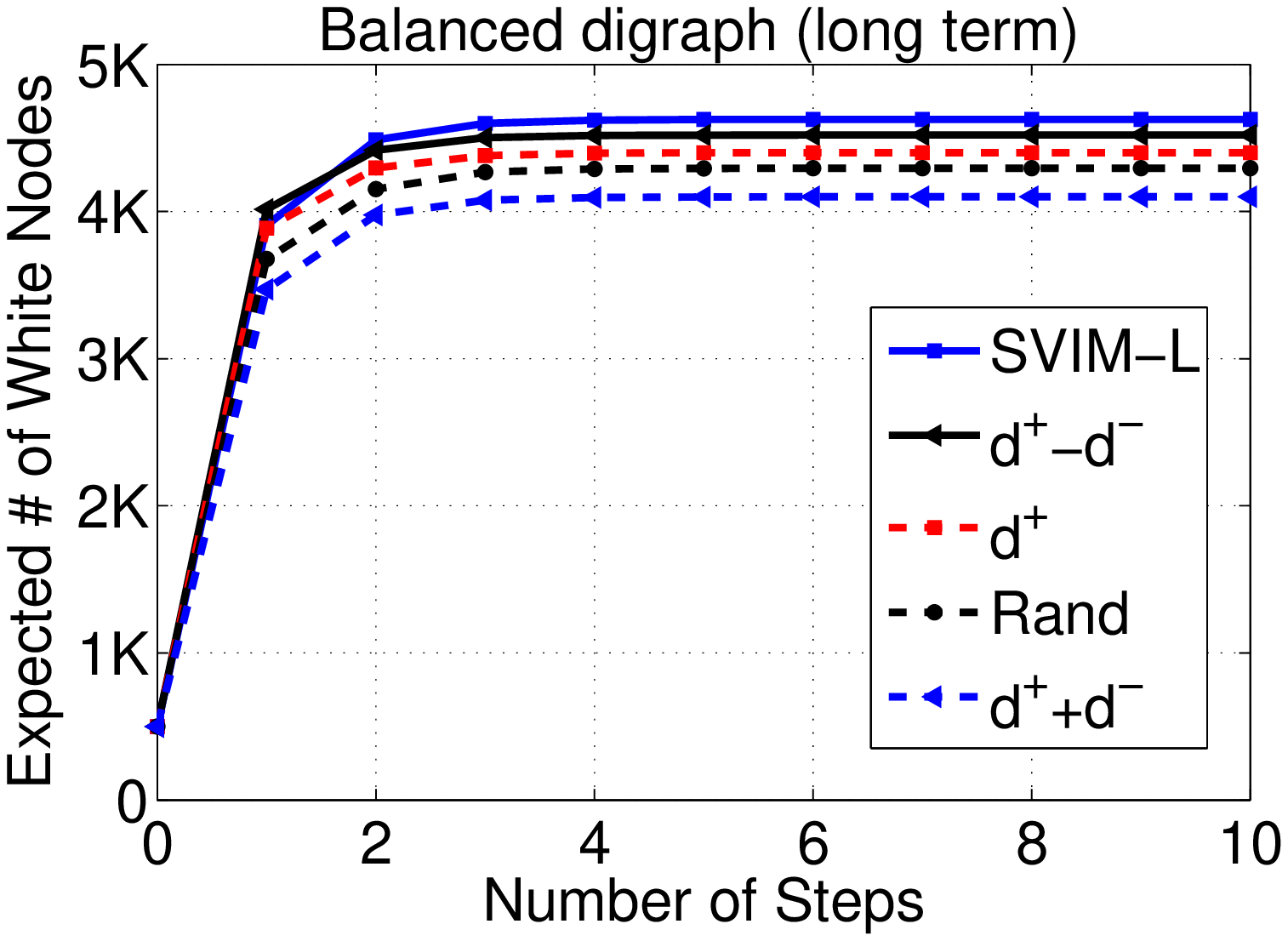}
    \caption{$G$ is balanced}\label{fig:eps1}
    \end{center}
    \end{minipage}
   \centering
    \begin{minipage}[t]{5.35cm}
    \begin{center}
    \includegraphics[width=2.0in]{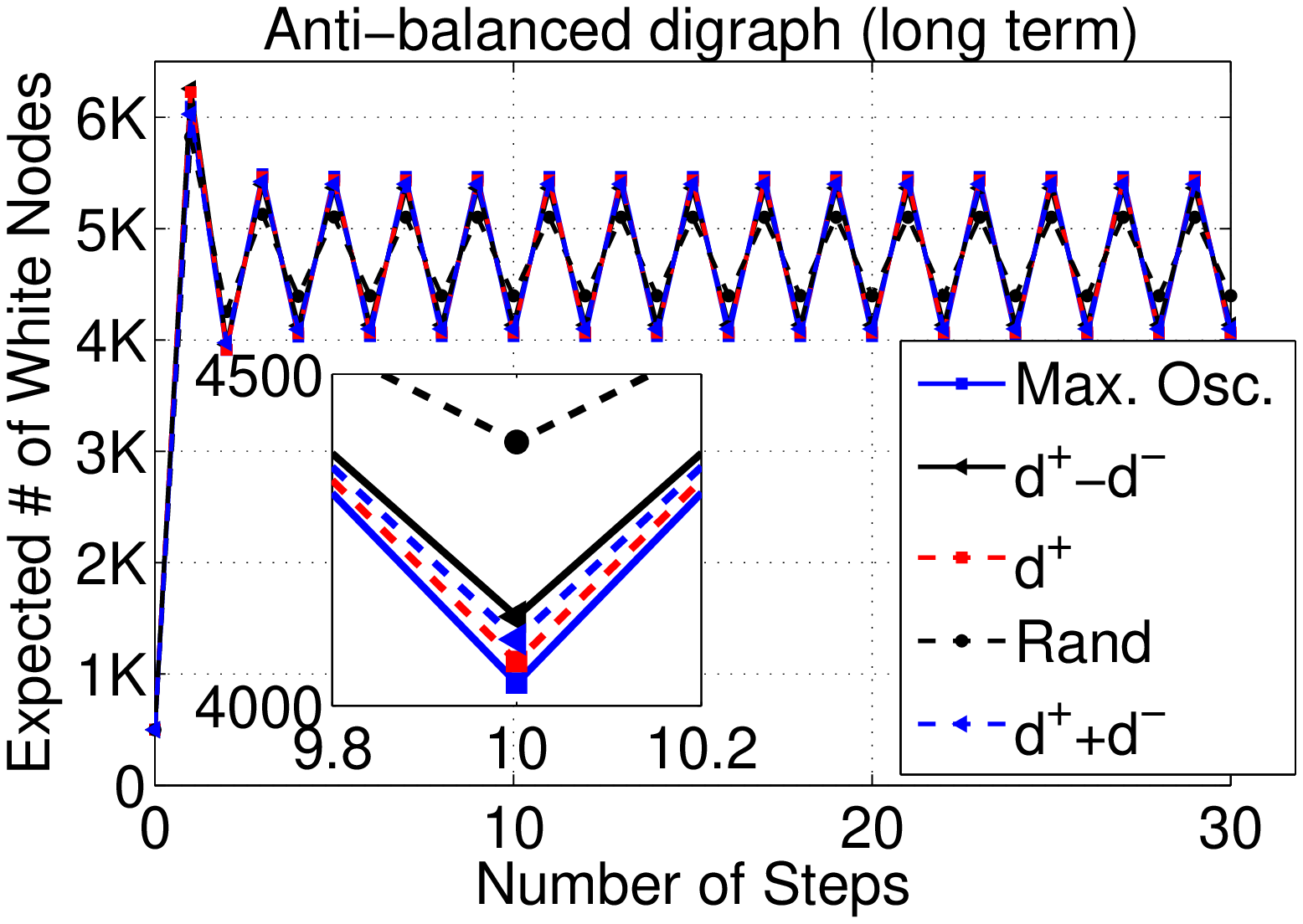}
    \caption{$G$ is anti-balanced}\label{fig:eps2}
    \end{center}
    \end{minipage}
    \begin{minipage}[t]{5.35cm}
    \begin{center}
    \includegraphics[width=2.0in]{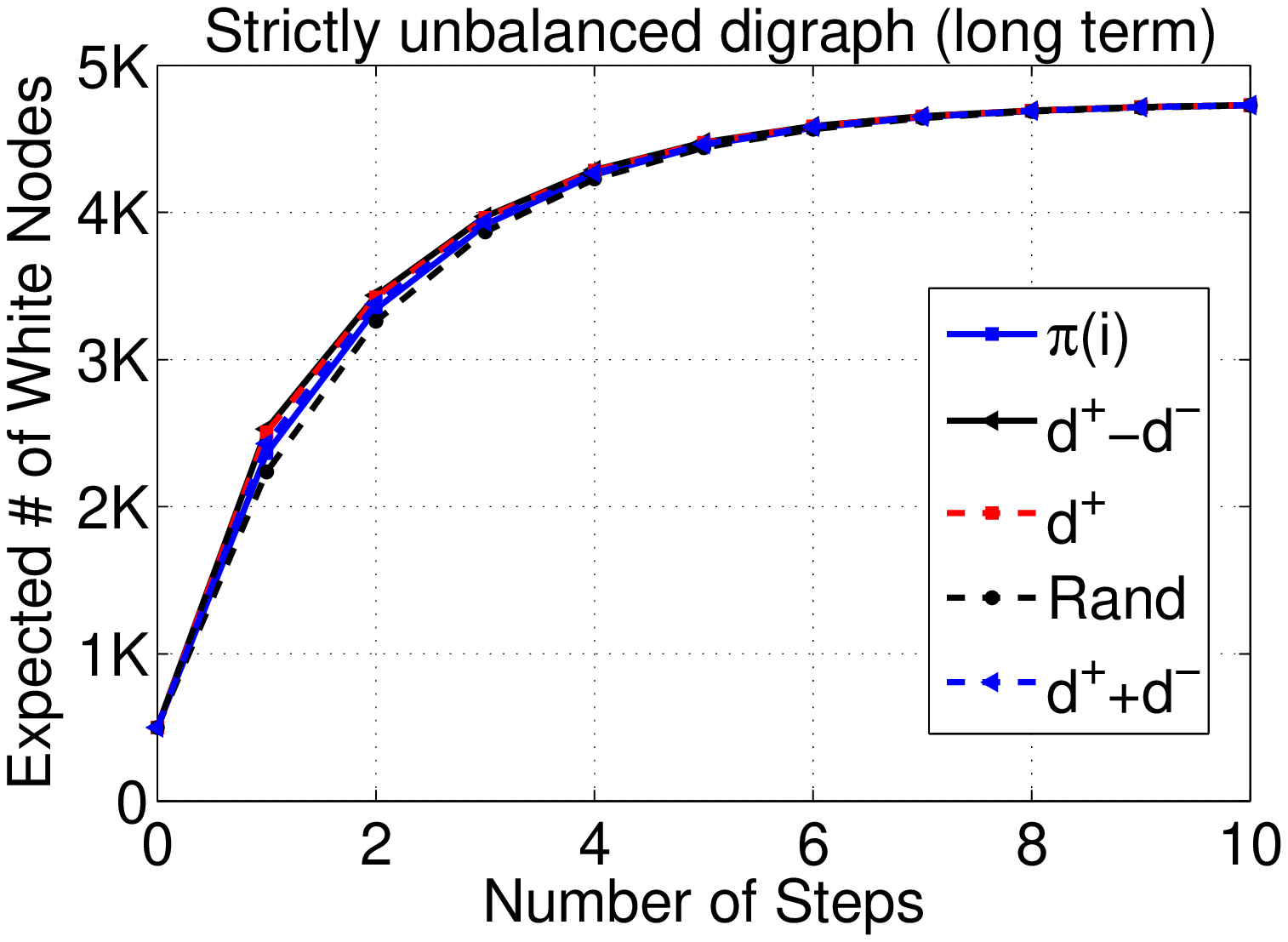}
    \caption{$G$ is strictly unbalanced}\label{fig:eps3}
    \end{center}
    \end{minipage}
   \centering
    \begin{minipage}[t]{5.35cm}
    \begin{center}
    \includegraphics[width=2.0in]{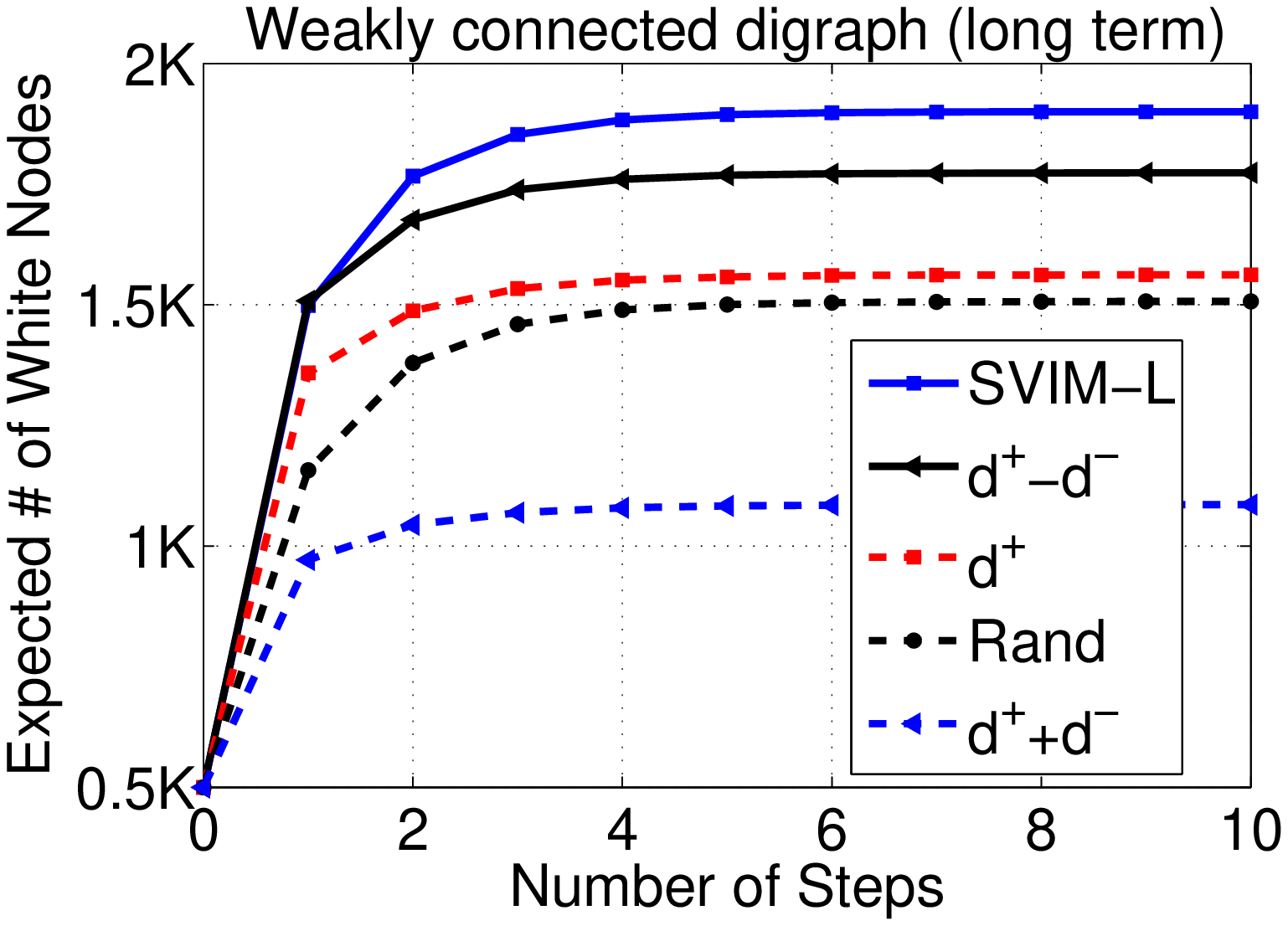}
    \caption{$G$ is weakly connected}\label{fig:eps4}
    \end{center}
    \end{minipage}
    \begin{minipage}[t]{5.35cm}
    \begin{center}
    \includegraphics[width=2.0in]{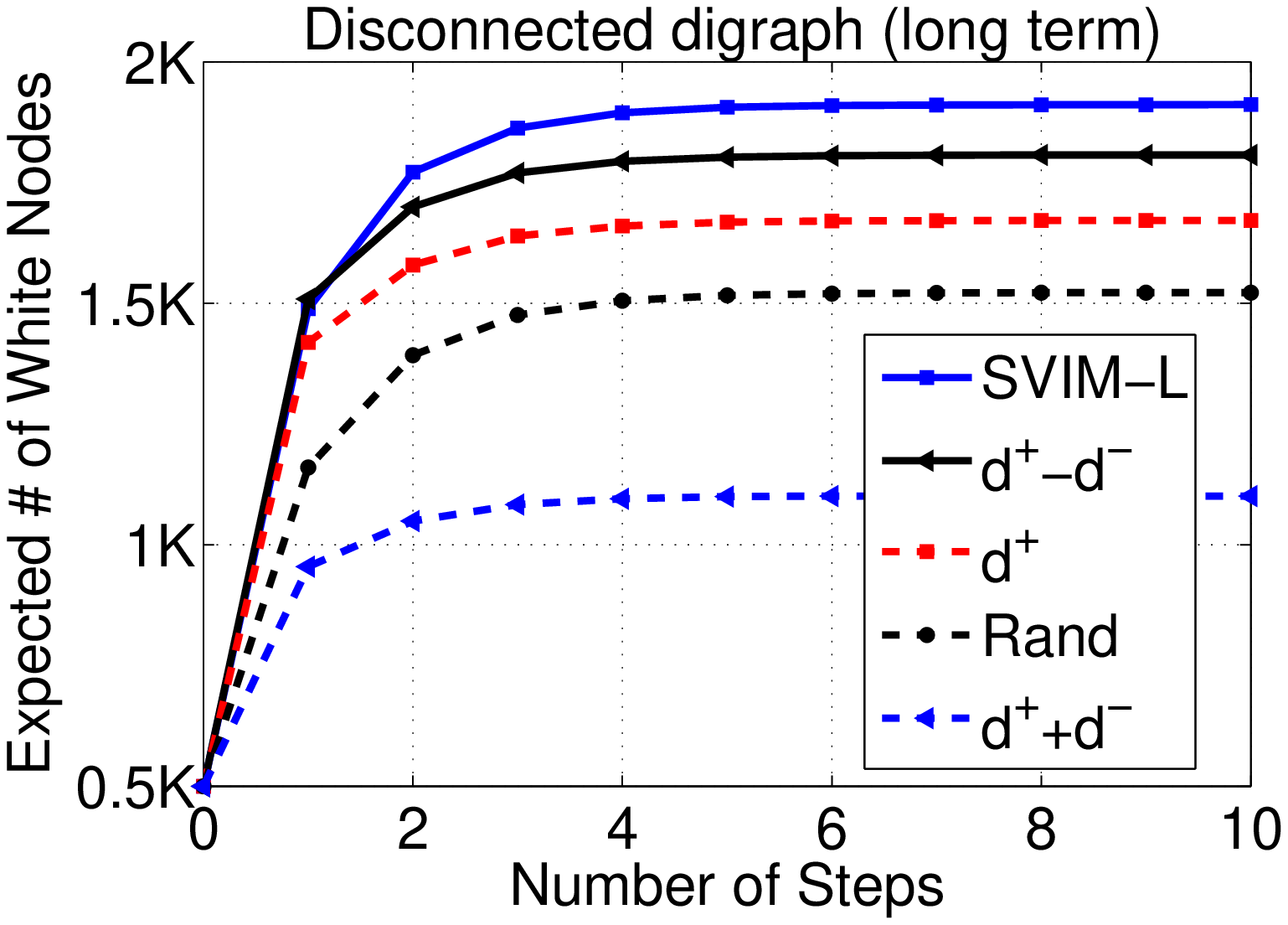}
    \caption{$G$ is disconnected}\label{fig:eps5}
    \end{center}
    \end{minipage}
   \centering
    \begin{minipage}[t]{5.35cm}
    \begin{center}
    \includegraphics[width=2.0in]{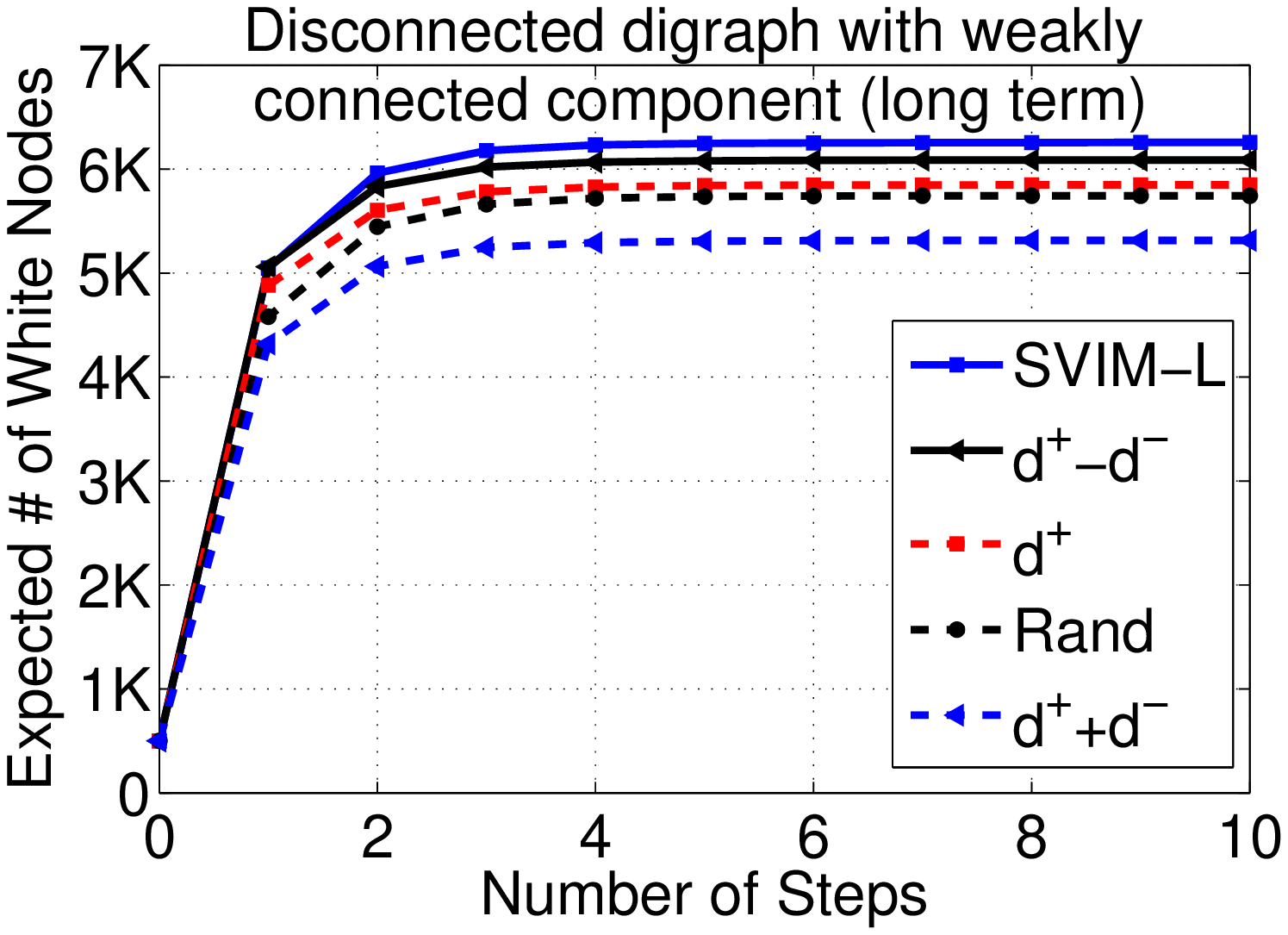}
    \caption{$G$ is disconnected with WCC}\label{fig:eps6}
    \end{center}
    \end{minipage}
  \vspace*{-0.5cm}
\end{figure}

\begin{figure*}[!htb]
    \centering
    \begin{minipage}[t]{6.7cm}
    \begin{center}
    \includegraphics[width=2.25in]{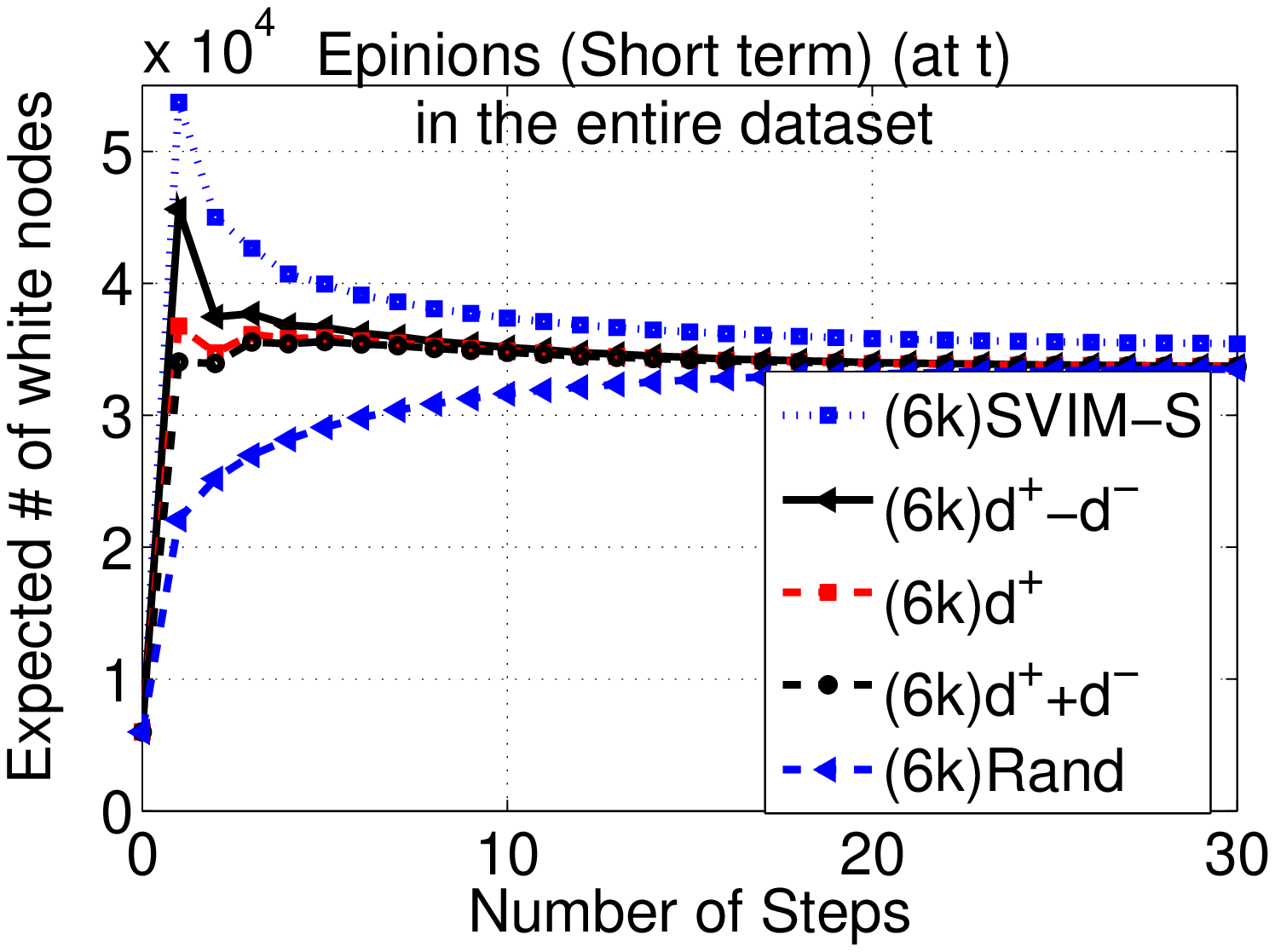}
             \vspace*{-0.3cm}
    \caption{\color{black}Instant influence in Epinions data with $k=6$k}\label{fig:epsEPSin1entire}
    \end{center}
    \end{minipage}
    \centering
    \begin{minipage}[t]{6.7cm}
    \begin{center}
    \includegraphics[width=2.25in]{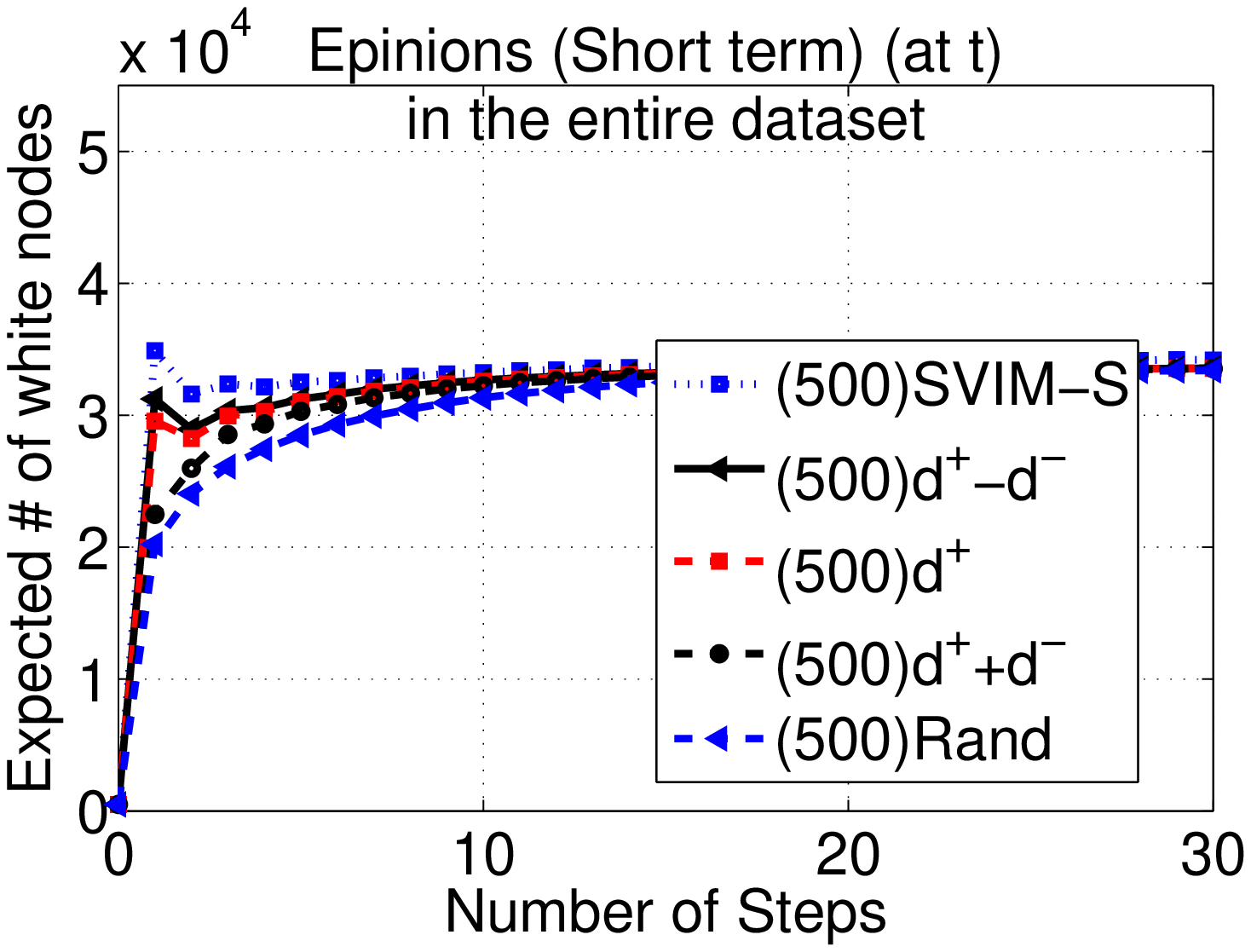}
             \vspace*{-0.3cm}
    \caption{\color{black}Instant influence in Epinions data with $k=500$}\label{fig:epsEPSin1entire500}
    \end{center}
    \end{minipage}\\
    \centering
    \begin{minipage}[t]{6.7cm}
    \begin{center}
    \includegraphics[width=2.25in]{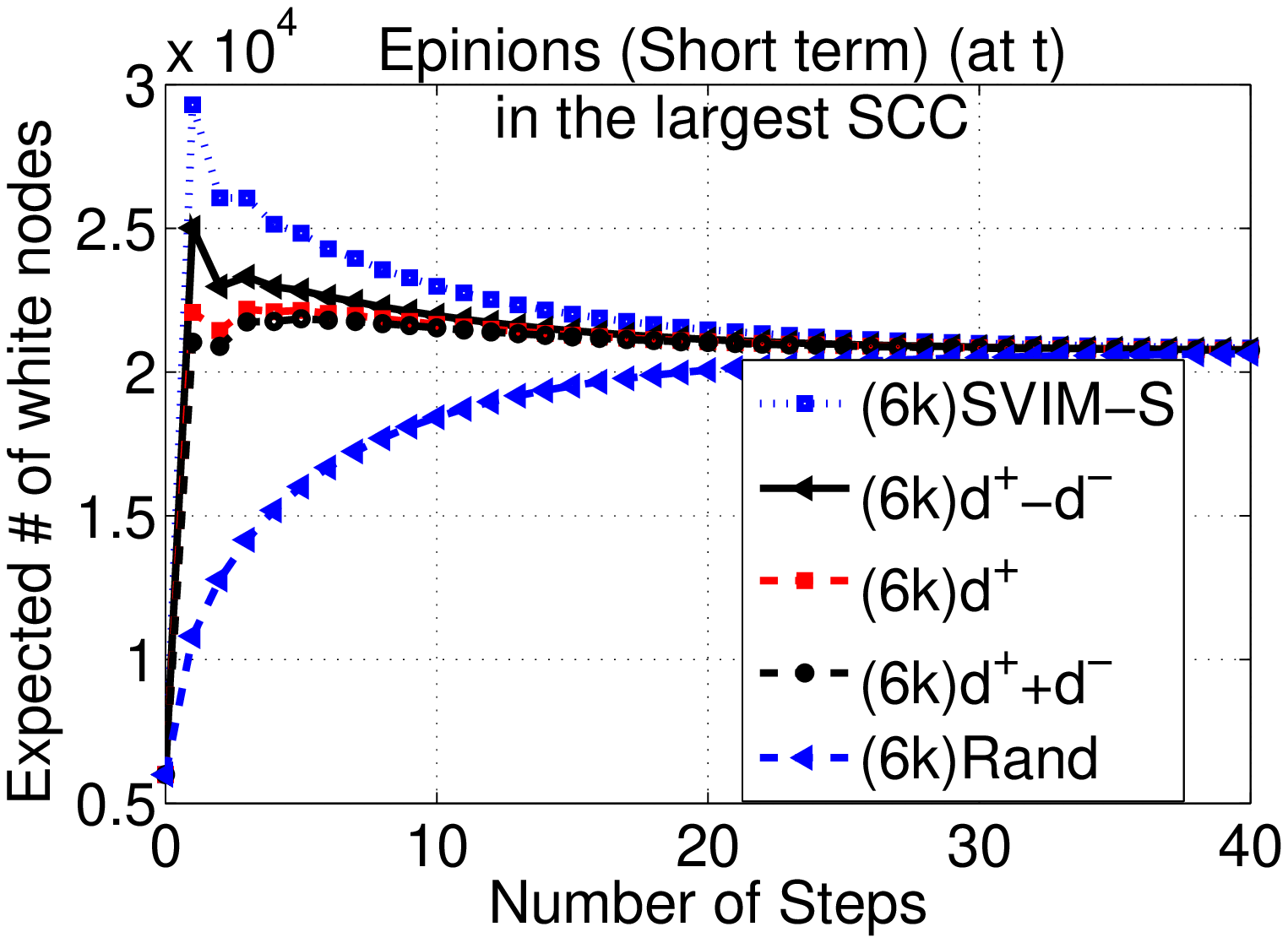}
             \vspace*{-0.3cm}
    \caption{\color{black}Instant influence in SCC with $k=6$k}\label{fig:epsEPSin1}
    \end{center}
    \end{minipage}
    \centering
    \begin{minipage}[t]{6.7cm}
    \begin{center}
    \includegraphics[width=2.25in]{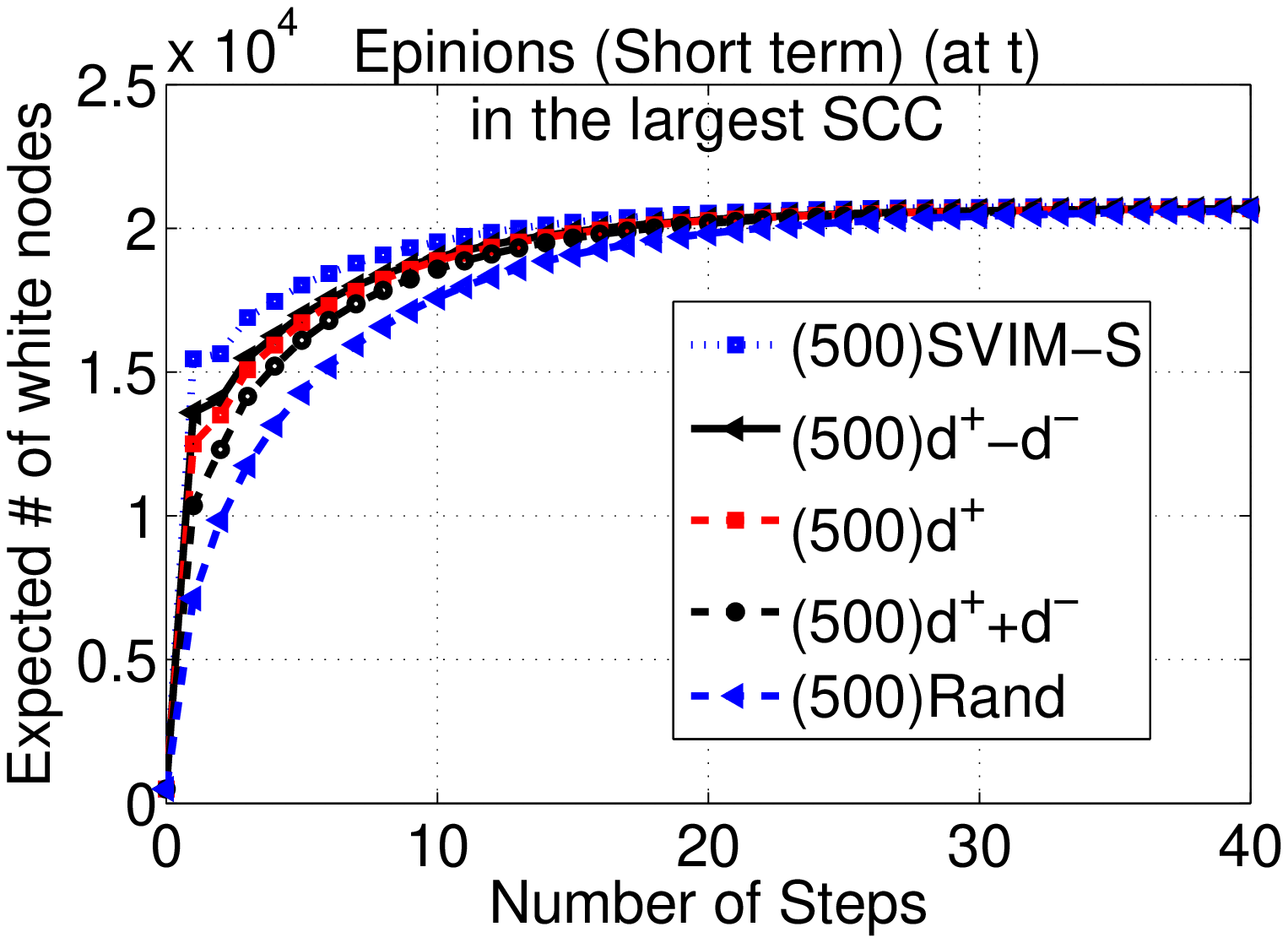}
             \vspace*{-0.3cm}
    \caption{\color{black}Instant influence in SCC with $k=500$}\label{fig:epsEPSin1500}
    \end{center}
    \end{minipage}
    \centering
  \vspace*{-0.3cm}
\end{figure*}
\begin{figure*}[!htb]
    \centering
    \begin{minipage}[t]{6.7cm}
    \begin{center}
    \includegraphics[width=2.25in]{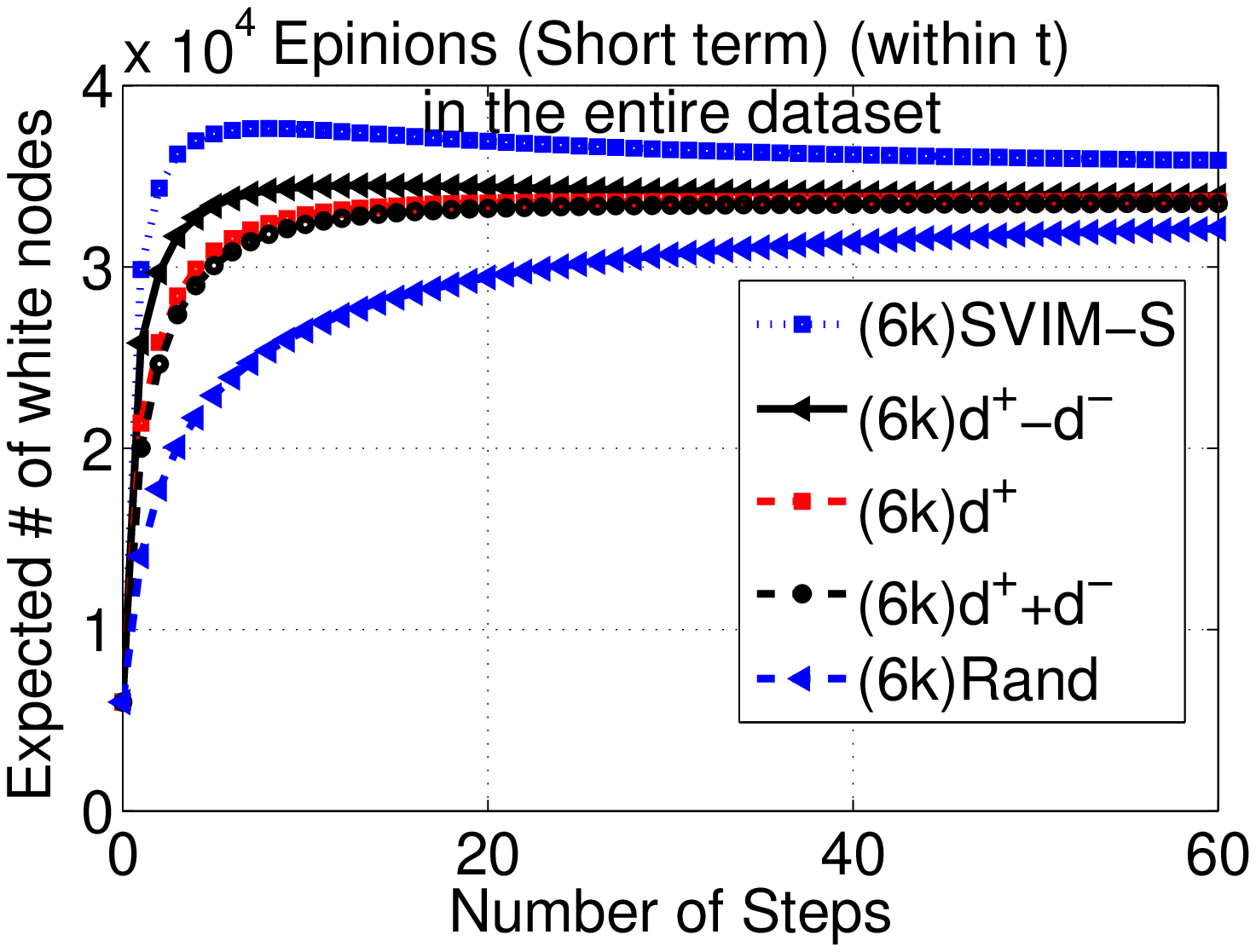}
    \vspace*{-0.3cm}
    \caption{\color{black}Average influence in Epinions data with $k=6$k}\label{fig:epsEPSin3entire}
    \end{center}
    \end{minipage}
    \centering
    \begin{minipage}[t]{6.7cm}
    \begin{center}
    \includegraphics[width=2.25in]{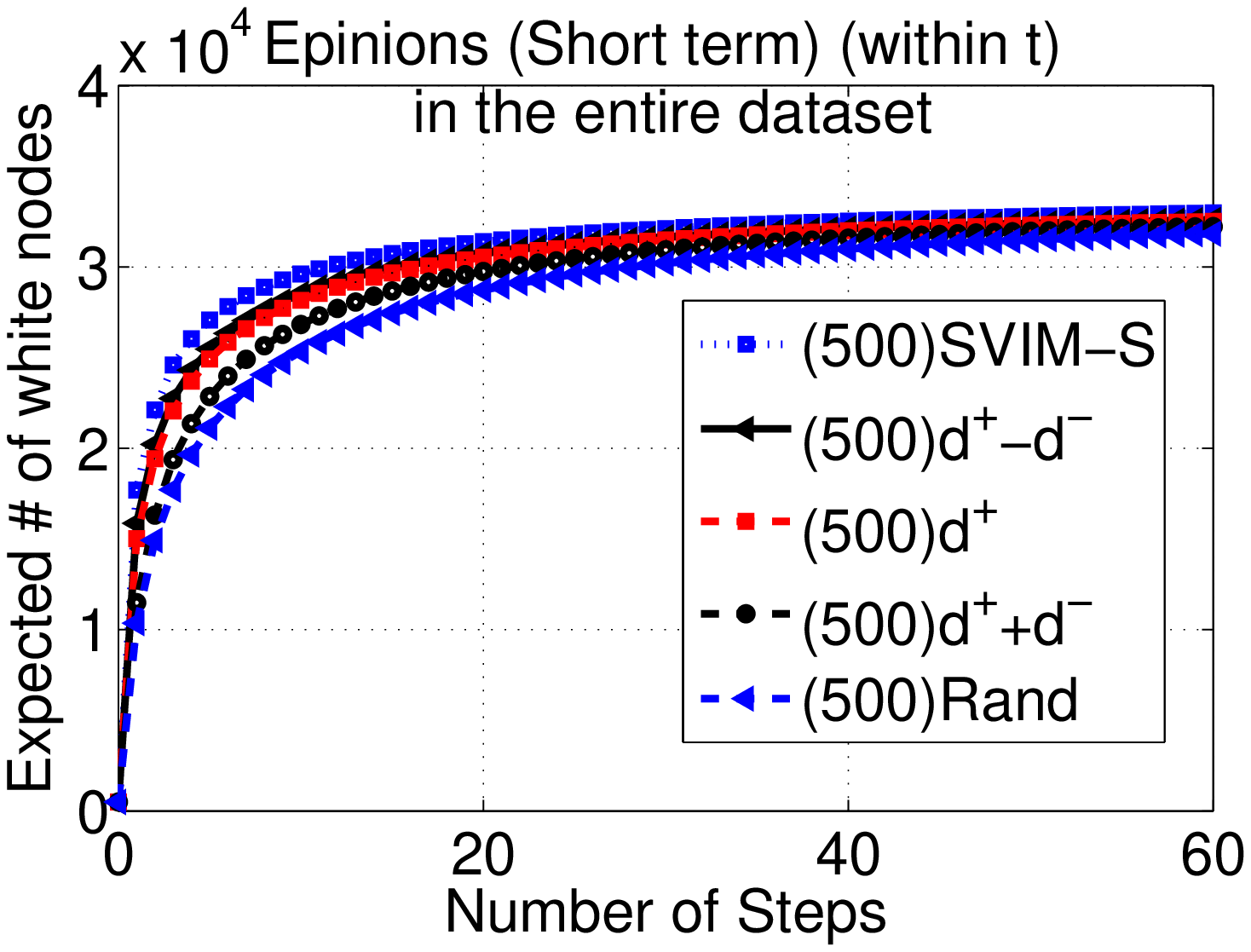}
    \vspace*{-0.3cm}
    \caption{\color{black}Average influence in Epinions data with $k=500$}\label{fig:epsEPSin3entire500}
    \end{center}
    \end{minipage}\\
    \centering
    \begin{minipage}[t]{6.7cm}
    \begin{center}
    \includegraphics[width=2.25in]{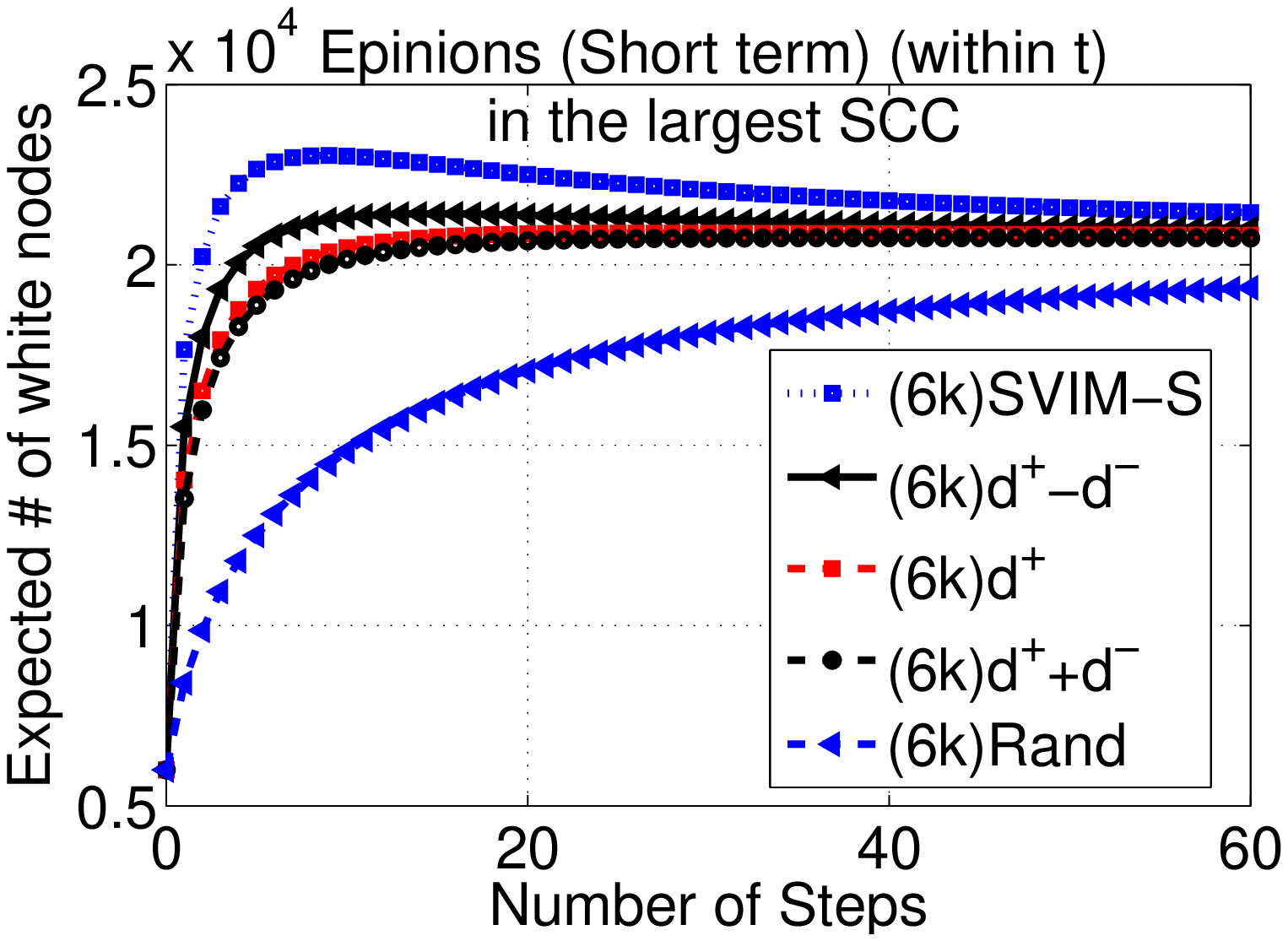}
    \vspace*{-0.3cm}
    \caption{\color{black}Average influence in SCC with $k=6$k}\label{fig:epsEPSin3}
    \end{center}
    \end{minipage}
    \centering
    \begin{minipage}[t]{6.7cm}
    \begin{center}
    \includegraphics[width=2.25in]{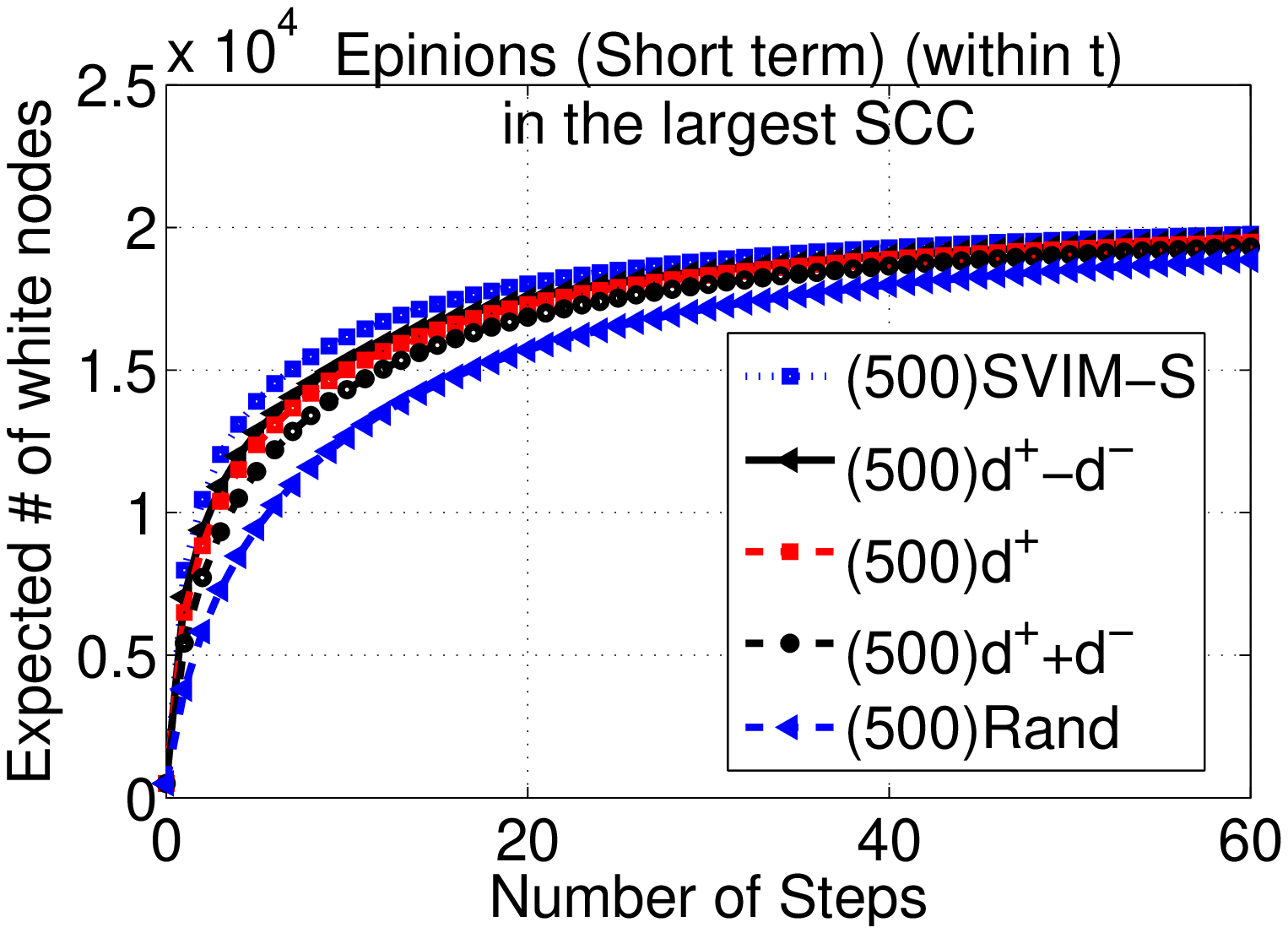}
    \vspace*{-0.3cm}
    \caption{\color{black}Average influence in SCC with $k=500$}\label{fig:epsEPSin3500}
    \end{center}
    \end{minipage}
    \centering
  \vspace*{-0.5cm}
\end{figure*}

\noindent\textbf{{Long-term influence maximization.}} In the
evaluations, we set the influence budget as $k=500$,
{\color{black}and compare the average numbers of white nodes over
steps between our algorithm and other heuristics.}
Fig.~\ref{fig:eps1} shows that in the balanced ergodic digraph,
SVIM-L algorithm achieves the highest long-term influence over other
heuristics. {\color{black} When applying a heuristic seed selection
scheme, denoted by ${\tt H}$, $f_t^{\tt
 H}$ represents the number of white nodes at step $t(\geq 1)$. Similarly, denote
$f_t^{{\tt SVIM}}$ as the number of white nodes at step $t(\geq 1)$
for SVIM algorithm. We consider $\Delta f_t({\tt SVIM, H})=(f_t^{\tt
SVIM}-f_t^{\tt H})/f_t^{\tt H}$ as the influence increase of SVIM
over the heuristic algorithm ${\tt H}$ at step $t$. The maximum
influence increase is the maximum $\Delta f_t({\tt SVIM}, \cdot)$
among all steps ($t\geq 1$) and all heuristics. Hence, in
Fig.~\ref{fig:eps1}, we see that our SVIM-L algorithm outperforms
all other heuristics. Especially, a maximum of $14\%$ influence
increase is observed for $t\geq 4$ with $4.68k$ and $4.1k$ white
nodes for SVIM-L and random selection scheme, respectively. In the
rest of this section, we will use the maximum influence increase as
a metric to illustrate the efficacy of our SVIM algorithm.}
Fig.~\ref{fig:eps2} shows the clear oscillating behavior on the
    anti-balanced ergodic digraph, and the average influence is the
    same for all algorithms.
The inset shows that our algorithm (denoted as ``Max. Osc.'') indeed
provides the largest
    oscillation.
Fig.~\ref{fig:eps3} shows the results in strictly unbalanced graph
case, where the long-term
    influences of all algorithms converge to $4750=|V|/2$,
    which matches Theorem~\ref{thm:ltdyn}.
{\color{black} Fig.~\ref{fig:eps4} and Fig.~\ref{fig:eps5} show that
SVIM-L algorithm performs the best, and it generates $5.6\%-72\%$
long-term influence increases after the sixth step over other
heuristics in the weakly connected signed digraph and the
disconnected signed digraph.} {\color{black}Fig.~\ref{fig:eps6}
shows that in a more general signed digraph, which consists of a
weakly connected signed component and a balanced ergodic component,
SVIM-L algorithm outperforms all other heuristics with  up to $17\%$
more long term influence, which occurs for $t\geq 4$}. In general,
we see that for weakly connected and disconnected digraphs,
    SVIM-L has larger winning margins over all other heuristics than the case
    of balanced ergodic digraphs (Fig.~\ref{fig:eps4}--\ref{fig:eps6}
    vs. Fig.\ref{fig:eps1}).
We attribute this to our accurate computation of influence
contribution
    in the more involved weakly connected and disconnected digraph
    cases. Moreover, in all cases, the dynamics converge very fast, i.e., in only a few steps,
    which indicates that the convergence time of voter model on these random graphs are very small.
\vspace*{-0.2cm}
\begin{table*}[!htb]
\centering \caption{Statistics of Epinions and Slashdot
datasets}\label{tab::t1} {\small
\begin{tabular}{|c c c||c c c|}
\hline Statistics & Epinions & Slashdot & Statistics & Epinions & Slashdot\\
\hline \hline \# of nodes & $131580$ & $77350$ & \# of nodes in largest SCC & $41441$ & $26996$ \\
 \# of edges & $840799$ & $516575$ & \# of edges in largest SCC & $693507$ & $337351$ \\
 \# of positive edges & $717129$ & $396378$ & \# of positive edges in largest SCC & $614314$ & $259891$\\
 \# of negative edges & $123670$ & $120197$ & \# of negative edges in largest SCC & $79193$ & $77460$ \\
  & & & \# of strongly connected components & $88361$ & $49209$\\
\hline
\end{tabular}}\vspace*{-0.4cm}
\end{table*}

\subsubsection{Real datasets}


{\color{black} We conduct extensive simulations using real datasets,
such as Epinions and Slashdot datasets, to validate our theoretical
results and evaluate the performance of our SVIM algorithm.}

\noindent\textbf{Epinions Dataset.} Epinions.com~\cite{EP} is a
consumer review online social site, where users can write reviews to
various items and vote for or against other
    users.
The signed digraph is formed with positive or negative directed edge
$(u,v)$
    meaning that $u$ trusts or distrusts $v$.
The statistics are shown in Table~\ref{tab::t1}.
We compare our short-term SVIM-S algorithm with {\color{black}four}
heuristics, i.e., $d^{+}+d^{-}$, $d^{+}$, $d^{+}-d^{-}$
{\color{black}and random seed selection}, on the entire Epinions
digraph as well as the largest strongly connected component (SCC).


{\color{black}Our tests are conducted on both Epinions dataset and
its largest strongly connected component (SCC), where the largest
SCC is ergodic and strictly unbalanced. We first look at the
comparison of instant influence maximization (\emph{at} step $t$)
among various seed selection schemes.}
Fig.~\ref{fig:epsEPSin1entire}-\ref{fig:epsEPSin1500} shows the
expected maximum instant influence at each step by different
methods. Note that since the initial seeds selected by SVIM-S
algorithm hinge on $t$, the values on the curve of our selection
scheme are associated with different optimal initial seed sets. On
the other hand, the seed selections of other heuristics are
independent to $t$, thus the corresponding curves represent the same
initial seed sets. {\color{black}We choose the budget as $500$ and
$6000$ in our evaluations, i.e., selecting at maximum $500$ or
$6000$ initial white seeds.} {\color{black}From
Fig.~\ref{fig:epsEPSin1entire}-\ref{fig:epsEPSin1500},
SVIM-S algorithm consistently performs better, and in some cases,
e.g., Fig.~\ref{fig:epsEPSin1}, it generates $16\%-145\%$ more
influence than other heuristics at step $1$.}

Next we compare the seed selection schemes for maximizing the
average influence \emph{within} the first $t$ steps.
Fig.~\ref{fig:epsEPSin3entire}-\ref{fig:epsEPSin3500} show the
expected maximum average influence within the first $t$ steps by
different methods. Again, the values on the curve of SVIM-S
algorithm are
associated with different initial seed sets. 
{\color{black}Fig.~\ref{fig:epsEPSin3entire}-\ref{fig:epsEPSin3500}
show that with different budgets, i.e., $500$ and $6000$ seeds,
SVIM-S algorithm performs better than all other heuristics, where in
Fig.~\ref{fig:epsEPSin3} a maximum of $64\%$ more influence is
achieved at $t=8$.}
Moreover, in all these figures, we observe that our seed selection
scheme results in the highest long-term influence over other
heuristics.

{\color{black}Moreover, from
Fig.~\ref{fig:epsEPSin1entire}-\ref{fig:epsEPSin3500}, we observe
that as $t$ increases, the influences (i.e., the expected number of
white nodes), for SVIM-S and all heuristics except for random seed
selection schedule, increase for small $t$'s, and then decrease and
converge to the stationary state. In contrast, from
Fig.~\ref{fig:eps1}-\ref{fig:eps6}, the influence increases
monotonically with $t$. This happens because Epinions dataset (as
well as many real network datasets) has large portion (around
$80\%$) of nodes in the non-sink components, where to maximize the
long-term influence, only nodes in sink components should be
selected, which governs the long-term influence dynamics of the
whole graph,
namely, sink nodes have higher long-term influence contributions.
However, for short-term influence maximization, nodes with higher
chances to influence more nodes in a few steps generally have large
number of incoming links, which 
are able to influence a large number of nodes in either sink or
non-sink components in a short period of time. Hence, in signed
digraphs with large non-sink component, given a sufficiently large
budget, the short-term influence can definitely outnumber the
long-term influence. Our evaluations confirm this explanation.} 
{\color{black}This interesting observation also leads to a problem
that given a budget $k$, how to find the optimal time step $t$ that
generates the largest influence among all possible $t$'s. We leaves
this problem as our future work.}

\noindent\textbf{Slashdot Dataset.} Slashdot.org~\cite{Sd} provides
a discussion forum on various technology-related topics, where
members can submit their stories, and comment on other members'
stories. Its Slashdot Zoo feature allows members to tag each other
as friends or foes, which in turn forms a signed online social
network. The network was collected on $6$-th November
2008~\cite{leskovec2010signed} and the statistics are shown in
Table~\ref{tab::t1}.



We evaluate instant influence and average influence of our SVIM-S
algorithm on the entire slashdot dataset and its largest strongly
connected component, respectively. Our results for $k=6000$ are
presented in
Fig.~\ref{fig:epsSdSin1entire}-Fig.~\ref{fig:epsSdSin3},
which show that our SVIM-S algorithm performs the best among all
methods tested, especially in the early steps. When changing the
budget $k$, similar results were obtained, where we omitted them
here for brevity.

\begin{figure*}[!htb]
    \centering
    \begin{minipage}[t]{6.7cm}
    \begin{center}
    \includegraphics[width=2.25in]{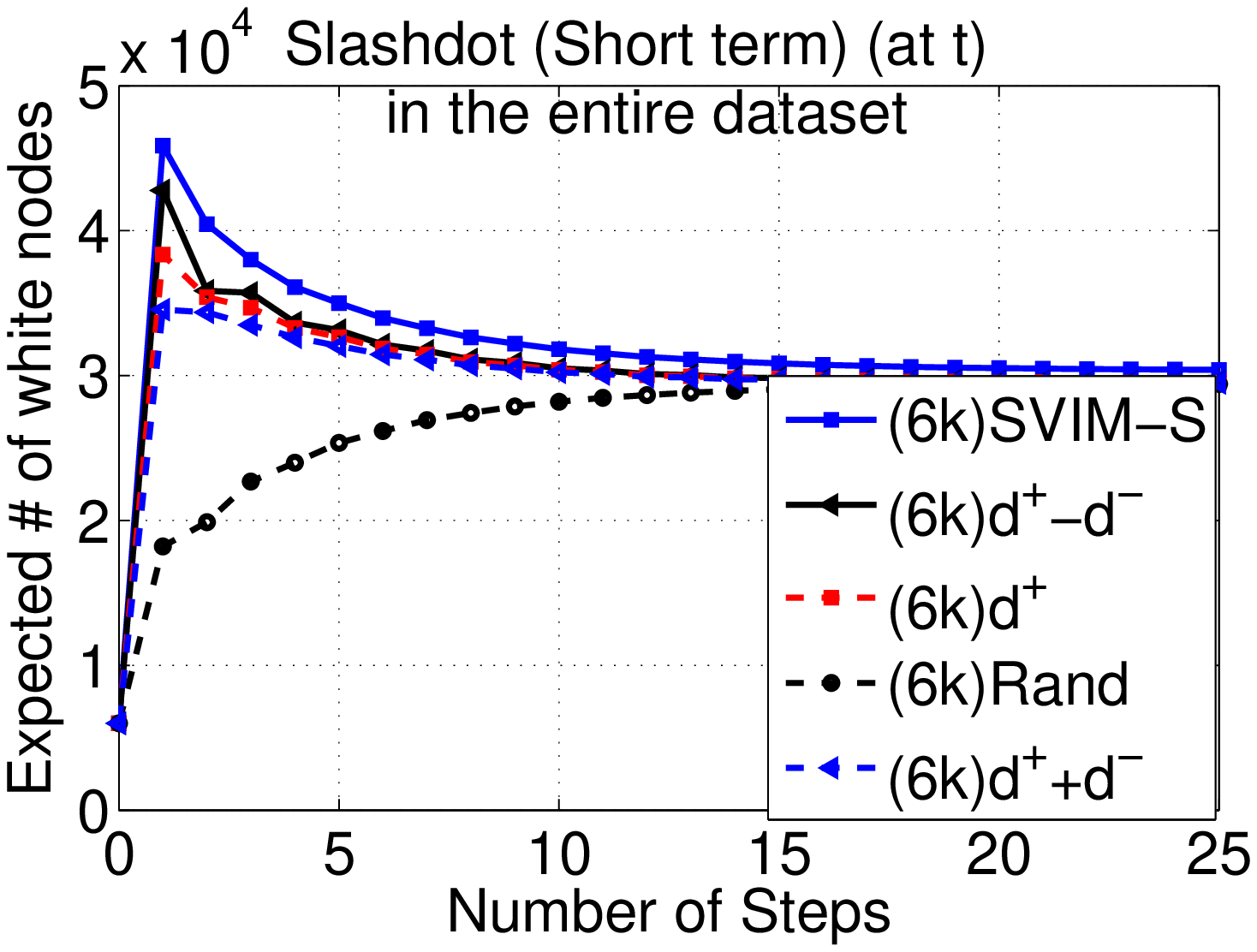}
    \vspace*{-0.3cm}
    \caption{\color{black}Instant influence in Slashdot data with $k=6$k}\label{fig:epsSdSin1entire}
    \end{center}
    \end{minipage}
    \centering
    \begin{minipage}[t]{6.7cm}
    \begin{center}
    \includegraphics[width=2.25in]{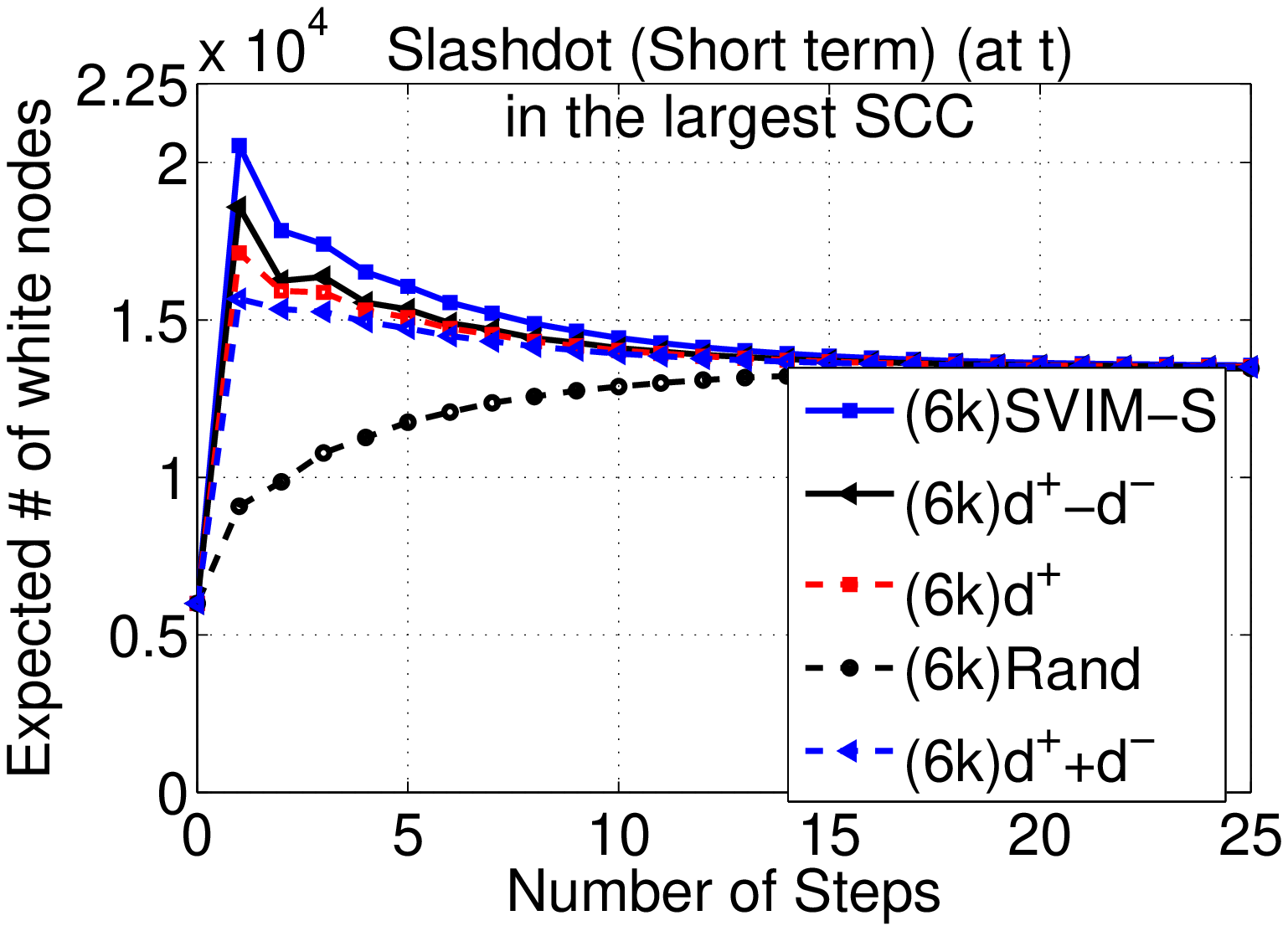}
    \vspace*{-0.3cm}
    \caption{\color{black}Instant influence in Slashdot SCC with $k=6$k}\label{fig:epsSdSin1}
    \end{center}
    \end{minipage}\\
    \centering
    \begin{minipage}[t]{6.7cm}
    \begin{center}
    \includegraphics[width=2.25in]{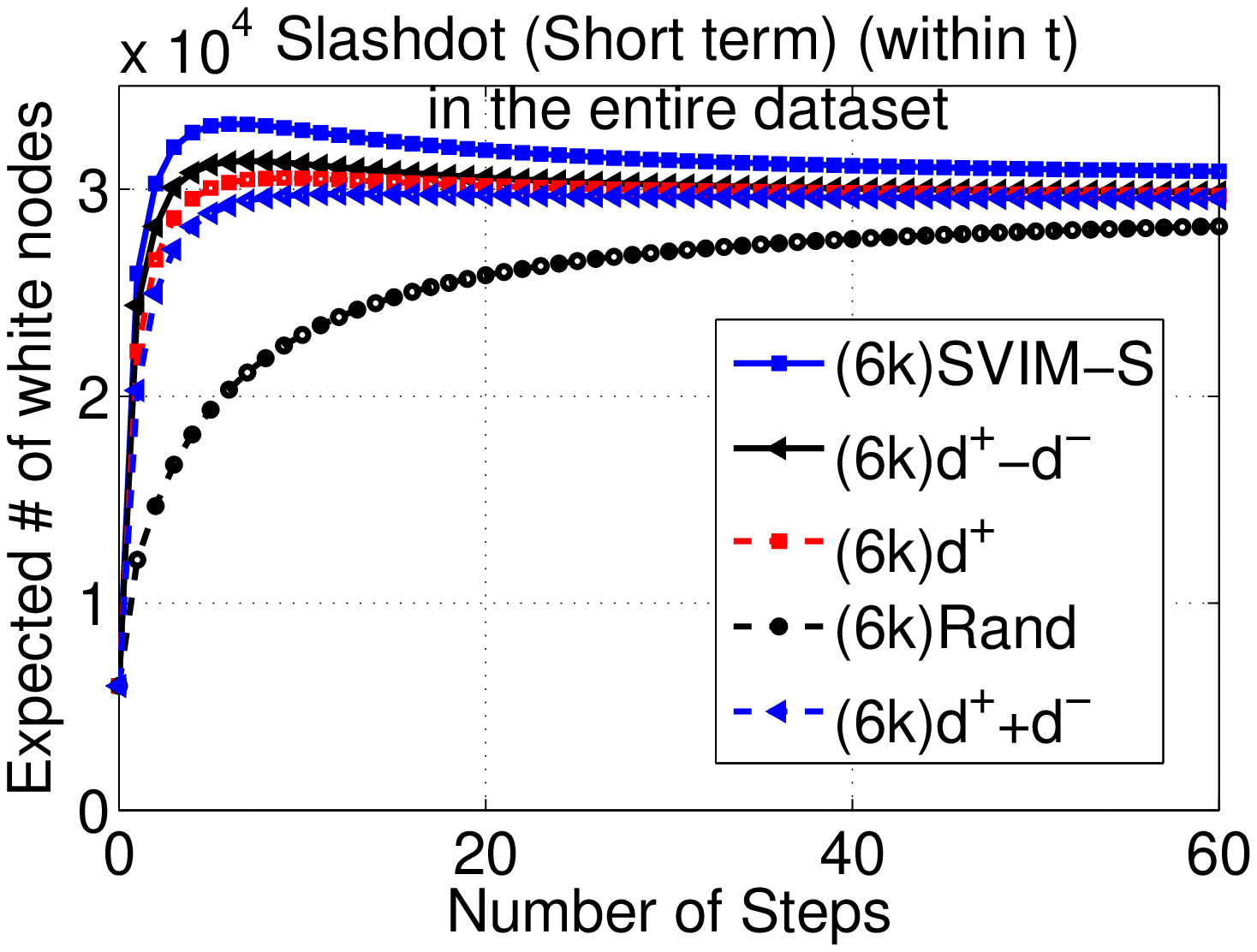}
    \vspace*{-0.3cm}
    \caption{\color{black}Average influence in Slashdot data with $k=6$k}\label{fig:epsSdSin3entire}
    \end{center}
    \end{minipage}
    \centering
    \begin{minipage}[t]{6.7cm}
    \begin{center}
    \includegraphics[width=2.25in]{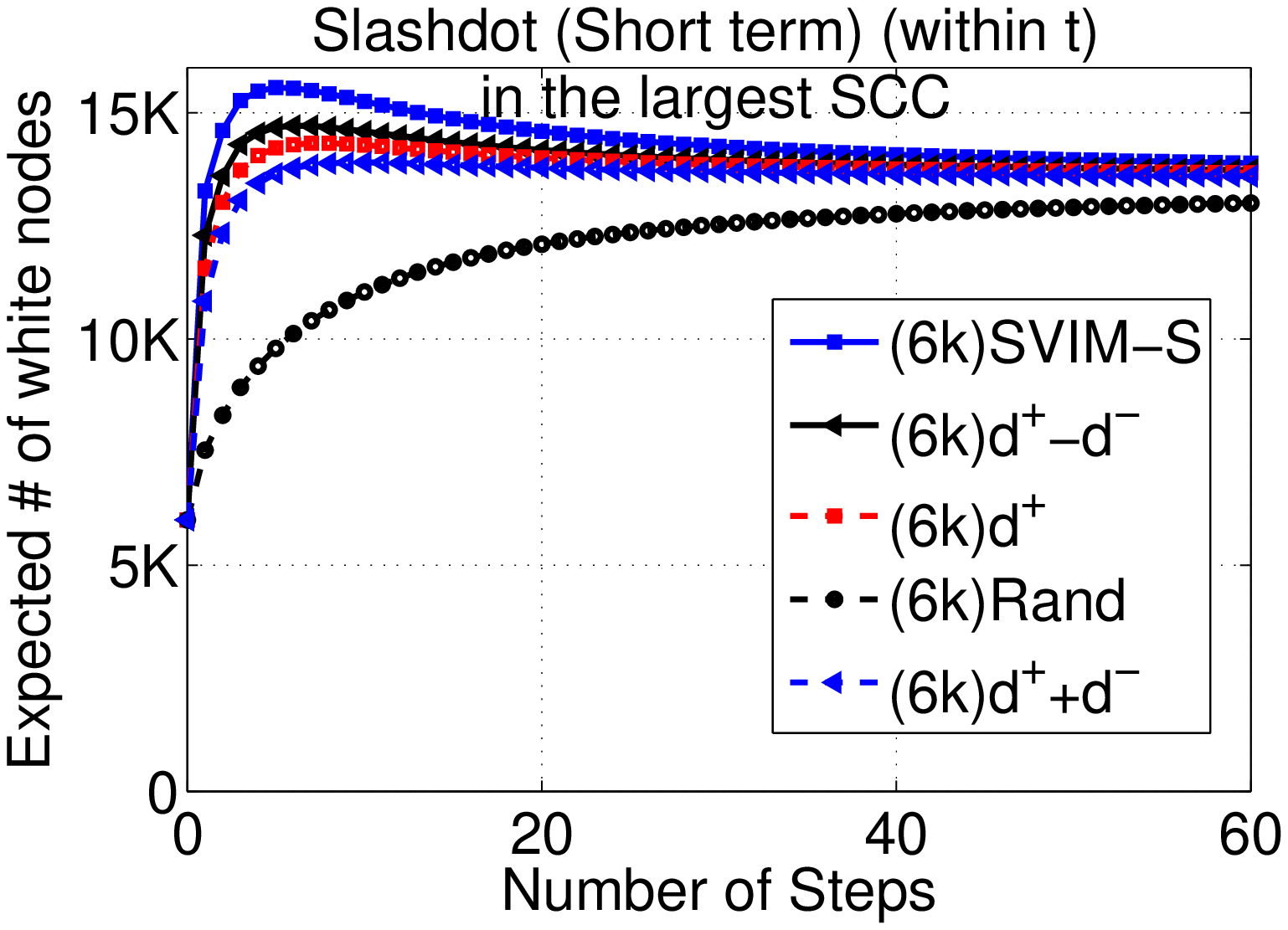}
    \vspace*{-0.3cm}
    \caption{\color{black}Average influence in Slashdot SCC with $k=6$k}\label{fig:epsSdSin3}
    \end{center}
    \end{minipage}
    \centering
  \vspace*{-0.4cm}
\end{figure*}

Moreover, the convergence times for both real-world datasets are
    fast, in a few tens of steps, indicating good connectivity and fast
    mixing property of real-world networks. In summary, our evaluation results on both synthetic and
    real-world networks validate our theoretical results and demonstrate
    that our SVIM algorithms for both short term and long term are
    indeed the best, and often have significant winning margins.

\vspace{-0.1cm}
\subsection{The impacts of signed information}
\vspace{-0.1cm}
\begin{figure*}[!htb]
    \centering
    \begin{minipage}[t]{6.7cm}
    \begin{center}
    \includegraphics[width=2.25in]{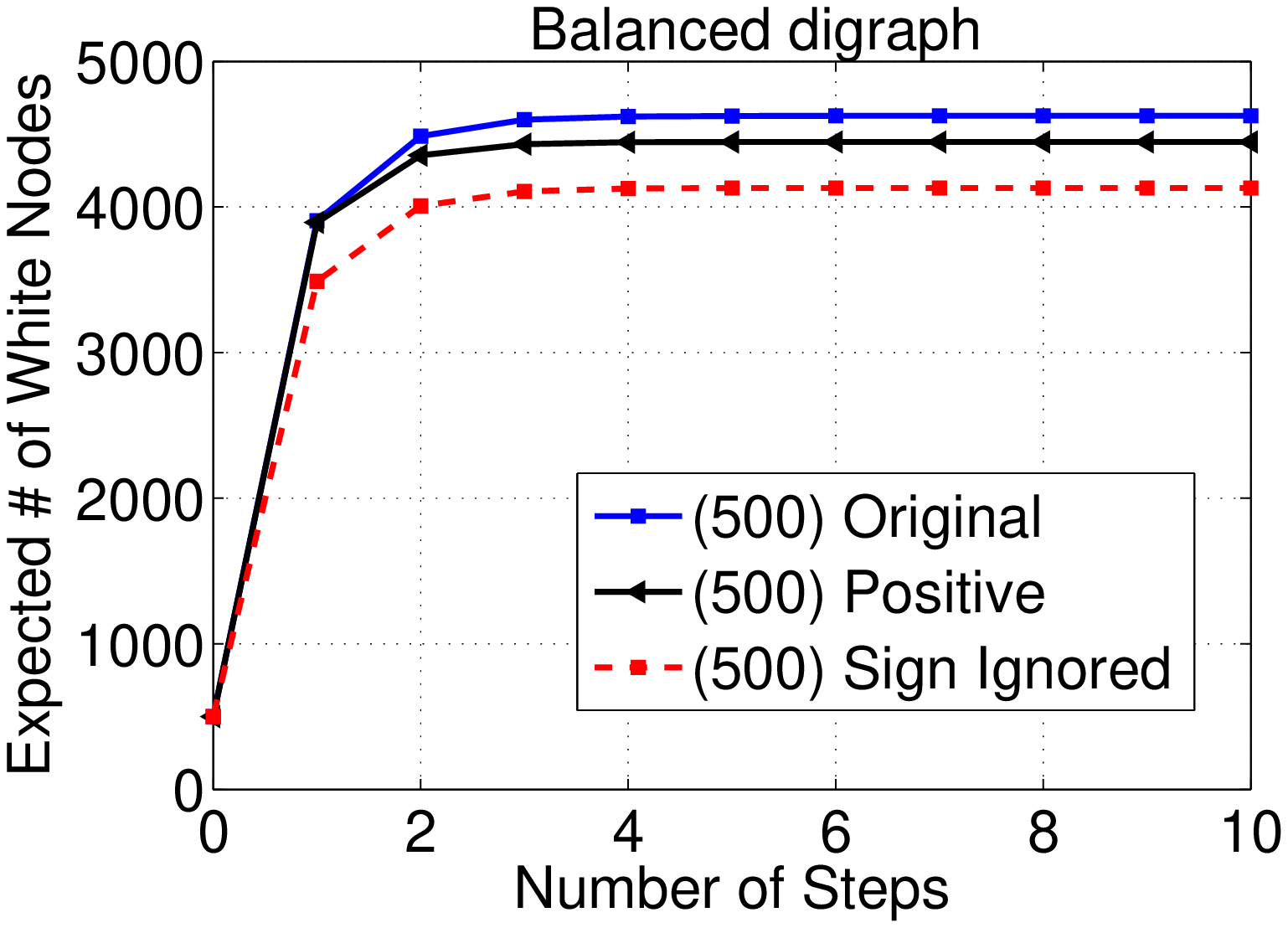}
             \vspace*{-0.4cm}
    \caption{Synthetic balanced digraph}\label{fig:eps11}
    \end{center}
    \end{minipage}
   \centering
    \begin{minipage}[t]{6.7cm}
    \begin{center}
    \includegraphics[width=2.25in]{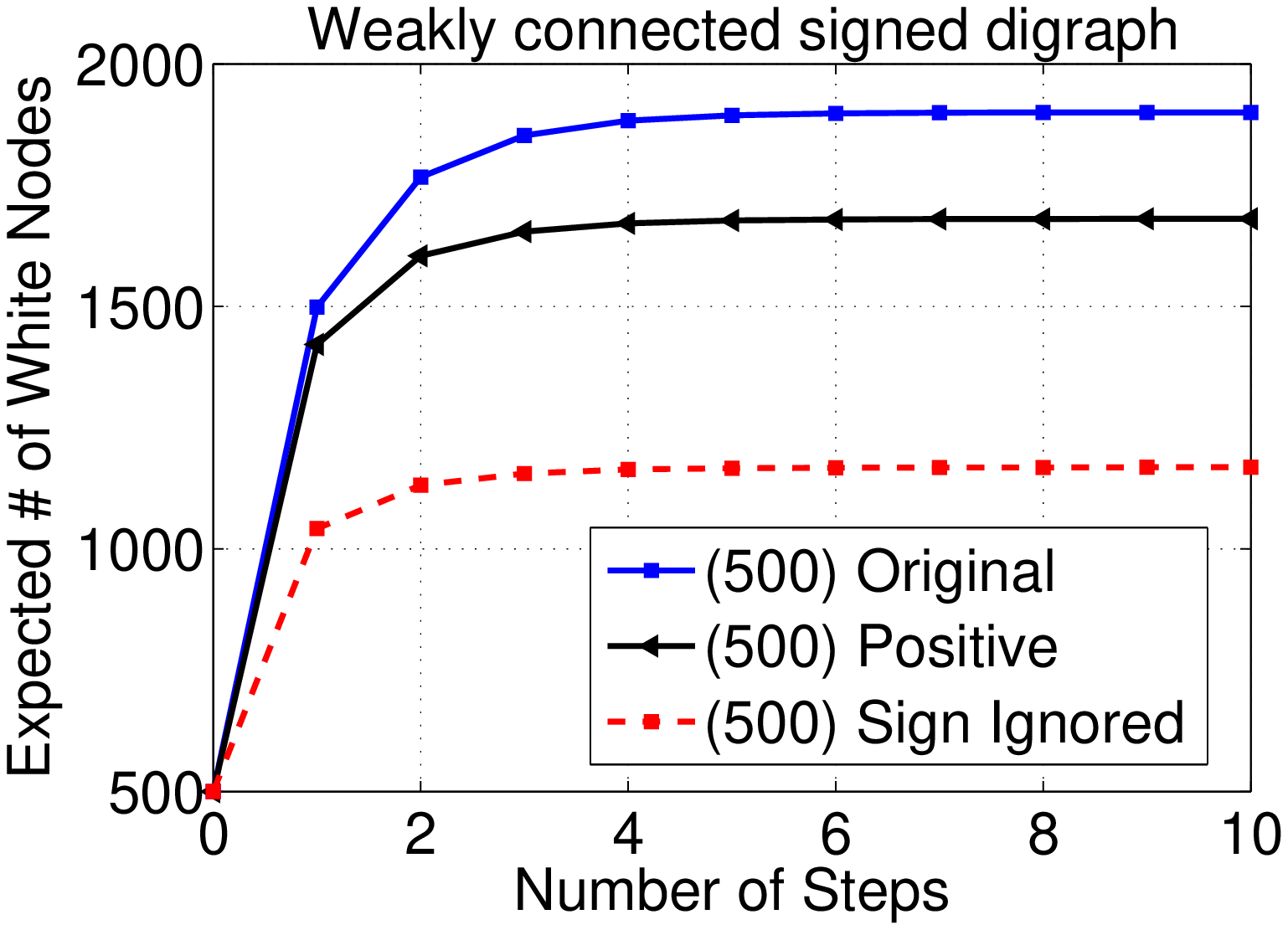}
             \vspace*{-0.4cm}
    \caption{Synthetic weakly connected digraph}\label{fig:eps12}
    \end{center}
    \end{minipage}\\
    \centering
    \begin{minipage}[t]{6.7cm}
    \begin{center}
    \includegraphics[width=2.25in]{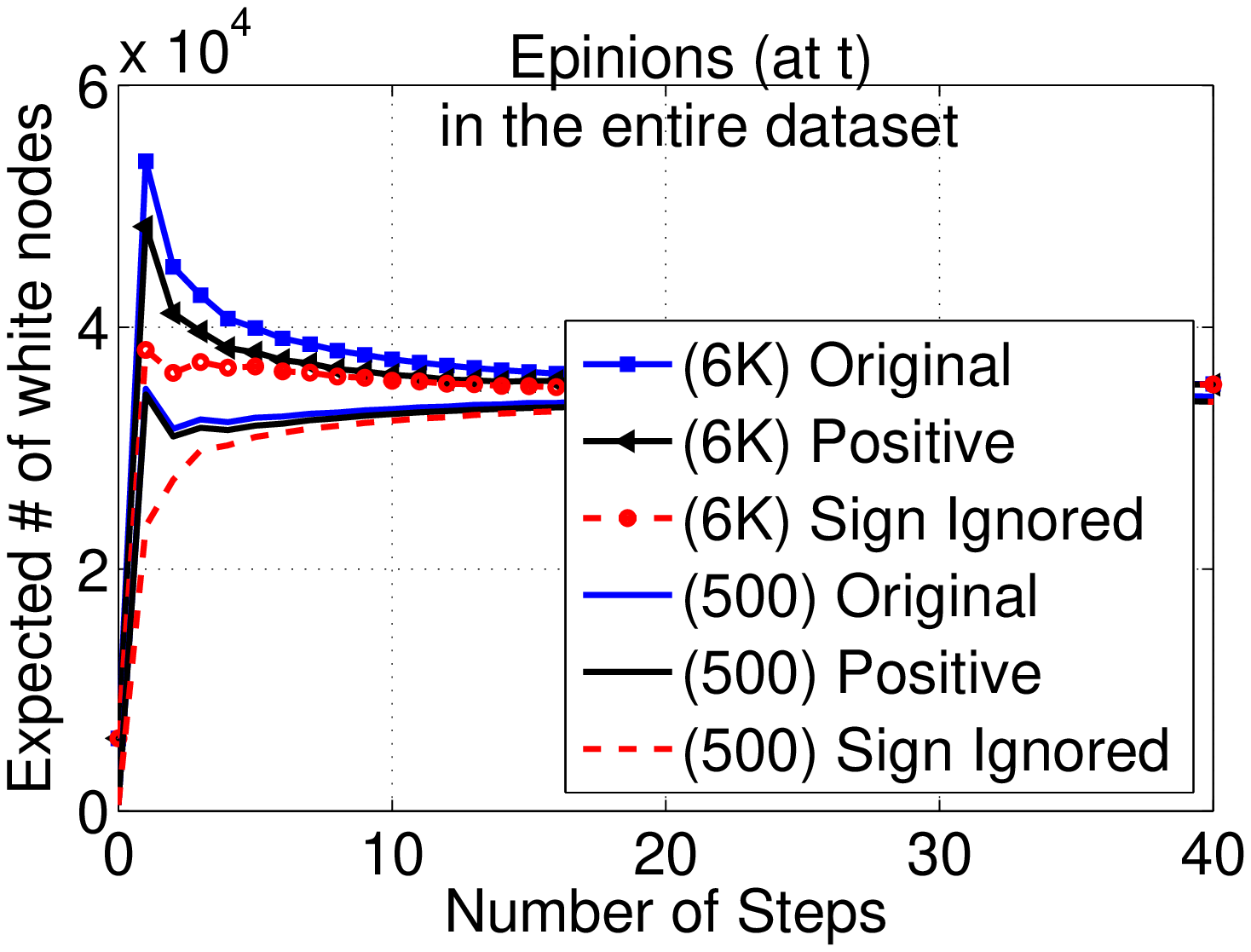}
             \vspace*{-0.4cm}
    \caption{Epinions (the entire dataset)}\label{fig:eps13}
    \end{center}
    \end{minipage}
    \centering
    \begin{minipage}[t]{6.7cm}
    \begin{center}
    \includegraphics[width=2.25in]{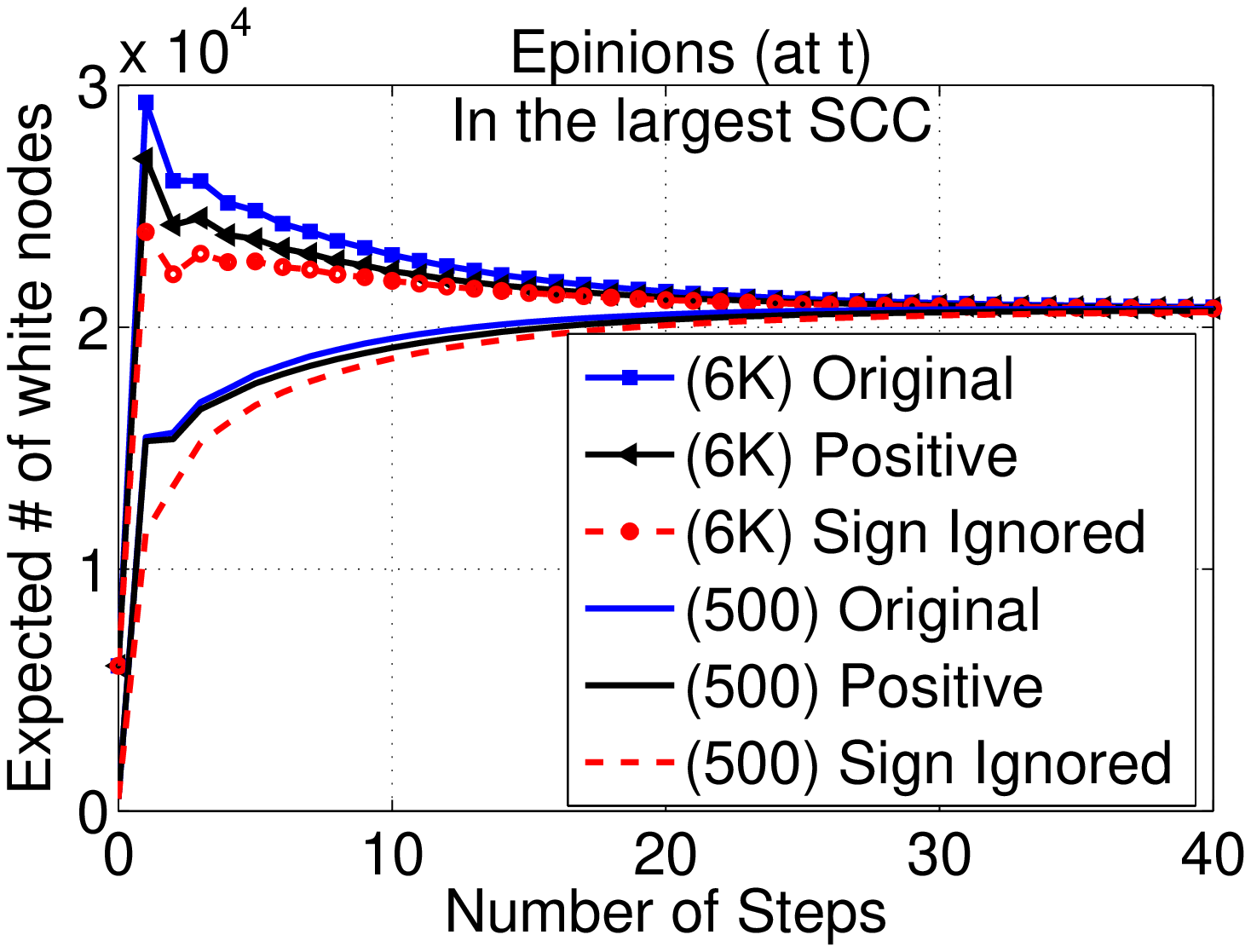}
             \vspace*{-0.4cm}
    \caption{Epinions (the largest SCC)}\label{fig:eps14}
    \end{center}
    \end{minipage}
  \vspace*{-0.55cm}
\end{figure*}

{\color{black}Unlike Epinions and Slashdot, many online social
networks such as Twitter are simply represented by unsigned directed
graphs, where friends and foe relationships are not explicitly
    represented on edges.
Without edge signs, two types of information may be
    mis-represented or under-represented:
    (1) one may follow his foes for tracking purpose, but this link may be
    mis-interpreted as friend or trust relationship; and
    (2) one may not follow his foes publicly to avoid  being noticed, but
    his foes may still generate negative influence to him.}
{\color{black}In this section, we investigate how much influence
gain
can be obtained by taking the edge signs into consideration, 
thus illustrate the significance of utilizing both friend and foe
relationships in influence maximization.}

{\color{black}Taking the synthetic networks and Epinions dataset
(used in Sec~\ref{sec:baseline}) as examples, we apply our SVIM
algorithm to compute the optimal initial seed sets in the original
signed digraphs, and two types of ``sign-missing'' scenarios,
{\color{black}i.e.,} the unsigned digraphs with only original
positive edges (denoted by ``Positive'' graphs) and with all edges
labeled by the same signs (denoted by ``Sign ignored'' graphs).
Then, we examine the performances of those three initial seed sets
in original signed digraphs.}

{\color{black}Fig.~\ref{fig:eps11}-\ref{fig:eps14} show the
evaluation results, where the seed sets obtained by considering edge
signs perform consistently better than those using unsigned graphs.
In synthetic networks,} {\color{black}we observed $5\%-16\%$ more
influence in balanced digraph for $t\geq6$ (See
Fig.~\ref{fig:eps11}), and $11.7\%-58\%$ more influence in weakly
connected digraph for $t\geq6$ (See Fig.~\ref{fig:eps12})}.
Moreover, in Epinions dataset from
Fig.~\ref{fig:eps13}-\ref{fig:eps14}, there is no impact on the
long-term influence, since the underlying graphs are strictly
unbalanced. However, {\color{black}in short term, the results
demonstrate that taking edge signs into consideration always
performs better, which generates at maximum of $38\%$ and $21\%$
more influence for the entire dataset (See Fig.~\ref{fig:eps13}) and
the largest SCC (See Fig.~\ref{fig:eps14}), respectively. Both
maximums occur at step $1$.{}
These results clearly demonstrate the necessity of utilizing sign
information in influence maximization.}


\section{Conclusion}\label{sec:con}

In this paper, we propose and study voter model dynamics on signed
digraphs, and apply it to solve the influence maximization problem.
We provide rigorous mathematical analysis to completely characterize
    the short-term and long-term dynamics, and provide
    efficient algorithms to solve both short-term and long-term
    influence maximization problems. Extensive simulation results on both synthetic and real-world graphs demonstrate
    the efficacy of our signed voter model influence maximization (SVIM) algorithms.
    We also identify a class of anti-balanced digraphs, which is not
covered in the social balance theory before, and exhibits
oscillating steady state behavior.

There exist several open problems and future directions.
One open problem is the convergence time of voter model dynamics on signed
    digraphs.
For balanced and anti-balanced ergodic digraphs, our results show that
    their convergence times are the same as the corresponding unsigned
    digraphs. 
For strictly unbalanced
    ergodic digraphs and more general weakly connected signed digraphs,
    the problem is quite open.
A future direction is to study influence diffusion in signed networks
    under other models, such as the voter model with a background color,
    the independent cascade model, and the linear threshold model.

\section{Acknowledgement} We would like to thank
Christian Borgs and Jennifer T. Chayes for pointing out the
relations between the signed digraph voter model and concepts in
physics, such as Ising model and Gauge transformations.
We also thank Zhenming Liu for many useful discussions on this work.
This work was mostly done while the first author was working as
full-time
intern at Microsoft Research Asia.
\bibliographystyle{abbrv}
\bibliography{Yanhua-long-bak}
\appendix
\color{black}
{\color{black}
\section{Properties of ergodic digraphs}\label{sec:exponentialofbarP}

\begin{proposition}[]\label{pro:evenoddpath}
Let $G=(V,E,A)$ be an ergodic digraph. For any nodes $i,j\in V$,
there exist two paths from $i$ to $j$ with even and odd length,
respectively.
\end{proposition}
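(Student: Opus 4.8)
The plan is to derive the two parities by combining the two ingredients in the definition of ergodicity: strong connectivity, which guarantees at least one walk between any ordered pair of vertices, and aperiodicity, which forces the existence of a cycle of \emph{odd} length. Once an odd cycle is available, splicing it into an arbitrary walk flips the walk's parity, producing walks of both parities between the given endpoints. (Throughout, as elsewhere in the paper — e.g.\ in the proof of Proposition~\ref{pro:UnbalancedDig} — a ``path'' is a walk that may repeat vertices and edges, so it suffices to exhibit walks of each parity.)

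First I would argue that $G$ contains a cycle of odd length. Since $G$ is ergodic it is strongly connected and every node has an outgoing edge, so $G$ contains at least one cycle and the greatest common divisor of all cycle lengths is defined and equals $1$. If every cycle of $G$ had even length, then this gcd would be even, hence at least $2$, contradicting aperiodicity. So fix a cycle $C$ of odd length $c$ passing through some vertex $v\in V$. Next, for the given $i,j\in V$, strong connectivity supplies a walk $P_1$ from $i$ to $v$ and a walk $P_2$ from $v$ to $j$ (possibly empty if $i=v$ or $v=j$). Form the two walks from $i$ to $j$: $R := P_1 P_2$, of length $|P_1|+|P_2|$, and $R' := P_1\, C\, P_2$, which runs from $i$ to $v$, once around $C$, then from $v$ to $j$, of length $|P_1|+c+|P_2|$. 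Because $c$ is odd, $|R|$ and $|R'|$ have opposite parity, so one of $R,R'$ has even length and the other odd length, as required.

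The only genuinely substantive step is the extraction of an odd cycle from aperiodicity; the rest is routine concatenation, valid even in the boundary cases $i=j$, $i=v$, or $v=j$ (where some of the concatenated pieces are the trivial walk of length $0$). I do not anticipate any further obstacle, and in particular the argument does not require the stronger ``primitivity'' fact that walks of \emph{every} sufficiently large length exist between every pair — having just the two parities is all the proposition claims.
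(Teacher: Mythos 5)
Your proof is correct and rests on the same two ingredients as the paper's own argument --- aperiodicity forces the existence of an odd cycle, and splicing a cycle into a walk shifts its length by the cycle's length --- the only difference being that you argue directly (exhibit the odd cycle and build both walks) whereas the paper argues by contradiction (if all $i\to j$ paths had one parity, every cycle would be even, contradicting aperiodicity). Your treatment of the degenerate concatenations is fine; if one insists on a non-trivial even walk when $i=j=v$, traversing $C$ twice supplies one.
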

\begin{proof}
Suppose, for a contradiction, that all paths from $i$ to $j$ have even
    lengths.
This implies that all cycles passing through $i$ must be even length, since
    otherwise we could follow node $i$'s odd-length cycle followed
    by the even length path from $i$ to $j$, making the entire path
    from $i$ to $j$ odd.
Now we can consider any cycle $C_r$ in $G$, not necessarily passing
$i$. We claim that $C_r$ must have even length. In fact, we can pick
any node $u$ on $C_r$, and construct a path from $i$ to $j$ with the
following segments: $R_1$ from $i$ to $u$,
    $C_r$, $R_2$ from $u$ back to $i$, and
    $R_3$ from $i$ to $j$.
Since we know that $R_1+R_2$ has even length and $R_3$ has even length,
    it must be the case that $C_r$ has even length by our assumption.
However, this means that all cycles in $C$ has even lengths, contradicting
    to the aperiodicity of $G$.

The case of odd length paths can be proved in the same way.
\end{proof}

\begin{proposition}[]\label{pro:exponentialofbarP}
Let $\bar{G}=(V,E,\bar{A})$ be an ergodic unsigned digraph, with
transition probability matrix $\bar{P}$ and stationary distribution
vector $\pi$.
$\bar{P}^t-\textbf{1}\pi^T=(\bar{P}-\textbf{1}\pi^T)^t$
holds for any integer $t>0$.
\end{proposition}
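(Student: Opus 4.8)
The plan is to reduce everything to formal matrix algebra using only three defining identities of an ergodic chain: $\bar{P}\textbf{1}=\textbf{1}$ (row‑stochasticity), $\pi^T\bar{P}=\pi^T$ (stationarity), and $\pi^T\textbf{1}=1$ (normalization). Write $Q=\textbf{1}\pi^T$ for brevity. From these identities one reads off at once that
\begin{align*}
Q^2=\textbf{1}(\pi^T\textbf{1})\pi^T=Q,\qquad \bar{P}Q=(\bar{P}\textbf{1})\pi^T=Q,\qquad Q\bar{P}=\textbf{1}(\pi^T\bar{P})=Q,
\end{align*}
so in particular $\bar{P}$ and $Q$ commute, and by iterating $\bar{P}Q=Q$ we get $\bar{P}^{j}Q=Q$ for every $j\ge 0$. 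These three relations are the entire engine of the proof.

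I would then argue by induction on $t$. The base case $t=1$ is the tautology $\bar{P}-Q=\bar{P}-Q$. For the inductive step, assume $\bar{P}^t-Q=(\bar{P}-Q)^t$; then
\begin{align*}
(\bar{P}-Q)^{t+1}=(\bar{P}^t-Q)(\bar{P}-Q)=\bar{P}^{t+1}-\bar{P}^{t}Q-Q\bar{P}+Q^2=\bar{P}^{t+1}-Q-Q+Q=\bar{P}^{t+1}-Q,
\end{align*}
which closes the induction. Alternatively — and this even avoids the induction — since $\bar{P}$ and $Q$ commute the binomial theorem applies, $(\bar{P}-Q)^t=\sum_{k=0}^{t}\binom{t}{k}\bar{P}^{t-k}(-Q)^k$; every $k\ge 1$ term equals $\binom{t}{k}(-1)^k Q$ because $Q^k=Q$ and $\bar{P}^{t-k}Q=Q$, so the sum telescopes to $\bar{P}^t+\bigl((1-1)^t-1\bigr)Q=\bar{P}^t-Q$.

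There is essentially no real obstacle here: the statement is a purely algebraic identity once $Q$ is in hand. The only place where the ergodicity hypothesis is actually used is in guaranteeing that a (unique) stationary distribution $\pi$ exists so that $Q=\textbf{1}\pi^T$ is well defined and satisfies $\pi^T\bar{P}=\pi^T$; after that, verifying the three commutation relations is the one step deserving a moment's care, and everything else is routine bookkeeping. (This identity is exactly what is invoked elsewhere, e.g. in the balanced‑graph cases of Lemma~\ref{thm:convergWeakly2} and Theorem~\ref{thm:ltdyn}, to rewrite $P_Z^{t}-\textbf{1}_{Z,S_Z}\pi^T_{Z,S_Z}$ as a $t$‑th power that tends to $\textbf{0}$.)
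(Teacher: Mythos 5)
Your proposal is correct and follows essentially the same route as the paper, which simply notes that the identity follows by induction from $\bar{P}\textbf{1}=\textbf{1}$ and $\pi^T\bar{P}=\pi^T$; you have just written out the induction (and an equivalent binomial-theorem variant) in full detail.
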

\begin{proof}

%

Using the facts that $\bar{P}\textbf{1}=\textbf{1}$ and
$\pi^T\bar{P}=\pi^T$, it is easy to prove by induction that for any
integer $t>0$
$\bar{P}^t-\textbf{1}\pi^T=(\bar{P}-\textbf{1}\pi^T)^t$ holds.
\end{proof}
}

{\color{black}
\section{Special matrix power series}\label{sec:Matrixform} 
%

\begin{proposition}
\label{thm:matrixlimit} Let $X \in \mathbb{R}^{m\times m}$,
$Y\in\mathbb{R}^{m\times n}$ and $Z \in \mathbb{R}^{n\times n}$. If
$\limt X^t = \limt Z^t = \textbf{0}$, the following equalities hold:
\begin{align}
\mbox{(i) }&\limt \sum_{i=0}^{t-1} X^i = (I-X)^{-1},\label{eq:MF1}\\
\mbox{(ii) }&\limt \sum_{i=0}^{t-1}X^iYZ^{t-1-i} = 0,\label{eq:MF2} 
\end{align}
\end{proposition}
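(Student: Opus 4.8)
The plan is to reduce everything to elementary estimates with a submultiplicative matrix norm. To handle the rectangular factor $Y$ cleanly, I would fix at the outset a consistent family of operator norms (those induced by a fixed vector norm), so that $\|MN\|\le\|M\|\,\|N\|$ whenever the product is defined; since all norms on a finite-dimensional space are equivalent, the hypotheses $\lim_{t\to\infty}X^t=\mathbf 0$ and $\lim_{t\to\infty}Z^t=\mathbf 0$ are equivalent to $\|X^t\|\to0$ and $\|Z^t\|\to0$.

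For part (i), I would use the telescoping identity $(I-X)\sum_{i=0}^{t-1}X^i=I-X^t$. Invertibility of $I-X$ follows because $(I-X)v=0$ forces $Xv=v$, hence $X^tv=v$ for all $t$, and letting $t\to\infty$ with $X^t\to\mathbf 0$ gives $v=0$; in finite dimension injectivity yields invertibility. Multiplying the telescoping identity by $(I-X)^{-1}$ and letting $t\to\infty$ then gives $\sum_{i=0}^{t-1}X^i\to(I-X)^{-1}$. This is precisely the ``basic fact'' already recorded in Section~\ref{sec:motivation}, so part (i) can alternatively just cite it.

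The key technical ingredient for part (ii) is a uniform geometric decay bound: if $\lim_{t\to\infty}X^t=\mathbf 0$, then there exist $C\ge 1$ and $\alpha\in(0,1)$ with $\|X^t\|\le C\alpha^t$ for all $t\ge0$. I would prove this by picking $N$ with $\|X^N\|\le\tfrac12$ (possible since $\|X^t\|\to0$), writing $t=qN+r$ with $0\le r<N$, and bounding $\|X^t\|\le\|X^N\|^q\,\|X^r\|\le 2^{-q}M$ where $M=\max_{0\le r<N}\|X^r\|$; since $2^{-q}\le 2\,(2^{-1/N})^t$ this gives the claim with $\alpha=2^{-1/N}$ and $C=2M$. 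Applying the same argument to $Z$ yields $\|Z^t\|\le C'\beta^t$ with $\beta\in(0,1)$. Then for part (ii) itself I would estimate, using the triangle inequality and submultiplicativity,
$$\Bigl\|\sum_{i=0}^{t-1}X^iYZ^{t-1-i}\Bigr\|\le\|Y\|\sum_{i=0}^{t-1}\|X^i\|\,\|Z^{t-1-i}\|\le CC'\|Y\|\sum_{i=0}^{t-1}\alpha^i\beta^{t-1-i}\le CC'\|Y\|\,t\,\gamma^{t-1},$$
where $\gamma=\max(\alpha,\beta)<1$ and we used $\alpha^i\beta^{t-1-i}\le\gamma^{t-1}$ for every $i$. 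Since $t\gamma^{t-1}\to0$ as $t\to\infty$, the left side tends to $\mathbf 0$, proving (ii).

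The main obstacle is really just packaging the geometric decay bound cleanly; once $\|X^t\|\le C\alpha^t$ and $\|Z^t\|\le C'\beta^t$ are in hand, part (ii) is a one-line geometric-series estimate. The only subtlety worth care is the rectangular shape of $Y$, which is why I would commit to a consistent (operator) norm family rather than an ad hoc matrix norm.
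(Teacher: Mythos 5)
Your proof is correct. Part (i) matches the paper's argument essentially verbatim (telescoping identity plus invertibility of $I-X$ via the eigenvalue-$1$ contradiction). For part (ii), however, you take a genuinely different and more elementary route. The paper passes to the Jordan canonical form $X=Q_X\mathcal{J}Q_X^{-1}$, bounds the entries of $\mathcal{J}^i$ by $i^m\rho(X)^{i-m}$, and controls the sum with a polynomial-times-geometric estimate $t^{m+n+1}\rho^{t-1-m-n}\to 0$ in the max-norm (which forces it to track dimension factors $m^2$, $n^2$ arising from non-submultiplicativity of that norm). You instead extract a clean uniform geometric bound $\|X^t\|\le C\alpha^t$ directly from $\|X^t\|\to 0$ via submultiplicativity of an operator norm (choose $N$ with $\|X^N\|\le\tfrac12$ and split $t=qN+r$), and then the sum is bounded by $CC'\|Y\|\,t\,\gamma^{t-1}\to 0$. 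Your version avoids the Jordan decomposition and the informal max-norm bookkeeping entirely, at the cost of the (standard) observation that norm equivalence lets you replace entrywise convergence by operator-norm convergence; the paper's version makes the spectral radius explicit, which is mildly more informative but not needed for the conclusion. Both proofs are sound.
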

\begin{proof}
\noindent(i) Let $\rho(X)$ be the spectral radius of matrix $X$,
i.e., the largest absolute value of the eigenvalues of $X$. Notice
that $\limt X^t = \textbf{0}$ if and only if $\rho(X) <1$.

We first claim that, $I-X$ and $I-Z$ are invertible. Suppose $I-X$
is not invertible, there is a non-zero vector $p$ such that $(I-X)p
= \textbf{0}$. Therefore, $p$ is the eigenvector of $X$ with
eigenvalue $1$, which contradicts $\limt X^t =\textbf{0}$. Same
argument can be applied to $I-Z$. Hence, the left hand side of
Eq.(\ref{eq:MF1}) equals to
\begin{align}
&\limt \sum_{i=0}^tX^i = \limt (I-X)^{-1}(I-X^{t+1}) =
(I-X)^{-1}.\nonumber
\end{align}

\noindent(ii) The max-norm of $X$ is given by
$\|X\|_{max}=\max_{i,j\leq m}\{X_{ij}\}$. Let $X=Q_X{\cal
J}Q_X^{-1}$ be the standard Jordan form of $X$, where $Q_X$ is an
invertible matrix. Denote $J=\textbf{1}\textbf{1}^T$ as the all-one
matrix. Hence, we have
\begin{align}
\|X^i\|_{max} &= \|Q_X{\cal J}^iQ_X^{-1}\|_{max} \leq
\|Q_X\|_{max}\|Q_X^{-1}\|_{max}\|J{\cal J}^iJ\|_{max} \nonumber\\
&\leq \|Q_X\|_{max}\|Q_X^{-1}\|_{max} m^2\|{\cal
J}^i\|_{max}\nonumber
\end{align}

${\cal J}^i$ is in form as
\begin{align}
\small {\cal J}^i&=\left[
            \begin{array}{c c c c c}
            \lambda_1^i & C_i^1\lambda_1^{i-1} & C_i^2\lambda_1^{i-2} & {0} & {0} \\ 
            {0} & \lambda_1^i & C_i^1\lambda_1^{i-1} & {0} & {0} \\ 
            {0} & {0} & \lambda_1^i & {0} & {0} \\ 
            {0} & {0} & {0} & \lambda_{m_0}^i & C_i^1\lambda_{m_0}^{i-1}  \\ 
            {0} & {0} & {0} & {0}  & \lambda_{m_0}^i \\ 
            \end{array}\right]\label{eq:jordan},
\end{align}
where $C_i^\ell=\frac{i!\ell!}{(i-\ell)!}\leq i^m$ and each non-zero
entry in ${\cal J}^i$ can be expressed as
$C^i_\ell\lambda_k^{i-\ell}$, $1\leq k\leq m_0$, $1\leq \ell\leq
\ell_0(k)$, with $m_0$ as the number of different eigenvalues of $X$
and $\ell_0(k)$ as the multiplicity of the $k$-th eigenvalue of $X$.
Hence, the absolute value of each non-zero entry in ${\cal J}^i$ is
upper bounded as $|C_i^\ell\lambda_k^{i-\ell}|\leq i^m
\rho(X)^{i-m}$, which implies that
\begin{align}
&\|X^i\|_{max} \leq \|Q_X\|_{max}\|Q_X^{-1}\|_{max} m^2 i^m
\rho(X)^{i-m} \nonumber
\end{align}

Let $\rho = \max(\rho(X), \rho(Z))$, we have
\begin{align}
&\limt \|\sum_{i=0}^{t-1}X^iYZ^{t-1-i}\|_{max}\leq \limt t m n \|X^i\|_{max}\|Y\|_{max}\|Z^{t-1-i}\|_{max}\nonumber \\
\leq& \limt t m n T_{max} ( m^2 t^m \rho^{i-m}) (n^2 t^n
\rho^{t-i-1-n}) \leq \limt m^3 n^3 T_{max} t^{m+n+1}
\rho^{t-1-n-m}=0\nonumber
\end{align}
where $T_{max}=\|Y\|_{max}\|Q_X\|_{max}\|Q_X^{-1}\|_{max}
\|Q_Z\|_{max}\|Q_Z^{-1}\|_{max}$.


\end{proof}
}
%
%
%
%



\section{Illustration of exponential convergence time of $P^t$ on
ergodic digraph.}\label{sec:illustration}

\begin{figure}[htb]
\centering
\includegraphics[width=0.5 \textwidth]{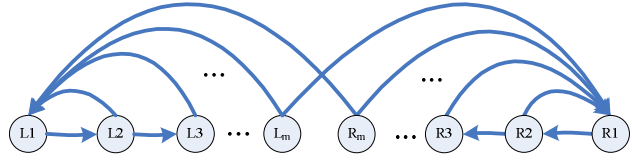} 
\caption{An example digraph with exponential convergence time. All
edges are with unit weights.}\label{Plot1} \vspace*{-0.3cm}
\end{figure}

{\color{black}{Given an unsigned ergodic digraph
$\bar{G}=(V,E,\bar{A})$, with transition probability matrix
$\bar{P}$, it has fixed stationary distribution $\pi$, i.e.,
$\pi^T=\pi^T\bar{P}$.

The convergence time (or mixing time) of a random walk Markov chain
on $G$ is the time until the Markov chain is ``close'' to its
stationary distribution $\pi$. To be precise, for an initial
distribution $x_0$, let $x_t^T=x_0^T\bar{P}^t$ be the distribution
at step $t$. The variation distance mixing time is defined as the
smallest $t$  such that for any subset $W\subseteq V$,
$$|(x_t^T-\pi^T)e_W|\leq \frac{1}{4},$$
where $e_W$ is the vector such that $e_W(i)=1$ if $i\in W$, and
$e_W(i)=0$ if $i\in V\setminus W$.

The convergence time is said to be exponentially large if there
exists $x_0$ such that the convergence time of the random walk
starting from $x_0$ is $2^{\Omega(n)}$, where $n=|V|$.
Lemma~\ref{lem:ExpConv} below illustrates that the convergence time
of random walk on ergodic digraphs could be exponentially large.}}

\begin{lemmaa}\label{lem:ExpConv}
There exist ergodic digraphs, such that the convergence time of the
random walks on these digraphs are exponentially large.
\end{lemmaa}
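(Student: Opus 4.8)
The plan is to exhibit an explicit family of ergodic digraphs $\{G_n\}$, one on $n$ vertices for each (odd) $n$, whose random walks need $2^{\Omega(n)}$ steps to reach variation distance $1/4$ of stationarity; a representative member is the ``directed double well'' of Figure~\ref{Plot1}. Writing $n=2m+1$, take vertex set $\{0,1,\dots,2m\}$ with unit-weight edges (using parallel unit-weight edges where needed, i.e.\ allowing small integer weights) so that the induced walk has a constant drift \emph{towards} $0$ on $\{1,\dots,m-1\}$ and \emph{towards} $2m$ on $\{m+1,\dots,2m-1\}$: concretely, from $i\in\{1,\dots,m-1\}$ go to $i-1$ w.p.\ $2/3$ and to $i+1$ w.p.\ $1/3$; symmetrically on the right half; from $m$ go each way w.p.\ $1/2$; and give each of $0$ and $2m$ a self-loop together with a single edge inward. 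This $G_n$ is strongly connected (from any vertex one can always descend to $0$ and then ascend to $2m$) and aperiodic (the self-loops are $1$-cycles), hence ergodic, and $n=|V|$.

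First I would compute the stationary distribution $\pi$. The chain is a birth--death chain, hence reversible, so detailed balance gives $\pi(i+1)=\tfrac12\pi(i)$ for $0\le i<m$ and $\pi(i+1)=2\pi(i)$ for $m\le i<2m$; thus $\pi$ is sharply bimodal, with $\pi(\{m\})=2^{-\Omega(n)}$, and by the reflection symmetry $i\mapsto 2m-i$ one gets $\pi(R)=\tfrac12+2^{-\Omega(n)}>\tfrac12$, where $L=\{0,\dots,m\}$ and $R=\{m,\dots,2m\}$.

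Next I would start the walk at $x_0=e_0$ and control $x_t(R)=\Pr_0[X_t\in R]$. The only way into $R$ from $L\setminus\{m\}$ is through vertex $m$, so $x_t(R)\le \Pr_0[\tau_m\le t]$ with $\tau_m$ the hitting time of $m$. The key estimate is a gambler's-ruin bound: on a single excursion of the walk away from $0$ (between consecutive visits to $0$), the probability of ever reaching $m$ is at most $2^{-(m-1)}$, because on $\{1,\dots,m-1\}$ the walk is biased towards $0$ with ratio $q/p=2$. By the strong Markov property the successive excursions are i.i.d., and at most $t$ of them begin within the first $t$ steps, so a union bound gives $\Pr_0[\tau_m\le t]\le t\,2^{-(m-1)}$. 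Hence $x_t(R)\le 1/4$ for every $t\le 2^{m-3}$, whereas $\pi(R)>1/2$; taking the subset $W=R$ in the definition of the variation-distance mixing time then gives $|(x_t^T-\pi^T)e_R|=\pi(R)-x_t(R)>1/4$, so the convergence time from $x_0=e_0$ exceeds $2^{m-3}=2^{\Omega(n)}$, which is the assertion.

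The step I expect to be the main obstacle is making the excursion/union-bound argument fully rigorous: one must carefully (i) formalize ``at most $t$ excursions start within $t$ steps'' (each excursion contains the step $0\to1$, and the self-loop transitions at $0$ are handled separately), and (ii) pass from the single-excursion gambler's-ruin identity $\Pr_1[\tau_m<\tau_0^{+}]=1/(2^{m}-1)$ on $\{0,\dots,m\}$ to the global bound via independence of excursions. If one prefers to avoid the renewal language, the same conclusion follows from an (equally routine) optional-stopping argument with the Lyapunov potential $\phi(i)=2^i$, which is a martingale on the biased half. The remaining items --- verifying ergodicity, computing $\pi$, and checking the constants --- are routine.
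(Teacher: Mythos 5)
Your proof is correct, and it follows the same high-level strategy as the paper's --- build a two-lobed graph joined by a bottleneck, show that the probability of having crossed the bottleneck by step $t$ is at most $t\cdot 2^{-\Omega(n)}$, and use one lobe as the distinguishing set $W$ --- but the gadget and the key estimate are genuinely different. The paper's graph (Fig.~\ref{Plot1}) uses a ``restart'' structure: every node on the left chain has an edge back to $L_1$, so reaching the bridge node $L_m$ requires $m-2$ consecutive right-moves and hence $x_t(L_m)\le 2^{-(m-2)}$ deterministically at every step; the per-step inflow into $W$ is then bounded by a one-line path-counting argument, and the stationary distribution of $W$ is exactly $1/2$ by symmetry, with no probabilistic machinery needed. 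Your construction is a biased birth--death chain with drift away from the middle, which forces you to invoke reversibility/detailed balance to get $\pi(R)>1/2$ and a gambler's-ruin plus excursion-decomposition (or optional-stopping) argument to bound $\Pr_0[\tau_m\le t]\le t\,2^{-(m-1)}$ --- heavier tools, though all standard, and you correctly identify the excursion bookkeeping as the step needing care. Two small points: (i) your transition probabilities $2/3$ vs.\ $1/3$ require edge weights $2$ and $1$ (or parallel edges), which is permitted by the paper's weighted-digraph model but is slightly less clean than the paper's unit-weight example; (ii) your citation of Figure~\ref{Plot1} is misleading, since your double-well chain is not that graph --- the paper's figure has no self-loops and no $2{:}1$ biased steps. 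Neither point affects correctness.
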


%
%
%


\begin{proof}

We prove this by construction. Fig.~\ref{Plot1} shows an example
digraph $G$, with $|V|=2m$ nodes. On the left hand side, there are
$m\geq 3$ nodes $L_1, L_2, \cdots, L_m$ connected by $m-1$ directed
edges from $L_1$ to $L_m$, and every node $L_i$ with $i>1$ has a
directed connection to the leftmost node $L_1$. The right hand side
nodes have symmetric connections as the left hand side. Moreover,
node $L_m$ and $R_m$ also have one more connection to $R_1$ and
$L_1$, respectively, which connect two components together. It is
clear that the graph is strongly connected and aperiodic
    (there exist cycles of length $2$ and $3$), and thus
    ergodic.

Let $x_t(L_i)$ denote the probability that the random walk is at
node $L_i$ at step $t$, and $x(L_i)$ be its stationary distribution.
Similarly define $x_t(R_i)$ and $x(R_i)$ for node $R_i$. The graph
is symmetric, thus we have $x(L_i)=x(R_i)$ for $1\leq i\leq m$. Let
$x(L_1)=x(R_1)=\rho/4$, we have $x(L_i)=x(R_i)=\rho/2^i$ for
$i=2,3,\ldots, m$. Then, by solving
$\sum_{i=1}^{m}(x(L_i)+x(R_i))=1$, we obtain
$\rho=\frac{2^{m-1}}{3\cdot 2^{m-2}-1}$. It is easy to verify that
indeed the obtained $x$ is the stationary
    distribution of the random walks on the digraph.

Then, we consider the initial distribution as $x_0=[1,0,0,\ldots,
0]$, and the subset $W=\{R_1, \cdots, R_m\}$ including all $m$ nodes
on the right-hand side.
Let $x_t(W)=x_t^T \cdot e_W$ denote the total probability that the random walk is
     in some node in $W$ at step $t$.
The only edge from the left half to the right half is the edge from
    $L_m$ to $R_1$.
Thus all additions to $x_{t+1}(W)$ from $x_t(W)$ comes from this edge, namely
    $x_{t+1}(W) - x_{t}(W) \le x_t(L_m)/2$.
We now bound $x_t(L_m)$.
For $t\le m-1$, we know that $x_t(L_m) = 0$.
For $t\ge m$, we have
$$x_t(L_m)=x_{t-1}(L_{m-1})/2 = x_{t-2}(L_{m-2})/2^2 =
    \cdots = x_{t-m+2}(L_{2})/2^{m-2} \le 1/2^{m-2}.$$
Hence, we have
$$x_t(W) = \sum_{i=1}^{t}(x_i(W)-x_{i-1}(W))
    \le t \cdot x_t(L_m)/2 \le t/2^{m-1}.$$
Therefore, the smallest $t$ that satisfies
    $|(x_t^T-\pi^T)e_W| = |x_t(W) - 1/2| \leq 1/4$ is such that
    $x_t(W) \ge 1/4$, which implies that
    $t/2^{m-1}\ge 1/4$ and $t\ge 2^{m-3}$.
This completes the proof.
\end{proof}

\end{document}